\documentclass[a4paper,11pt]{article}
\usepackage{natbib}

\usepackage{fullpage}
\usepackage{times}

\usepackage{url}
\usepackage{iftex}\ifPDFTeX\usepackage[utf8]{inputenc}\fi
\usepackage{color}
\usepackage{amsmath,amssymb,amsthm}
\usepackage{algorithm,empheq}
\usepackage{algpseudocode}

\usepackage{booktabs,array,tabularx,longtable,placeins}
\usepackage{microtype}
\usepackage{enumitem}

\usepackage{tikz}

\usetikzlibrary{shapes,positioning,fit,backgrounds,calc}

\newtheorem{theorem}{Theorem}
\newtheorem{definition}{Definition}

\newtheorem{lemma}{Lemma}
\newtheorem{example}{Example}

\newtheorem{claim}{Claim}

\pdfoutput=1

\usepackage{libertine}

\usepackage[a4paper, total={170mm,257mm}, left=20mm, top=20mm]{geometry}

\newlength\mylen

\def\cA{\mathcal A}

\def\cI{\mathcal I}

\def\cJ{\mathcal J}

\def\cS{\mathcal S}

\newcommand{\p}{\ensuremath{\mathrm{P}}}
\newcommand{\np}{\ensuremath{\mathrm{NP}}}

\newcommand{\MEW}{\ensuremath{\mathsf{MEW}}}
\newcommand{\OPT}{\ensuremath{\mathsf{OPT}}}

\newcommand{\ps}{\ensuremath{\mathsf{ps}}}

\newcommand{\EW}{\mathrm{EW}}

\newcommand{\1}{\mathbf{1}}

\title{Settling the Computational Complexity of Max-Min Allocation with Ternary Valuations}
\author{
Thi Ngoc Anh Vu\thanks{Faculty of Computer Science, Phenikaa University, Hanoi 12116, Vietnam}
\and 
Trung Thanh Nguyen\thanks{DataOptLab, National Economics University, Hanoi 11616, Vietnam} 
\and 
Khaled Elbassioni\thanks{Khalifa University of Science and Technology, Abu Dhabi 127788, UAE} 
\and 
J\"{o}rg Rothe\thanks{Institut f\"{u}r Informatik, MNF, Heinrich-Heine-Universit\"{a}t D\"{u}sseldorf, 40225, D\"{u}sseldorf, Germany} 
}
\date{\today}

\begin{document}
\maketitle

\begin{abstract}
We study the problem of computing an allocation of indivisible items that maximizes egalitarian welfare, i.e., the utility of the worst-off agent, when agents' item values or marginal values belong to a small set. For additive valuations with values in $\{p,q\}$, where $q>p>0$ and $\gcd(p,q)=1$, we give a polynomial-time algorithm when $p=2$ and prove constant-gap hardness when $p\geq3$, already with exactly three high-valued goods per agent. We also give an $\sqrt{3/2}$-approximation for common positive bi-valued additive valuations. For mixed additive valuations in $\{-p,0,c\}$, where $p\in\{1,2\}$ and $c$ is a positive integer, a reduction to maximum-weight perfect matching resolves the conjectured tractability of $\{-2,0,c\}$-valuations. For submodular valuations with marginals in $\{-2,0,c\}$, where $c$ is odd, we establish an exact unit-gap hardness result and exponential value-query lower bounds, even when all but one agent are additive. Finally, for $\{-1,0,1\}$-submodular valuations,  we prove that no finite multiplicative approximation exists unless $\p=\np$. Together, our results resolve open questions and provide a complete picture of the computational complexity of max-min allocation with ternary valuations.
\end{abstract}

\section{Introduction}\label{sec:intro}\label{sec:introduction}

In the fair division of indivisible items, the goal is to allocate a set of $m$ indivisible resources (also referred to as goods with nonnegative value, or chores with nonpositive value) among $n$ agents with heterogeneous preferences, typically represented by valuation functions~\cite{brandt2016handbook,lan-rot:b-2nd-edition:economics-and-computation-fair-division-of-indivisible-goods}. Dating back to the work of~\cite{Ste48}, fair division has become a central topic at the intersection of economics and computer science~\citep{BCE+16,lan-rot:b-2nd-edition:economics-and-computation-fair-division-of-indivisible-goods}, motivated by a wide range of applications. One of the fundamental welfare objectives studied in fair division is \emph{egalitarian welfare}~\citep{Sen2017}, which evaluates an allocation by the utility of its worst-off agent. Thus, maximizing egalitarian welfare ($\MEW$) seeks an allocation that maximizes the minimum utility across all agents and can be viewed as the natural computational counterpart of the Rawlsian maximin principle. This objective arises naturally in applications where protecting the least well-off participant is a primary concern, such as the allocation of public resources, computing resources, scheduling, and humanitarian aid.

Egalitarian welfare has several features that make it particularly important for our purposes. First, it directly provides a guarantee to the worst-off agent, making it a natural objective in settings involving the allocation of essential resources or burdens~\citep{Moulin2004}. Second, unlike Nash welfare, which is based on the product of individual utilities~\citep{Nash50}, egalitarian welfare remains meaningful for mixed manna and chores, where utilities may be negative and the product of utilities generally lacks a meaningful welfare interpretation. Third, egalitarian welfare is closely connected to other fundamental fairness notions. In particular, the maximin share of an agent is precisely the optimal egalitarian welfare obtained when partitioning the items among identical copies of that agent~\citep{Bud11}. 
Unfortunately, $\MEW$ is computationally hard in general, even for additive valuations over goods~\cite{BD05}. This motivates the search for structured valuation classes under which the problem can be solved exactly or approximated efficiently. One fruitful approach to identifying the sources of computational hardness in fair division is to restrict the \emph{set of values} that an agent may assign to individual items. %Informally, one considers with $A$-valuations with $A$ being the set containing all good values and $|A|$ is small. 
In \emph{binary} instances, each item has value $0$ or $1$ to each agent. Under this restriction $\MEW$ becomes polynomial-time solvable~\citep{Golovin2005MaxminFA}. %and Nash-welfare maximization~\citep{BKV18,HPPS20}; 
Related tractability results extend even to submodular valuations induced by matroid-rank functions~\citep{BEF21,VZ23}. \emph{Bi-valued} instances, in which item values belong to $\{a,b\}$, and \emph{ternary} (or \emph{tri-valued}) instances, in which they belong to $\{a,b,c\}$, constitute natural extensions of the binary setting.

On the practical side, such restricted value domains are appealing for several reasons. First, they capture coarse preference elicitation, in which agents report only a small number of preference levels—for example, whether they like, are indifferent to, or dislike an item, as naturally represented by $\{-1,0,1\}$-valuations and their weighted variants. Second, despite their seemingly simple structure, small value sets can already capture much of the computational difficulty of unrestricted instances, while particular choices of values may still admit efficient exact algorithms. They therefore provide a useful framework for isolating the arithmetic and combinatorial features responsible for computational hardness. %Third, restricted value domains have proved fruitful in uncovering structural properties of fair allocations. For example, optimal Nash-welfare allocations satisfy EFX for bi-valued goods~\citep{ABF+21}; EFX and Pareto-optimal allocations can be computed for bi-valued goods~\citep{GM23}; and EF1 and Pareto-optimal allocations exist and can be computed for bi-valued chores~\citep{EPS22,GMQ22}.

A systematic classification for small value sets can pinpoint the boundary between polynomial-time solvability and computational hardness and thereby provide a finer-grained picture of the computational landscape of fair division. Motivated by this question, in this work we study the complexity of $\MEW$ for bi-valued and ternary instances, with both additive and submodular valuations. 

%These results suggest a natural complexity question: \emph{how far does the tractability of egalitarian-welfare maximization extend beyond binary valuations?} A systematic classification for small value sets can pinpoint the boundary between polynomial-time solvability and computational hardness and thereby provide a finer-grained picture of the computational landscape of fair division. Motivated by this question, in this work we study the complexity of maximizing egalitarian welfare for bi-valued and ternary instances, with both additive and submodular valuations.

\subsection{Related work}\label{sec:related}

For general additive valuations over goods, $\MEW$ can be approximated within a factor of $O(m^{\varepsilon})$~\citep{CCK09} for every fixed $\varepsilon>0$, but is $\np$-hard to approximate within any factor smaller than $2$, even when item values in $\{0,1,2\}$~\citep{BD05,Golovin2005MaxminFA}. \cite{CCK09} gave an $O(m^{\varepsilon})$-approximation in polynomial time, where $m$ is the number of items. This gap has remained open for decades. For restricted instances (a.k.a Santa Claus problem), where the value of any item $j$ is either $0$ or $p_j$, constant-factor polynomial-time algorithms are known~\citep{AKS17,ChengMao2018,DaviesRothvossZhang2020}. %These monotone settings differ from mixed manna, where negative marginal values are allowed.
%A number of works have restricted attention to settings where agents use a small number of values to represent their preferences. 
For more restricted case with binary and, more generally, matroid-rank valuations, $\MEW$ can be solved in polynomial time~\citep{Golovin2005MaxminFA,HPPS20,BEF21,VZ23}. For bi-valued goods with values $\{1,c\}$, where $c$ is a positive integer, this remains true for submodular valuations~\citep{CVZ23a}. In tri-valued setting, \cite{Golovin2005MaxminFA}  studied the special case called ``Big Goods/Small Goods'', where the small goods have value $0$ or $1$ to an agent and the big goods have value $0$ or $x>1$, and gave a $O(\sqrt{n})$-approximation. \cite{chan.2016} considered a different two-value restriction: each item has a common heavy or light value, but an agent may value it at zero. They presented a quasi-polynomial-time $(3+4\varepsilon)$-approximation when the two nonzero item values are $1$ and $\varepsilon$. \cite{KhotPonnuswami2007} generalized this setting, in which good values are in $\{0,1,c\}$ for any $c>1$, and obtained an $\frac{n}{\alpha}$-approximation algorithm with $m^{O(1)}n^{O(\alpha)}$ running time, for any $\alpha \leq \frac{n}{2}$. \cite{FVZ25} established constant-gap hardness for every ternary set of distinct nonnegative values. For additive chores, $\MEW$ corresponds to makespan minimization on unrelated machines; the bi-valued case is polynomial when one cost is twice the other, with hardness outside this ratio~\citep{LST90,CKL15}.

For mixtures of goods and chores, \cite{CVZ23b} gave a polynomial-time algorithm for computing \emph{leximin} allocations, and hence for solving $\MEW$, under $\{-1,0,c\}$-valuations with a common positive integer $c$. Their algorithm further extends to \emph{order-neutral} submodular valuations. They also established hardness for additive $\{-p,c\}$-valuations when $p\geq3$ and $c$ are coprime. Subsequently, \cite{FVZ25} investigated the remaining ternary cases, showing that every $\{-1,0,c\}$-submodular valuation with $c\geq2$ is order-neutral and establishing hardness for $\{-1,0,1\}$-submodular valuations. In particular, they conjectured that $\MEW$ is polynomial-time solvable for additive $\{-2,0,c\}$-valuations and left open the complexity of the problem for submodular $\{-2,0,c\}$-valuations.

For other fair division problems with few utility values, we refer to~\citep{Bliem16,BKV18,Halpern20,ACH+22,CVZ23a,Bismuth24,ACH+25,FVZ25,Jiarong26,GM23,ABF+21,AAB+23,AzizBiroLang2019} and the references therein for the state-of-the-art results. %In particular, the integer and half-integer ratios are tractable for Nash welfare under common bi-valued additive valuations~\citep{ACH+25}. The algorithms in this paper concern egalitarian welfare; the mixed additive algorithm also allows agent-dependent values.

\subsection{Our contributions}\label{sec:contrib}

Our results resolve open questions and provide a complete picture of the computational complexity of max-min allocation with ternary valuations. Table~\ref{tab:complexity} summarizes the resulting compltete picture.

\begin{enumerate}
    \item For additive valuations with values in $\{2,q\}$, with $q\ge 3$, we give a polynomial-time algorithm, complementing the known algorithm for $p=1$. For each fixed $p\geq3$, we prove an explicit constant approximation gap, even when every agent values exactly three goods highly, and present a $\sqrt{3/2}$-approximation algorithm.
    \item For mixed additive valuations in $\{-p,0,c\}$, with $p\in\{1,2\}$ and $c\geq1$, we show a polynomial time  reduction to maximum-weight perfect matching, yielding a polynomial-time algorithm. %In particular, this settles the conjectured tractability of the common domain $\{-2,0,c\}$.
    \item For every fixed odd $c\geq1$, we prove a strongly $\np$-hardness result for $\{-2,0,c\}$-submodular valuations, even when all but one agent are additive. %The reduction has optimum exactly $t$ or $t-1$, for a positive polynomially bounded threshold $t$, excluding an FPTAS and additive error strictly below one. In the value-oracle model, exact optimization requires $2^{\Omega(n)}=2^{\Omega(\sqrt m)}$ queries on the constructed family, also for bounded-error randomized algorithms.
    \item For $\{-1,0,1\}$-submodular valuations, we show the nonexistence of approximation algorithm with any finite approximation factor, unless $\p=\np$.
\end{enumerate}

Among others, our results (2) and (3) answer affirmatively the open question posed in~\cite{CVZ23b}, while result (4) strengthens the $\np$-hardness result of the same paper by establishing a stronger inapproximability bound.

\begin{table}[tb]
\centering
\small
\begin{tabularx}{\columnwidth}{
    @{}
    p{0.3\columnwidth}
    >{\raggedright\arraybackslash}X
    >{\raggedright\arraybackslash}X
    @{}
}
\toprule
Valuations & Additive & Submodular\\
\midrule
$\{0,c\}$, $c\ge 1$
    & P~\citep{Golovin2005MaxminFA}
    & P~\citep{BEF21,VZ23}\\

$\{1,c\}$, $c\ge1$
    & P~\citep{CVZ23a}
    & P~\citep{CVZ23a}\\

$\{2,c\}$, odd $c\ge3$
    & {\bf P (Thm.~\ref{thm:main_q3})}
    & Open\\

$\{p,q\}$, $0<p<q$
    & {\bf $\sqrt{\frac32}$-appr.~(Thm.~\ref{thm:rounding})}
    & {\bf No PTAS (Thm.~\ref{thm:hard})}\\

$\{0,1,c\}$, $c\ge 2$
    & No $(2+\epsilon)$-appr. \cite{chan.2016}
    & No $(2+\epsilon)$-appr. \\

$\{p,q,r\}$, 
    & $\np$-hard \cite{CVZ23b}
    & $\np$-hard \\

$\{-1,0,c\}$, $c\ge2$
    & P~\citep{CVZ23b}
    & P~\citep{CVZ23b,FVZ25}\\

$\{-1,0,1\}$
    & P~\citep{CVZ23b}
    & {\bf No appr. (Thm.~\ref{thm:conflict-hardness})}\\

$\{-2,0,c\}$, even $c\ge4$
    & P \citep{CVZ23b}
    & P ~\citep{CVZ23b,FVZ25}\\

$\{-2,0,2\}$
    & P ~\citep{CVZ23b}
    & {\bf No approx. (Thm.~\ref{thm:conflict-hardness})}\\

$\{-2,0,c\}$, odd $c$
    & {\bf P (Thm.~\ref{thm:mixed-exact})}
    & {\bf No FPTAS (Thm.~\ref{thm:submod-odd})}\\

$\{-p,0,c\}$
    & NP-hard~\citep{CVZ23b}
    & NP-hard~\citep{CVZ23b}\\
\bottomrule
\end{tabularx}
\caption{Complexity of $\MEW$ with ternary valuations. Our new results are highlighted in bold.}
\label{tab:complexity}
\end{table}

\section{Preliminaries}\label{sec:prelim}\label{sec:preliminaries}
Let $N$ be a set of $n$ agents and $M$ a set of $m$ indivisible items. Each agent $i\in N$ is endowed with a valuation function $v_i:2^M\rightarrow \mathbb{Q}$, which assigns a rational value to every subset of items. We assume that $v_i(\varnothing)=0$ for every $i\in N$. An allocation $A=(A_1,\ldots,A_n)$ is a partition of $M$ into $n$ pairwise disjoint bundles, where $A_i$ denotes the bundle allocated to agent $i$. The egalitarian welfare ($\EW$) of an allocation $A$ is defined as 
\[
\EW(A)=\min_{i\in N}v_i(A_i).
\]
We study the problem of finding an allocation that maximizes the egalitarian welfare, which we refer to as the Maximum Egalitarian Welfare ($\MEW$) problem. Denote by $\OPT$ the optimal value of $\MEW$. 
For $t\in\mathbb{Z}_{\ge 0}$, we write $[t]$ for $\{1,\ldots,t\}$. %We introduce  classes of valuation functions that will be studied throughout the paper.

\begin{definition}[$\cA$-submodular]
Given a set of integers $\cA\subseteq\mathbb{Z}$, a valuation function $v$ is
\emph{$\cA$-submodular} if:
\begin{enumerate}
    \item[(a)] for any $o\in M$ and
    $S\subseteq T\subseteq M\setminus\{o\}$, $v(S\cup\{o\})-v(S)
        \geq v(T\cup\{o\})-v(T)$;
    \item[(b)] for any $o\in M$ and $S\subseteq M\setminus\{o\}$, $v(S\cup\{o\})-v(S)\in \cA$
\end{enumerate}
In simple words, the valuation function is submodular, and the marginal
gains are restricted to values in $\cA$. 
\end{definition}

A special case is when $v(S)=\sum_{o\in S} v(o)$ for every $S\subseteq M$, which we call $\cA$-additive valuation. We also consider a class of special $\cA$-submodular valuations, called {\em conflict-pair valuations}, which is formally defined as follows.
\begin{definition}[Conflict-pair valuations]\label{def:conflict-pair}
    A valuation $v_i$ is a \emph{conflict-pair valuation} if $v_i(o) \in \{-1,0,1\}$ for all items $o$ and there is a set $\Pi_i$ of pairwise disjoint pairs $(x,y)$ with $v_i(x)=1$ and $v_i(y)=0$, such that
    \[
    v_i(S) = \sum_{o\in S}v_i(o) \;-\; \bigl|\{(x,y)\in\Pi_i : x,y \in S\}\bigr| ,
    \]
    for every non-empty subset $S\subseteq M$.
\end{definition}
 
In words, $v_i$ is additive except that holding both items of a conflict pair costs one unit: the good $x$ and the item $y$, which is harmless on its own, cannot both be enjoyed. Since the pairs are disjoint, $v_i$ is a sum of functions on disjoint sets of items, each of which is either a single item or the gadget on one pair; hence $v_i$ is submodular and all its marginal values lie in $\{-1,0,1\}$. A formal proof and an example of the valuation can be found in Appendix.

\begin{lemma}\label{lem:conflict-pair-submod}
    Every conflict-pair valuation $v_i$ is $\{-1,0,1\}$-submodular.
\end{lemma}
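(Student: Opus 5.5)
The plan is to decompose $v_i$ as a sum of functions on disjoint ground sets and verify both conditions of the $\cA$-submodular definition piece by piece. Let $P$ be the set of items appearing in some pair of $\Pi_i$. For each pair $\pi=(x,y)\in\Pi_i$ define $g_\pi(S)=\1[x\in S]-\1[x\in S]\1[y\in S]$, which records the contribution of $x$ and $y$ (recall $v_i(y)=0$). For each $o\notin P$ define $h_o(S)=v_i(o)\1[o\in S]$. Because the pairs are pairwise disjoint, we have
\[
v_i(S)=\sum_{\pi\in\Pi_i} g_\pi(S\cap\pi)+\sum_{o\in M\setminus P} h_o(S\cap\{o\})
\]
for every $S\subseteq M$, including $S=\varnothing$, where both sides are $0$.

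The key observation is that the marginal gain of an item $o$ depends only on the unique summand whose support contains $o$. Hence for $S\subseteq M\setminus\{o\}$, the marginal gain $v_i(S\cup\{o\})-v_i(S)$ equals the marginal of that single summand, evaluated on $S$ restricted to its support. The proof then reduces to the following case check.

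\begin{itemize}
\item If $o\notin P$, the marginal is the constant $v_i(o)\in\{-1,0,1\}$, independent of $S$. This gives (b), and (a) holds with equality.
\item If $o=x$ for a pair $(x,y)$, the marginal is $1-\1[y\in S]$. It lies in $\{0,1\}$ and is nonincreasing in $S$.
\item If $o=y$ for a pair $(x,y)$, the marginal is $-\1[x\in S]$. It lies in $\{-1,0\}$ and is nonincreasing in $S$.
\end{itemize}

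In every case the marginal lies in $\{-1,0,1\}$, which gives (b). Each marginal is also monotone nonincreasing under inclusion: for $S\subseteq T$, $\1[y\in S]\le \1[y\in T]$ and $\1[x\in S]\le\1[x\in T]$. This gives (a).

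There is no real obstacle here. The only point needing care is confirming that the decomposition matches the definition exactly. Disjointness of the pairs ensures each item lies in at most one pair, so the penalty term counts each pair's contribution independently. We should also check that items $y$ with $v_i(y)=0$, and items not in $P$ with value $-1$, are handled correctly; both are covered by the case analysis above.
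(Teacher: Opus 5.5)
Your proof is correct and follows essentially the same route as the paper. The paper's appendix proof computes the marginal $\Delta(o\mid S)=v_i(o)-\1[\,o\text{ is paired and its partner lies in }S\,]$ and runs the same three-case check for range and monotonicity. Your disjoint-support decomposition is the informal justification the paper gives in the main text after Definition~\ref{def:conflict-pair}.
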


\begin{definition}[Order-neutrality]\label{def:order-neutral}
    A submodular valuation is order-neutral if, for every bundle $S$, the multiset of marginal gains obtained by adding the items of $S$ is the same for every ordering of those items.
\end{definition}

For submodular valuations, we assume the standard value-oracle model, in which each query specifies an agent and a bundle of items and returns the agent's value for that bundle. %A bounded-error randomized algorithm is required to return a correct answer with probability at least $2/3$ on every instance. 
In Section~\ref{sec:submod}, we use matroid rank functions and their standard properties; we refer the reader to \citet{Oxl11} for further background.

\begin{definition}[Approximation algorithm]\label{def:approx_alg}
    For a maximization problem with nonnegative optimum, a factor-$\rho$ approximation, $\rho\ge1$, returns an allocation with welfare at least $\OPT/\rho$. A PTAS achieves a factor $1+\varepsilon$ in polynomial time for each fixed $\varepsilon>0$; an FPTAS runs in time polynomial also in $1/\varepsilon$. Our multiplicative impossibility statements for mixed manna are proved on explicit nonnegative- or positive-optimum promise classes. An additive error $\delta\ge0$ means returning an allocation $A$ with $\EW(A)\ge\OPT-\delta$. A constant approximation gap rules out a PTAS unless $\p=\np$.
\end{definition}

Consider the decision version of $\MEW$: given an integer threshold $T$, the goal is to either find an allocation whose egalitarian welfare is at least $T$, or correctly determine that no such allocation exists. If this decision problem can be solved in polynomial time, then an optimal allocation for $\MEW$ can be computed in polynomial time by performing a binary search over the range of possible values of $T$. Consequently, throughout the following sections, to establish the polynomial-time solvability of $\MEW$, it suffices to provide a polynomial-time algorithm for the decision version.

%==================================================================================
\section{Bi-valued valuations}\label{sec:bivalued}
%==================================================================================

In this section we assume that each agent $i$ has a $\{p,q\}$-additive valuation with $q>p>0$. 

\subsection{A polynomial-time algorithm for $p=2$}

We consider the case when $p=2$ and $q\ge3$ is odd. Let $q=2s+1,\ s\ge1 $. For agent $i$, let $H_i=\{g\in M: v_i(g)=q\}$ be the set of goods for which $i$ values \emph{high}; all other goods are \emph{low} for $i$.

\begin{theorem}\label{thm:main_q3}
For every odd $q\ge3$, $\MEW$ under $\{2,q\}$-additive valuations can be solved exactly in time polynomial in $n$ and $m$.
\end{theorem}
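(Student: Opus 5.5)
We work with the decision version described in Section~\ref{sec:prelim}. Fix an integer threshold $T$ and ask whether some allocation gives every agent value at least $T$. The first step is a normal form. Suppose agent $i$ receives $h_i$ goods from $H_i$ and $\ell_i$ further goods. Then $v_i(A_i)=qh_i+2\ell_i$. Every good is worth at least $2$ to everyone, so any good may serve as a ``low'' good for any agent, and surplus goods can be dumped on an arbitrary agent without harm. Hence $T$ is feasible if and only if there is an assignment of \emph{high} goods, i.e.\ a bipartite $b$-matching $h$ in the graph with edges $\{i,g\}$ for $g\in H_i$, such that
\[
\sum_{i\in N}c_i(h_i)\le m,\qquad\text{where } c_i(h)=h+\max\Bigl(0,\Bigl\lceil\tfrac{T-qh}{2}\Bigr\rceil\Bigr).
\]
Here $c_i(h)$ is the number of goods agent $i$ consumes when it receives $h$ high goods, completed optimally with low goods. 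It suffices to take $h_i\le\lceil T/q\rceil$. So the task reduces to minimizing a separable cost over degree-constrained subgraphs.

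The second step is to analyse the structure of $c_i$. Write $q=2s+1$. Since $qh\equiv h\pmod 2$, the low-good demand $\lceil (T-qh)/2\rceil$ drops by $s$ or $s+1$ each time $h$ increases by one, and the two drop sizes alternate. Which one comes first depends on the parity of $T$. Consequently the marginal saving $c_i(h)-c_i(h+1)$ alternates between $s$ and $s-1$ until the demand reaches $0$; after that the saving becomes $-1$.
\begin{itemize}
\item If the saving sequence starts with $s$, it is non-increasing. Then $c_i$ is convex, and the problem is a min-cost flow: give agent $i$ parallel arcs with costs equal to the successive marginal costs.
\item If the sequence starts with $s-1$ (the pattern $s-1,s,s-1,s,\dots$), $c_i$ fails convexity only through parity. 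Over each consecutive pair of high goods the saving is exactly $2s-1$. In this case $c_i(h)=\bar c_i(h)+\tfrac12[h\text{ odd}]$, with $\bar c_i$ convex, up to the boundary step where the demand is truncated at $0$.
\end{itemize}

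The third step, which I expect to be the main obstacle, is to handle this parity-penalized convex cost exactly. My plan is to encode it as a maximum-weight matching in a general (non-bipartite) graph, solvable by Edmonds' algorithm. For each agent $i$, create $k_i=\lceil T/q\rceil$ slot vertices. Every good $g\in H_i$ is joined to every slot of $i$. Slot $j$ carries, via an auxiliary private partner vertex, a weight equal to the $j$-th convexified marginal saving. Consecutive slots $2r-1,2r$ are additionally joined by an edge, so that unused slots can be matched away among themselves. This edge is weighted so that leaving an odd number of slots of $i$ unused, i.e.\ giving $i$ an odd number of high goods, incurs exactly the extra $\tfrac12$ penalty. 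This is an instance of a general factor problem whose degree sets have gaps of length at most one, and such problems are known to be polynomial (Cornu\'ejols). I would verify the gadget by checking that every matching corresponds to a choice of $h_i$ whose weight equals $\sum_i(c_i(0)-c_i(h_i))$, and conversely. The boundary step, where the demand hits $0$ and the saving may be smaller or equal to $-1$, must be treated separately as a final slot with its own weight.

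Finally, the maximum saving computed this way tells us whether $\sum_i c_i(h_i)\le m$ is achievable, which decides $T$. A binary search over $T\in[0,qm]$ then gives the optimum. Each test is polynomial in $n$ and $m$, since the gadget graph has $O(nm)$ vertices and $O(nm^2)$ edges.
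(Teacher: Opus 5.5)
Your normal form and cost $c_i$ are exactly the paper's $\Phi$, and your third step is in spirit the paper's pair-slot matching gadget. However, your second step contains a false claim that breaks the algorithm for odd thresholds.

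If the saving sequence starts with $s$ (which, for $T>q$, happens precisely when $T$ is odd, and then for all agents at once, since $T$ and $q$ are common), the alternation you derived makes it continue $s,s-1,s,s-1,\dots$. It is not non-increasing once $\lceil T/q\rceil\ge 3$, so $c_i$ is not convex. In fact $c_i(h)=c_i(0)-h(s-\tfrac12)-\tfrac12[h\text{ odd}]$ up to the truncation.

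A min-cost flow with parallel arcs then overcounts savings. For $q=5$, $T=17$ the savings are $(2,1,2,0)$. The flow credits an agent holding two high goods with saving $2+2=4$ instead of $c_i(0)-c_i(2)=9-6=3$. With one agent, $m=5$ and $|H_1|=2$, the flow reports total demand $5\le m$, although the best achievable value is $2\cdot 5+3\cdot 2=16<17$. Hence every odd-$T$ test may answer wrongly, and the binary search is unsound.

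The missing idea is the paper's lead slot. Peel off the first high good as a single slot of weight $s$. The remaining savings then follow the $s-1,s,s-1,\dots$ pattern, so both parities of $T$ reduce to your parity case.

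Your gadget also needs to be made precise.
\begin{itemize}
\item \emph{Edge weights.} If slot--good edges carry the convexified saving $s-\tfrac12$, no weight on the internal edge of a slot pair yields the required contributions $0$, $s-1$, $2s-1$ (above the baseline) for $0$, $1$, $2$ goods. Writing $a$ for the slot--good weight and $e$ for the internal weight, the conditions $a-e=s-1$ and $2a-e=2s-1$ force $a=s$ and $e=1$. That is exactly the paper's pair slot; the private partner vertices are not needed.
\item \emph{Truncated end.} Suppose position $K=\lceil T/q\rceil$ is the second position of an $(s-1,s)$ block and its saving is below $s$. Then position $K-1$ cannot be paired either, so you need up to two single tail slots (weights $s-1$ and $\mu_K$) rather than one final slot.
\item \emph{Fillings.} A matching may distribute an agent's $k$ goods over the slots in many ways. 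You must show that every such placement earns, above the baseline of unused internal edges, \emph{at most} $c_i(0)-c_i(k)$, with equality for some placement. This is the paper's Lemma~\ref{lem:slots}; with a lead slot and tail slots present it needs a genuine, if short, argument.
\end{itemize}
Your appeal to Cornu\'ejols concerns the unweighted general factor problem; with an explicit matching gadget you do not need it.
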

 
To prove the theorem, it suffices to fix an integer target $T\geq 0$ and show that one can determine in polynomial time whether there exists an allocation with value at least $T$ and, if so, construct such an allocation. Our main idea is to reduce this problem to computing a maximum-weight matching in an appropriately constructed graph; the details of the reduction are presented in the following sections. Since the case $T=0$ is trivial, we henceforth assume that $T\geq 1$. Let $K=\lceil T/q \rceil$, which is the minimum number of high-valued goods that, by themselves, suffice for any agent to attain value at least $T$.

\subsubsection{Feasibility via cost function}
%======================================================================
 
\begin{definition}
    A \emph{high assignment} is a set $X$ of pairs $(g,i)$ with $g\in H_i$ in which every good appears at most once. We write $k_i(X)$, or simply $k_i$, for the number of pairs in $X$ involving agent $i$.
\end{definition}
 
Define the \emph{cost function} 
\[
\Phi(k)=k+\left\lceil\frac{1}{2}(T-qk)^+\right\rceil,~(k\in\mathbb{N}),
\]
where $a^+=\max(a,0)$. It is the minimum number of goods an agent needs to reach utility $T$ when exactly $k$ of its goods are high: the $k$ high goods give a value of $qk$, and each further good gives a value of at least $2$. One can check that  $\Phi(k)\ge \Phi(K)=K$ for all $k\ge K$.
 
\begin{lemma}[Characterization]\label{lem:char}
    $\OPT\ge T$ if and only if there is a high assignment $X$ with $\sum_{i\in N}\Phi(k_i)\le m$. Given such an $X$, an allocation $A$ with $\EW(A)\ge T$ can be constructed in linear time.
\end{lemma}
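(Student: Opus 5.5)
We prove the two directions separately. The forward direction extracts a high assignment from an optimal allocation. The backward direction builds an allocation greedily from a high assignment and a counting argument.

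\emph{($\Rightarrow$)} Suppose $A$ is an allocation with $\EW(A)\ge T$. For each agent $i$, let $h_i=|A_i\cap H_i|$, and define $k_i=\min(h_i,K)$. Let $X$ pair each agent $i$ with some $k_i$ goods of $A_i\cap H_i$. Since the bundles are disjoint, $X$ is a high assignment. We claim $|A_i|\ge\Phi(k_i)$ for every $i$.
\begin{itemize}
\item If $h_i\ge K$, then $k_i=K$ and $|A_i|\ge h_i\ge K=\Phi(K)$.
\item If $h_i<K$, then $k_i=h_i$, and every good of $A_i$ outside $H_i$ is worth exactly $2$ to $i$. From $qk_i+2(|A_i|-k_i)\ge T$ and integrality we get $|A_i|-k_i\ge\lceil (T-qk_i)^+/2\rceil$, that is, $|A_i|\ge\Phi(k_i)$.
\end{itemize}
Summing over agents gives $\sum_i\Phi(k_i)\le\sum_i|A_i|\le m$.

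\emph{($\Leftarrow$)} Suppose $X$ is a high assignment with $\sum_i\Phi(k_i)\le m$. First give each agent $i$ its $k_i$ goods from $X$. Next, take the goods not used by $X$, of which there are $m-\sum_i k_i$. Hand them out arbitrarily, giving each agent $i$ exactly $\lceil (T-qk_i)^+/2\rceil$ additional goods. This is possible because the total demand is $\sum_i(\Phi(k_i)-k_i)\le m-\sum_i k_i$. Finally, dump any leftover goods on an arbitrary agent.

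Every good is worth at least $2$, so agent $i$ receives value at least
\[
qk_i+2\left\lceil\tfrac12 (T-qk_i)^+\right\rceil\ge T.
\]
Adding leftover goods only increases values, since all values are positive. The whole construction is a single pass over $X$ and $M$, so it runs in linear time.

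The only delicate point is the truncation $k_i=\min(h_i,K)$ in the forward direction. Without it, an agent holding many high goods would be charged $\Phi(h_i)$. The inequality $\Phi(h_i)\le|A_i|$ still holds since $\Phi(h_i)=h_i$ when $h_i\ge K$, so truncation is not strictly needed. Using $k_i=h_i$ directly works as well, because $\Phi(k)=k$ for $k\ge K$. Either way, the main step is simply the integrality bound that turns the welfare inequality into a cardinality inequality.
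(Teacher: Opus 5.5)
Your proof is correct and follows the paper's argument. In one direction you take $X$ from the high goods in each bundle and use integrality to turn $qk_i+2(|A_i|-k_i)\ge T$ into $|A_i|\ge\Phi(k_i)$; in the other you give each agent $\lceil (T-qk_i)^+/2\rceil$ unused goods on top of its $k_i$ assigned ones. As you note yourself, the truncation $k_i=\min(h_i,K)$ is unnecessary: the paper sets $k_i=|A_i\cap H_i|$ directly, and the same integrality bound covers $k_i\ge K$ without a case split.
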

 
\begin{proof}
    ($\Leftarrow$) Give each agent $i$ the $k_i$ (high) goods assigned to it by $X$, together with $\lceil (T-qk_i)^+/2 \rceil\rceil$ further goods not used by $X$; the inequality ensures enough such goods exist. The remaining goods (if exist) can be distributed arbitrarily. Every agent $i$ values each further good at least $2$, so it receives a value of at least $qk_i+(T-qk_i)^+\ge T$.
 
    ($\Rightarrow$) Suppose that there is an optimal allocation $A$ with $\OPT=\EW(A)\ge T$. Define the set $X:=\{(g,i): g\in A_i\cap H_i,i\in N\}$, a high assignment with $k_i=|A_i\cap H_i|$. Agent $i$ values the other $|A_i|-k_i$ goods of $A_i$ at exactly $2$, so $qk_i+2(|A_i|-k_i)\ge T$. As $|A_i|-k_i$ is a nonnegative integer, $|A_i|-k_i\ge\lceil(T-qk_i)^+/2\rceil$, i.e.\ $|A_i|\ge \Phi(k_i)$. Therefore, $\sum_{i\in N} \Phi(k_i)\le\sum_{i\in N}|A_i|=m$.

    Finally, given $X$, one can construct an allocation $A$ with $\EW(A)\ge T$ in linear time.~
\end{proof}
 
Based on Lemma~\ref{lem:char} our problem of computing an allocation of target $T$ is reduced to checking if
\begin{equation}\label{eq:mincost}
  C\;=\;\min\ \left\{{\sum}_{i\in N}\Phi\bigl(k_i(X)\bigr)~|~X\text{ is a high assignment}\right\}\le m.
\end{equation}
Intuitively, $C$ represents the minimum total number of goods required to ensure that every agent attains utility at least $T$. In what follows, we show that $C$ can be computed in polynomial time by reducing its computation to a maximum-weight matching problem in an appropriately constructed weighted graph. This, together with the preceding arguments, completes the proof of Theorem~\ref{thm:main_q3}.

\subsubsection{Saving and the slot decomposition}

For $0\le k\le K$ and for $j\in[K]$, define 
\begin{itemize}
    \item the \emph{saving} $\sigma(k):=\Phi(0)-\Phi(k)$, which is the number of goods an agent saves by having $k$ of its goods be high-valued rather than none. 
    \item the \emph{marginal saving} of the $j$-th high-valued good $\mu_j:=\Phi(j-1)-\Phi(j)$, i.e., how many fewer goods the agent needs to reach $T$ when one more of its goods is high-valued. 
\end{itemize}
By telescoping, $\sigma(k)=\sum_{j=1}^{k}\mu_j$.

We now represent the saving function $\sigma$ by a collection of \emph{slots}, which will become the vertices of a matching instance in Section~\ref{sec:matching_max_weight}.

\begin{definition}[Slots]
A \emph{\bf single slot} has a weight $w\ge 0$ and holds at most one good. A \emph{\bf pair slot} holds at most two goods; its value is $0$, $s-1$ or $2s-1$ when it holds $0$, $1$ or $2$ goods, respectively. A \emph{\bf filling} of a collection of slots with $k$ goods places the $k$ goods into the slots within these capacities. The \emph{\bf value} of a filling is the total weight of the filled single slots plus the values of the pair slots.
\end{definition}
 
\paragraph{The decomposition $\mathcal S$.} Parse $\mu_1,\dots,\mu_K\in\{s-1;s\}$ from left to right. If the next two unparsed terms are $(s-1,s)$, turn them into a pair slot; otherwise turn the next term $\mu_j$ into a single slot of weight $\mu_j$. The resulting ordered list of slots is $\mathcal S$; it has total capacity $K$ and does not depend on the agent. Let $P$ be its number of pair slots. Lemma~\ref{lem:form} below gives the form of the decomposition, which can be easily to check. %A formal proof and an example of the decomposition can be found the Appendix.  

\begin{lemma}\label{lem:form}
	If $K\ge2$, then $\mathcal S$ consists of, in order:
	\begin{enumerate}[label=(\alph*),nosep]
		\item[(i)] a \emph{lead} single slot of weight $s$ if $T$ is odd, and none if $T$ is even;
		\item[(ii)] $P$ pair slots (possibly empty);
		\item[(iii)] at most two \emph{tail} single slots, each of weight at most $s-1$, in nonincreasing order of weight.
	\end{enumerate}
\end{lemma}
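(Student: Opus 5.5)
We compute the marginal savings $\mu_j$ explicitly and then follow the left-to-right parse.

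\textbf{Step 1: a closed form for $\mu_j$.} For $1\le j\le K$ we have $T-q(j-1)>0$, so
\[
\mu_j=\Phi(j-1)-\Phi(j)=-1+\left\lceil\tfrac{T-q(j-1)}{2}\right\rceil-\left\lceil\tfrac{(T-qj)^+}{2}\right\rceil .
\]
Put $R_j=T-q(j-1)$. Since $q=2s+1$ is odd, the parity of $R_j$ alternates with $j$.

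\textbf{Step 2: the case $j<K$.} Here $T-qj>0$, so
\[
\mu_j=-1+\lceil R_j/2\rceil-\lceil (R_j-2s-1)/2\rceil .
\]
If $R_j$ is odd, then $\lceil R_j/2\rceil=(R_j+1)/2$ and $\lceil (R_j-2s-1)/2\rceil=(R_j-2s-1)/2$. Hence $\mu_j=s$.

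If $R_j$ is even, then $\lceil R_j/2\rceil=R_j/2$ and $\lceil (R_j-2s-1)/2\rceil=(R_j-2s)/2$. Hence $\mu_j=s-1$.

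\textbf{Step 3: the case $j=K$.} Now $0<R_K\le q$, and
\[
\mu_K=\lceil R_K/2\rceil-1\le\lceil (2s+1)/2\rceil-1=s .
\]
Equality $\mu_K=s$ holds only when $R_K\in\{2s,2s+1\}$; otherwise $\mu_K\le s-1$. This $\mu_K$ does not need to follow the alternating pattern.

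\textbf{Step 4: the resulting sequence.} Since $R_1=T$, the terms $\mu_1,\dots,\mu_{K-1}$ alternate between $s$ and $s-1$. They start with $s$ exactly when $T$ is odd, and with $s-1$ when $T$ is even. The last term $\mu_K$ is some value at most $s$.

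\textbf{Step 5: the parse.}
\begin{itemize}
\item If $T$ is odd, $\mu_1=s$ cannot begin a pair $(s-1,s)$, so it becomes a lead single slot of weight $s$. This is item (i).
\item From there the sequence reads $s-1,s,s-1,s,\dots$, and the parser forms consecutive pair slots. This is item (ii).
\item The parse ends when fewer than two alternating terms remain. Let $j$ be the index of the first unpaired term among $\mu_1,\dots,\mu_{K-1}$, or $j=K$ if there is none.
\item If an unpaired alternating term $\mu_j=s-1$ with $j=K-1$ remains and $\mu_K=s$, then $(s-1,s)$ is itself a pair. It is absorbed into (ii), and no tail is left.
\item Otherwise the leftover terms are at most one alternating term of value $s-1$, followed by $\mu_K\le s-1$. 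A leftover term of value $s$ before $\mu_K$ cannot occur, since every $s$ after the lead is preceded by an $s-1$ and is paired with it.
\end{itemize}

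\textbf{Step 6: the tail.} This gives at most two tail singles, each of weight at most $s-1$. The main obstacle is the edge bookkeeping at $j=K$. In particular, we must confirm that if $\mu_K=s$ it is preceded by $s-1$, so it is consumed by a pair. This holds because $R_K\ge 2s$ forces $R_{K-1}=R_K+q\ge 4s+1$; indeed $R_K\in\{2s,2s+1\}$ gives $R_{K-1}\in\{4s+1,4s+2\}$.

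We also need the ordering claim in (iii): when two tails remain, they are an alternating term $s-1$ followed by $\mu_K\le s-1$, which is nonincreasing. The small cases $K=2$ with $T$ odd are checked directly in the same way.
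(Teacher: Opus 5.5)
Your approach is the paper's: compute $\mu_j$ exactly, note that $\mu_1,\dots,\mu_{K-1}$ follow the alternating template, then handle the end of the parse. There is a gap, though, in exactly the step you flag as the main obstacle.

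To exclude a lone tail slot of weight $s$ (a parse ending $\dots,(s-1,s),\,s$), you need: if $\mu_K=s$, then $\mu_{K-1}=s-1$. Your Step 3 only gives the necessary condition $R_K\in\{2s,2s+1\}$, and Step 6 turns this into $R_{K-1}\in\{4s+1,4s+2\}$. By your own Step 2, however, $\mu_{K-1}$ depends on the \emph{parity} of $R_{K-1}$, not its size. The value $R_{K-1}=4s+1$ is odd and would give $\mu_{K-1}=s$. So the bound you derive does not imply the claim, and one of the two cases your condition allows actually contradicts it.

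The missing observation is that $R_K=2s$ gives $\mu_K=\lceil 2s/2\rceil-1=s-1$. Hence $\mu_K=s$ holds if and only if $R_K=q$, which is odd. Then $R_{K-1}=2q$ is even, so $\mu_{K-1}=s-1$. Equivalently, as in the paper: if $R_K$ is even, then $R_K\le 2s$ and $\mu_K=R_K/2-1\le s-1$, so $\mu_K$ never exceeds its template value $\nu_K$. This single fact also disposes of your case $K=2$ with $T$ odd, since there $R_2=T-q$ is even.

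Add the remark that a term equal to $s-1$ is never the second entry of a pair and never the lead, so it is still unparsed when the parser reaches it. With that, your case analysis goes through and coincides with the paper's block-by-block argument.
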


We will use the following result in our analysis of the computation of a maximum-weight matching in the graph.
 
\begin{lemma}\label{lem:slots}
    For every $0\le k\le K$, the maximum value of a filling of $\mathcal S$ with $k$ goods equals $\sigma(k)$.
\end{lemma}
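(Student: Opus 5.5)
We will prove Lemma~\ref{lem:slots} by first computing the marginal savings $\mu_j$ explicitly, then comparing the slot structure of $\mathcal S$ with the sequence $\mu_1,\dots,\mu_K$.

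\emph{Step 1: compute the $\mu_j$.} Write $r_j=T-qj$. For $j\le K$ we have $r_{j-1}>0$, and
\[
\mu_j=\Phi(j-1)-\Phi(j)=\bigl\lceil r_{j-1}/2\bigr\rceil-\bigl\lceil r_j^+/2\bigr\rceil-1 .
\]
When $r_j\ge0$, use $r_{j-1}=r_j+2s+1$. The ceiling increases by $s+1$ if $r_{j-1}$ is odd and by $s$ if it is even. Hence $\mu_j=s$ when $T-q(j-1)$ is odd, and $\mu_j=s-1$ when it is even. Since $q$ is odd, the parity of $T-q(j-1)$ alternates with $j$. So the sequence alternates between $s$ and $s-1$, starting with $s$ exactly when $T$ is odd. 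The last term(s), where $r_j<0$, may be smaller, but they still satisfy $\mu_j\le s$, and one checks they are at most $s-1$ when the alternation would demand $s-1$.

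Parsing this sequence gives Lemma~\ref{lem:form}: an optional lead $s$, then the pairs $(s-1,s)$, then a short tail. The structural fact we need is that $\sigma$ is exactly the prefix sum of this sequence.

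\emph{Step 2: lower bound.} We show that $\sigma(k)$ is attained by filling greedily in the order of $\mathcal S$. Put the first $k$ goods into the slots in their listed order, filling each pair slot completely before moving on. Any prefix of the sequence consists of:
\begin{itemize}
\item complete slots, which contribute exactly their $\mu$-sums, since a full pair slot has value $2s-1=(s-1)+s$;
\item possibly one half-filled pair slot, with value $s-1$, which equals the next term $\mu_j=s-1$.
\end{itemize}
Thus the value of this filling is $\sum_{j\le k}\mu_j=\sigma(k)$.

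\emph{Step 3: upper bound.} This is the main point: no filling with $k$ goods beats $\sigma(k)$. Consider any filling and define per-unit gains:
\begin{itemize}
\item a single slot of weight $w$ gives $w$;
\item the first good in a pair slot gives $s-1$, and the second gives $s$.
\end{itemize}
The pair slot is therefore not concave, but its average gain is less than $s$. We use an exchange argument. Given an optimal filling, we may assume:
\begin{itemize}
\item At most one pair slot is half-filled. With two half-filled pairs, moving a good from one to the other changes the value by $-(s-1)+s=+1$.
\item If the lead slot exists, it is filled whenever $k\ge1$. Otherwise, take a good from a half-filled pair ($-(s-1)$) or a tail slot (weight $\le s-1$), or dismantle a full pair, and place it in the lead slot ($+s$); this never loses value.
\item Tail slots are used only when all pairs are full. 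A tail good, worth at most $s-1$, can be moved into an empty or half pair to complete it, gaining $\ge s-1$. More carefully, two tail goods together are worth at most $2s-2$, while a full pair is worth $2s-1$.
\end{itemize}
After these normalizations the filling is the greedy prefix filling from Step~2, so its value is at most $\sigma(k)$.

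We expect the fiddliest parts to be two places. The first is the edge cases in Step~1, namely the final one or two $\mu_j$ where $r_j<0$, which determine the tail weights. The second is the parity bookkeeping in the exchange argument when $k$ lands inside the tail. The cases $K\le1$ are immediate, since then $\mathcal S$ is a single slot of weight $\mu_1$.
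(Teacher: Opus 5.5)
Your proof is correct. Steps~1--2 match the paper: Step~1 is essentially the paper's proof of Lemma~\ref{lem:form}, where in fact only $j=K$ can have $r_j<0$, and your greedy prefix filling is the paper's lower bound. Your upper bound, however, takes a different route.

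The paper does not move goods around. It writes the value of an \emph{arbitrary} filling as $(s-1)k+\lambda+c-\sum_{t\in R}d_t$. Here $\lambda$ indicates that the lead slot is filled, $c$ counts full pairs, and $d_t=(s-1)-w_t\ge0$ are the tail deficits. It then bounds this expression directly in two cases:
\begin{itemize}
\item for $k\le\lambda_{\max}+2P$, via $2c\le k-\lambda$ and monotonicity of $\lambda\mapsto\lambda+\lfloor(k-\lambda)/2\rfloor$;
\item for larger $k$, via $|R|\ge k-\lambda_{\max}-2P$ together with the nonincreasing tail weights.
\end{itemize}

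This global counting bound avoids two points that your exchange argument has to handle explicitly.

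First, some of your moves are only weakly improving: dismantling a full pair to fill the lead, and moving a weight-$(s-1)$ tail good into an empty pair. You should say why they can be applied without undoing the other normalizations. Starting from an optimal filling, every strictly improving configuration is already excluded. In particular, an empty lead forces all goods into full pairs, and a used tail next to a non-full pair forces no half-filled pair. Hence each neutral move creates exactly one half-filled pair and keeps optimality. The process also terminates, since the lead is filled at most once and each tail-to-pair move reduces the number of tail goods.

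Second, your three normalizations do not determine \emph{which} tail slots are occupied. You need one more swap: move a good from the second tail slot to an empty first one. This gains $w_1-w_2\ge0$ by Lemma~\ref{lem:form}(iii), and only then is the normalized filling literally the parse-order filling.

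In exchange, your approach makes it transparent why the parse order is optimal. The paper's formula is shorter and handles ties automatically, because it bounds every filling at once.
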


\begin{example}\label{ex:slots}
	Let $q=5$ (so $s=2$) and $T=17$. Then $K=\lceil17/5\rceil=4$ and
	$\bigl(\Phi(0),\Phi(1),\Phi(2),\Phi(3),\Phi(4)\bigr)=(9,7,6,4,4)$, $(\mu_1,\mu_2,\mu_3,\mu_4)=(2,1,2,0)$. 
	Here $T$ is odd, so $\mu_1=s=2$ becomes the lead single slot; the block $(\mu_2,\mu_3)=(s-1,s)=(1,2)$ becomes a pair slot; and $\mu_4=0<\nu_4=1$ is truncated, because three high goods (value $15$) plus one low good (value $2$) already reach $17$, and becomes a tail single slot of weight $0$. Thus $P=1$ and $\mathcal S$ consists of a lead slot, one pair slot and one tail slot (Figure~\ref{fig:decomposit}). The maximum filling values for $k=0,\dots,4$ are $0,2,3,5,5$, which equal $\sigma(k)$: one good goes to the lead slot; two goods give either lead plus half a pair or a full pair, both worth $3$; three goods fill the lead and the pair; the fourth good adds nothing.
\end{example}

\begin{figure}[ht]
	\centering
		\begin{tikzpicture}[
			cell/.style={draw, minimum width=1.9cm, minimum height=0.75cm, inner sep=2pt},
			leadc/.style={cell, fill=violet!12, draw=violet!70!black},
			pairc/.style={cell, fill=teal!12, draw=teal!70!black},
			tailc/.style={cell, fill=orange!12, draw=orange!70!black},
			lab/.style={anchor=east, font=\small},
			>=stealth]
			\foreach \j/\x in {1/0,2/2.2,3/4.4,4/6.6}
			\node[font=\small] at (\x,0.75) {$j=\j$};
			\node[lab] at (-1.2,0) {$\mu_j$};
			\node[leadc] (m1) at (0,0)   {$2$};
			\node[pairc] (m2) at (2.2,0) {$1$};
			\node[pairc] (m3) at (4.4,0) {$2$};
			\node[tailc] (m4) at (6.6,0) {$0$};
			\node[font=\scriptsize, below=0pt of m4] {truncated};
			\node[lab] at (-1.2,-1.8) {slots $\mathcal S$};
			\node[leadc, minimum height=1.0cm, align=center] (L) at (0,-1.8) {lead single\\[-1pt]{\scriptsize weight $s=2$}};
			\node[pairc, minimum width=4.1cm, minimum height=1.0cm, align=center] (P) at (3.3,-1.8)
			{pair slot\\[-1pt]{\scriptsize values $0,\,1,\,3$}};
			\node[tailc, minimum height=1.0cm, align=center] (T1) at (6.6,-1.8) {tail single\\[-1pt]{\scriptsize weight $0$}};
			\draw[->] ($(m1.south)+(0,-0.35)$) -- (L.north);
			\draw[->] ($(m2.south)!0.5!(m3.south)+(0,-0.35)$) -- (P.north);
			\draw[->] ($(m4.south)+(0,-0.35)$) -- (T1.north);
		\end{tikzpicture}
	\caption{The decomposition $\mathcal S$ for $q=5$ ($s=2$) and $T=17$, so $K=4$. Since $T$ is odd, $\mu_1=s$ becomes the lead single slot; the block $(\mu_2,\mu_3)=(s-1,s)$ becomes a pair slot; and the truncated last term $\mu_4=0$ becomes a tail single slot.
	}
	\label{fig:decomposit}
\end{figure}

%======================================================================
\subsubsection{Reduction to maximum-weight matching}\label{sec:matching_max_weight}
%======================================================================
 
Build an undirected graph $G$ with edge weights as follows. % (see Example~\ref{ex:slots} and Figure~\ref{fig:gadget} in Appendix for an illustration).

\smallskip
\begin{itemize}
    \item There is a good vertex $g$ for every good $g\in M$.
    \item There is a vertex $u$ for every pair of agent $i$ and single slot of weight $w$ in $\mathcal S$; create an edge $(u,g)$ of weight $w$ for every $g\in H_i$.
    \item There are two vertices $u,u'$ joined by an edge of weight $1$ for every pair of agent $i$ and pair slot in $\mathcal S$; join and each of $u,u'$ to every $g\in H_i$ by an edge of weight $s$. This is done for every $i$.
\end{itemize}

\begin{claim}
    The constructed graph $G$ has at most $m+2nK\le m+2nm$ vertices and $O(nm^2)$ edges.
\end{claim}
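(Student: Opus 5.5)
The plan is to count vertices and edges directly from the construction of $G$, using only the fact that the decomposition $\mathcal S$ has total capacity $K$. First we bound the number of vertices. There are exactly $m$ good vertices. For each agent $i$, every single slot of $\mathcal S$ contributes one vertex and every pair slot contributes two. Hence the number of slot vertices per agent equals the total capacity of $\mathcal S$, which is $K$ by the definition of the decomposition. Summing over the $n$ agents gives $nK\le 2nK$ slot vertices, so $G$ has at most $m+2nK$ vertices.

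To get $m+2nK\le m+2nm$ we need $K\le m$. The one genuinely non-mechanical point is justifying this. We may assume $K\le m$ without loss of generality. If $K>m$, then no agent can reach $T$ using high goods alone, because $qm<qK$ need not reach $T$. Trivial rejection is also available: $\OPT\le qm$, so if $T>qm$ we answer no immediately. When $T\le qm$ we have $K=\lceil T/q\rceil\le m$. We therefore state this preprocessing step explicitly, with binary search restricted to $T\le qm$, and then conclude $K\le m$.

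Next we count edges. Each single-slot vertex of agent $i$ is joined to every $g\in H_i$, giving at most $m$ edges. Each pair slot of agent $i$ contributes one internal edge $(u,u')$ plus $2|H_i|\le 2m$ edges to goods. So each agent contributes at most $K\cdot m+P\le Km+K$ edges, since the slot vertices number $K$ and each has at most $m$ good-neighbours, plus at most $P\le K/2$ internal edges. The total is at most $n(Km+K)=O(nKm)=O(nm^2)$, again using $K\le m$. No step is difficult; the only care needed is the $K\le m$ reduction above.
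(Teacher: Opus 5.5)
Your proof is correct and is the direct count the paper intends; the paper states this claim without proof. You show that each agent has exactly $K$ slot vertices, each slot vertex has at most $m$ good-neighbours, there are $P$ internal edges per agent, and $K\le m$ turns $O(nKm)$ into $O(nm^2)$. Your explicit restriction to thresholds $T\le qm$, justified by $\OPT\le qm$, supplies the assumption $K\le m$ that the paper leaves implicit; the sentence before it, about agents reaching $T$ with high goods alone, is muddled and can simply be dropped.
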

 See Figure~\ref{fig:gadget} for an illustration of the graph construction.
  It is observed that goods are adjacent only to slot vertices, and slot vertices of different slots are never adjacent, so every edge of a matching is either an internal edge $(u,u')$ of a pair slot or joins a good to a slot vertex of some agent. 

\begin{figure}[ht]
\centering
\begin{tikzpicture}[
  slotv/.style={circle, draw, minimum size=0.8cm, inner sep=0pt},
  leadv/.style={slotv, fill=violet!12, draw=violet!70!black},
  pairv/.style={slotv, fill=teal!12, draw=teal!70!black},
  tailv/.style={slotv, fill=orange!12, draw=orange!70!black},
  goodv/.style={slotv, fill=gray!10, draw=gray!70!black},
  wL/.style={violet!80!black},
  wS/.style={teal!70!black},
  w0/.style={gray, dashed},
  wlab/.style={fill=white, inner sep=0.6pt, font=\tiny, text=black}]
  % goods (shared)
  \node[goodv] (g1) at (5,3)    {$g_1$};
  \node[goodv] (g2) at (5,1.8)  {$g_2$};
  \node[goodv] (g3) at (5,0.6)  {$g_3$};
  \node[goodv] (g4) at (5,-0.6) {$g_4$};
  % agent 1 (left), H_1 = {g1,g2,g3}
  \node[leadv] (l1)  at (0,3)    {$\ell_1$};
  \node[pairv] (u1)  at (0,1.8)  {$u_1$};
  \node[pairv] (v1)  at (0,0.6)  {$u_1'$};
  \node[tailv] (t1)  at (0,-0.6) {$t_1$};
  % agent 2 (right), H_2 = {g2,g3,g4}
  \node[leadv] (l2)  at (10,3)    {$\ell_2$};
  \node[pairv] (u2)  at (10,1.8)  {$u_2$};
  \node[pairv] (v2)  at (10,0.6)  {$u_2'$};
  \node[tailv] (t2)  at (10,-0.6) {$t_2$};
  \foreach \g/\ps in {g1/0.16,g2/0.3,g3/0.44} {
    \draw[wL] (l1) -- node[wlab, pos=\ps] {$2$} (\g);
    \draw[wS] (u1) -- node[wlab, pos=\ps] {$2$} (\g);
    \draw[wS] (v1) -- node[wlab, pos=\ps] {$2$} (\g);
    \draw[w0] (t1) -- node[wlab, pos=\ps] {$0$} (\g);
  }
  \foreach \g/\ps in {g2/0.44,g3/0.3,g4/0.16} {
    \draw[wL] (l2) -- node[wlab, pos=\ps] {$2$} (\g);
    \draw[wS] (u2) -- node[wlab, pos=\ps] {$2$} (\g);
    \draw[wS] (v2) -- node[wlab, pos=\ps] {$2$} (\g);
    \draw[w0] (t2) -- node[wlab, pos=\ps] {$0$} (\g);
  }
  \draw[teal!70!black, line width=1.3pt] (u1) -- node[left, font=\scriptsize] {$1$} (v1);
  \draw[teal!70!black, line width=1.3pt] (u2) -- node[right, font=\scriptsize] {$1$} (v2);
  \node[draw, dashed, rounded corners, fit=(l1)(t1), inner sep=9pt,
        label={[font=\small]above:agent $1$}] {};
  \node[draw, dashed, rounded corners, fit=(l2)(t2), inner sep=9pt,
        label={[font=\small]above:agent $2$}] {};
  \node[draw, dashed, rounded corners, fit=(g1)(g4), inner sep=9pt,
        label={[font=\small]above:goods}] {};
  \node[font=\scriptsize, left=16pt of l1] {lead};
  \node[font=\scriptsize, left=30pt of $(u1)!0.5!(v1)$] {pair};
  \node[font=\scriptsize, left=16pt of t1] {tail};
  \node[font=\scriptsize, right=16pt of l2] {lead};
  \node[font=\scriptsize, right=30pt of $(u2)!0.5!(v2)$] {pair};
  \node[font=\scriptsize, right=16pt of t2] {tail};
  \begin{scope}[shift={(0.6,-1.95)}, font=\scriptsize]
    \draw[wL] (0,0) -- (0.6,0); \node[right] at (0.6,0) {lead: weight $s$};
    \draw[wS] (3.2,0) -- (3.8,0); \node[right] at (3.8,0) {pair: weight $s$};
    \draw[w0] (6.4,0) -- (7.0,0); \node[right] at (7.0,0) {tail: weight $0$};
  \end{scope}
\end{tikzpicture}
\caption{The graph $G$ for $q=5$, $T=17$ (decomposition of Figure~\ref{fig:decomposit}) on an instance with $n=2$ agents and $m=4$ goods, where $H_1=\{g_1,g_2,g_3\}$ and $H_2=\{g_2,g_3,g_4\}$. Each agent receives its own copy of the slot vertices: a lead vertex $\ell_i$, pair vertices $u_i,u_i'$ joined by an internal edge of weight $1$, and a tail vertex $t_i$; these are joined only to the goods in $H_i$, with weights $s=2$, $s=2$ and $0$, respectively. The good vertices are shared, so a matching assigns each good to at most one agent; $g_2$ and $g_3$ are wanted by both agents. Relative to the baseline $1$ of the internal edge, a pair slot contributes $0$, $s-1=1$ or $2s-1=3$ with $0$, $1$ or $2$ goods.}
\label{fig:gadget}
\end{figure}

\begin{lemma}\label{lem:matching}
The maximum weight of a matching in $G$ equals
\[
  \omega\;=\;nP+\max_{X\text{ high assignment}}\ {\sum}_{i\in N}\sigma\bigl(\min\{k_i(X),K\}\bigr).
\]
\end{lemma}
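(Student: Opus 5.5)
We prove the equality by establishing both inequalities. The structural fact we use throughout is the one observed after the Claim. Every matching edge is either an internal edge $(u,u')$ of a pair slot of some agent, or an edge joining a good $g\in H_i$ to a slot vertex of agent $i$. Consequently every matching $\mathcal M$ induces a high assignment $X(\mathcal M)=\{(g,i): g \text{ matched to a slot vertex of agent } i\}$. It also induces, for each agent $i$, a filling of its copy of $\mathcal S$ with $k_i(X)\le K$ goods, because the total capacity of $\mathcal S$ is $K$.

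We first record the local accounting for a single copy of a pair slot. Such a copy contributes weight $1$ if neither vertex is matched to a good and the internal edge is used. It contributes $s$ if exactly one vertex is matched to a good; in that case the internal edge is unusable. It contributes $2s$ if both vertices are matched to goods. Subtracting the baseline $1$, these become $0$, $s-1$ and $2s-1$, which are exactly the pair slot values. A single slot of weight $w$ contributes $w$ when filled and $0$ otherwise. Hence, for a matching that uses every internal edge whose two endpoints are both free of goods, the weight equals $nP$ plus the sum over agents of the values of the induced fillings.

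For the upper bound, take a maximum matching $\mathcal M$. We may assume it has the saturating property just described, since adding an available internal edge only increases the weight. Its weight is then $nP$ plus the sum of the filling values. By Lemma~\ref{lem:slots}, each filling value is at most $\sigma(k_i(X(\mathcal M)))$, and $k_i(X(\mathcal M))=\min\{k_i,K\}$. This gives weight at most $\omega$.

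For the lower bound, fix an arbitrary high assignment $X$. For each agent $i$, discard all but $\min\{k_i,K\}$ of its pairs. Lemma~\ref{lem:slots} then provides an optimal filling of agent $i$'s copy of $\mathcal S$ with these goods. Realize this filling as matching edges from each good to the appropriate slot vertex. These edges are vertex-disjoint: each good appears at most once in $X$, each slot vertex receives at most one good, and all edges are legal because the goods lie in $H_i$. Finally, add the internal edge of every pair slot that received no goods. By the accounting above, the resulting matching has weight $nP+\sum_i\sigma(\min\{k_i,K\})$. Maximizing over $X$ gives the matching weight at least $\omega$.

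The main obstacle is the local pair-slot accounting together with the saturation assumption. The point to verify is that a pair slot holding exactly one good cannot also use its internal edge, since the matched good already occupies one of its endpoints, so its contribution is genuinely $s = 1+(s-1)$. Everything else is bookkeeping plus Lemma~\ref{lem:slots}, which matters in both directions. In the upper bound it ensures that no filling exceeds $\sigma$. In the lower bound it ensures that an optimal filling attains $\sigma$; truncation at $K$ is harmless because $\sigma$ is evaluated at $\min\{k_i,K\}$.
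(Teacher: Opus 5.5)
Your proof is correct and follows the paper's argument. Both directions rest on the same per-agent accounting (a single slot contributes $w$ or $0$; a pair slot contributes $1$, $s$ or $2s$ for $0$, $1$ or $2$ goods, i.e.\ baseline $1$ plus the pair-slot value) combined with Lemma~\ref{lem:slots}. The differences are cosmetic: you saturate the matching with free internal edges where the paper bounds an empty pair slot by at most $1$, and your lower bound invokes an optimal filling from Lemma~\ref{lem:slots} where the paper uses the explicit parse-order filling from Lemma~\ref{lem:form}.
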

 
\begin{proof}
($\le$) Let $Y$ be a matching. For each agent $i$, the goods matched to $i$'s slot vertices lie in $H_i$, are distinct, and are distinct across agents; together they form a high assignment $X$, with $k_i'=k_i(X)\le K$ goods for agent $i$ since the slots have total capacity $K$. The weight of $Y$ splits over agents. A single slot of weight $w$ contributes $w$ if its vertex is matched to a good and $0$ otherwise. A pair slot contributes at most $1$ if neither vertex is matched to a good (only the internal edge can be used), $s=1+(s-1)$ if exactly one is, and $2s=1+(2s-1)$ if both are. Hence agent $i$ contributes at most $P$ plus the value of a filling of $\mathcal S$ with $k_i'$ goods, which by Lemma~\ref{lem:slots} is at most $\sigma(k_i')=\sigma(\min\{k_i',K\})$. Summing over agents gives the inequality.
 
($\ge$) Let $X$ be any high assignment. For each agent $i$, keep $\min(k_i,K)$ of its assigned goods and place them into the vertices of $i$'s slots in parse order; for every pair slot of $i$ that receives no good, add its internal edge. This is a matching, and by Lemma~\ref{lem:form} agent $i$ contributes exactly $P+\sigma(\min\{k_i,K\})$.
\end{proof}
 
\begin{lemma}\label{lem:cost}
    $C=n\,\Phi(0)-(\omega-nP)$, where $C$ is defined in~\eqref{eq:mincost}.
\end{lemma}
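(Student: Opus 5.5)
By Lemma~\ref{lem:matching}, $\omega-nP=\max_X\sum_{i\in N}\sigma(\min\{k_i(X),K\})$. Since $\sigma(k)=\Phi(0)-\Phi(k)$ for $0\le k\le K$, the claimed identity $C=n\Phi(0)-(\omega-nP)$ is equivalent to
\[
\min_X \sum_{i\in N}\Phi(k_i(X)) \;=\; \min_X \sum_{i\in N}\Phi\bigl(\min\{k_i(X),K\}\bigr),
\]
where both minima range over high assignments $X$. I will prove the two inequalities separately. The only delicate point is that $\Phi$ is not monotone beyond $K$.

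\emph{Left side $\ge$ right side.} Take any high assignment $X$. For every $k\ge K$ we have $\Phi(k)\ge\Phi(K)=K$, as noted after the definition of $\Phi$. Indeed, $(T-qk)^+=0$ for $k\ge K$, so $\Phi(k)=k\ge K$. Hence $\Phi(k_i(X))\ge\Phi(\min\{k_i(X),K\})$ holds agentwise. Summing over agents and minimizing over $X$ gives this inequality.

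\emph{Left side $\le$ right side.} Take any $X$ and let $X'$ be obtained by deleting, for each agent $i$ with $k_i(X)>K$, all but $K$ of its pairs. Removing pairs keeps $X'$ a high assignment, since every good still appears at most once and each $g$ still lies in $H_i$. Moreover $k_i(X')=\min\{k_i(X),K\}$ for every $i$. Therefore
\[
\sum_{i\in N}\Phi(k_i(X'))=\sum_{i\in N}\Phi(\min\{k_i(X),K\}),
\]
so the minimum on the left is at most the minimum on the right.

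Combining the two inequalities and substituting $\sigma=\Phi(0)-\Phi$, we get
\[
\max_X\sum_{i\in N}\sigma(\min\{k_i(X),K\}) \;=\; n\Phi(0)-\min_X\sum_{i\in N}\Phi(\min\{k_i(X),K\}) \;=\; n\Phi(0)-C.
\]
Rearranging with Lemma~\ref{lem:matching} yields $C=n\Phi(0)-(\omega-nP)$. The main, though mild, obstacle is the truncation issue, and the fact $\Phi(k)\ge K$ for $k\ge K$ handles it.
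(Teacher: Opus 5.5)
Your proof is correct and follows essentially the same route as the paper. You trim each agent to at most $K$ high goods, which keeps a high assignment and does not increase the cost because $\Phi(k)=k\ge K=\Phi(K)$ for $k\ge K$; you then substitute $\Phi=\Phi(0)-\sigma$ on $\{0,\dots,K\}$ and invoke Lemma~\ref{lem:matching}. Your split into two inequalities is slightly more careful than the paper's remark that $\Phi(k_i)=\Phi(\min\{k_i,K\})$ for all $i$, which holds only after trimming.
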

 
\begin{proof}
    Removing assigned goods from an agent keeps a high assignment. By definition of $\Phi$, trimming every agent to at most $K$ assigned goods does not increase $\sum_i \Phi(k_i)$, so the minimum in~\eqref{eq:mincost} is attained with all $k_i\le K$. In fact, $\Phi(k_i)=\Phi(\min\{k_i,K\})$ for all $i$. Taking $\Phi(k_i)=\Phi(0)-\sigma(k_i)$ into account, we have that
    \begin{align*}
        C=n\,\Phi(0)-\max_{X\text{ high assignment}}{\sum}_{i\in N}\sigma\bigl(\min\{k_i(X),K\}\bigr),
    \end{align*}
    which is equal to $n\Phi(0)-(\omega-nP)$ by Lemma~\ref{lem:matching}.~
\end{proof}

%\begin{proof}[Proof of Theorem~\ref{thm:main_q3}]
%Fix an integer target $T\ge1$. Construct $\mathcal S_T$ and $G_T$, compute $\omega_T$, and hence $C_T$ by Lemma~\ref{lem:cost}. By Lemma~\ref{lem:char}, $\OPT\ge T$ if and only if $C_T\le m$. In that case, a maximum-weight matching yields, as in the proof of Lemma~\ref{lem:matching}, a high assignment $X$ with $\sum_i \Phi_T(\min(k_i,K))=C_T\le m$ after trimming, and Lemma~\ref{lem:char} turns $X$ into an allocation with $\EW\ge T$.
 
%Regarding the running time of Algorithm~\ref{alg:poly-p2}, 
Note that the graph $G$ has at most $m+2nm$ vertices and $O(nm^2)$ edges. Therefore, a maximum-weight matching in $G$ can be computed in time polynomial in $n$ and $m$~\citep{LovaszPlummer1986}. Consequently, both $\omega$ and $C$ can also be computed in polynomial time.% Moreover, the binary search for the optimal value of $T$ requires only $O(\log m)$ feasibility tests. Combining these observations, we conclude that Algorithm~\ref{alg:poly-p2} finds an optimal allocation for $\MEW$ in polynomial time.
%\end{proof}

%The same conclusion holds for a common positive value set $\{b,a\}$ when $a/b$ is an integer or a half-integer. For a nonintegral half-integer ratio, scale by $2/b$ and apply Theorem~\ref{thm:g-exact}; for an integer ratio, scale by $1/b$ and use the algorithm of \citet{CVZ23a}. Equal values are immediate.

\subsection[Approximability of the case p at least 3]{Approximability of the case $p\ge3$}\label{sec:g-approximation}

In this section, we consider $\{p,q\}$-additive valuations with $q>p>3$. We first show that $\MEW$ is APX-hard in this setting, by using a reduction from $p$-dimensional matching ($p$-DM), which is $\np$-complete for $p\ge 3$~\citep{GareyJohnson1979}, and defined as follows.  %while also establishing that it admits a constant-factor approximation algorithm.
\begin{definition}[$p$-DM]
Given disjoint sets $V_1,\dots,V_p$, each of size $n'$, and a set $E\subseteq V_1\times\dots\times V_p$ of hyperedges, decide whether there are $n'$ pairwise disjoint hyperedges in $E$ (a \emph{perfect matching}).
\end{definition}
 
%For $p>3$, NP-completeness follows by adding $p-3$ copies of $V_3$ and extending every hyperedge by the corresponding copies of its $V_3$-vertex. We may assume $|E|\ge n'$ and that every vertex lies in at least one hyperedge, since otherwise the answer is trivially ``no''.
 
\begin{theorem}[Inapproximability]\label{thm:hard}
For all coprime integers $q>p\ge3$, it is $\np$-hard to approximate the maximum egalitarian welfare with values in $\{p,q\}$ within any factor $\rho<pq/(pq-1)$. %In particular, the problem admits no PTAS unless $\mathrm{P}=\mathrm{NP}$.
\end{theorem}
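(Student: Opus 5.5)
We reduce from $p$-DM, which is $\np$-complete for every fixed $p\ge3$, and build an instance whose optimum is exactly $pq$ on yes-instances and at most $pq-1$ on no-instances. Since all values are integers, this gap rules out any factor $\rho<pq/(pq-1)$ unless $\p=\np$.

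\textbf{Construction.} Take an instance $V_1,\dots,V_p$, $E$ with $|V_\ell|=n'$ and $|E|=e\ge n'$. The instance has:
\begin{itemize}
\item one agent $a_h$ for every hyperedge $h\in E$;
\item one vertex good for every $x\in V_1\cup\dots\cup V_p$;
\item $q(e-n')$ dummy goods.
\end{itemize}
Agent $a_h$ values the $p$ vertex goods of $h$ at $q$ and every other good (all other vertex goods and all dummies) at $p$. Thus every agent has exactly $p$ high goods. The total number of goods is $m=pn'+q(e-n')$, and the target is $T=pq$.

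\textbf{Completeness.} Let $h_1,\dots,h_{n'}$ be a perfect matching. Give each $a_{h_j}$ its $p$ vertex goods, worth $pq$. Give each of the remaining $e-n'$ agents $q$ dummy goods, worth $qp$. Every agent then has value $pq$, so $\OPT\ge pq$.

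\textbf{Soundness via counting.} Suppose some allocation gives every agent value at least $pq$. Let agent $a_h$ receive $a$ high goods and $b$ low goods. Since $a_h$ has only $p$ high goods, $a\le p$. The value condition $aq+bp\ge pq$ gives $b\ge q(p-a)/p$, hence
\[
a+b\;\ge\; q-\frac{q-p}{p}\,a .
\]
Sum this over all $e$ agents. High goods are vertex goods, each assigned to at most one agent, so $\sum a\le pn'$. Because $q>p$, the coefficient $-(q-p)/p$ is negative, and we obtain
\[
m=\sum(a+b)\;\ge\; eq-\frac{q-p}{p}\,pn' \;=\; eq-(q-p)n' \;=\; pn'+q(e-n')=m .
\]
So equality holds throughout. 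First, $\sum a=pn'$, meaning every vertex good is high for the agent holding it. Second, every bundle satisfies $aq+bp=pq$ exactly. From $aq\equiv0\pmod p$ and $\gcd(p,q)=1$, we get $p\mid a$, so $a\in\{0,p\}$.

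The agents with $a=p$ each hold all $p$ vertex goods of their own hyperedge. These hyperedges are therefore pairwise disjoint, and since they cover all $pn'$ vertex goods, there are exactly $n'$ of them. Hence they form a perfect matching.

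Consequently, on no-instances every allocation leaves some agent below $pq$, so $\OPT\le pq-1$ by integrality. On yes-instances $\OPT=pq$. An algorithm with factor $\rho<pq/(pq-1)$ would return welfare $>pq-1$ on yes-instances and thus decide $p$-DM.

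\textbf{Main obstacle.} The delicate step is the counting argument. It relies on two facts that must be checked carefully: the cap $a\le p$ from the agents' high sets, and the exact good count $m$. Together these force every bundle to be tight, and coprimality then pins $a$ to $\{0,p\}$. I would also double-check that no bundle can exceed $pq$ without wasting a good; this follows from the same equality, because any strict inequality in one bundle would make the total exceed $m$.
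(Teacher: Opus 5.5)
Your proposal is correct and follows the paper's proof. It uses the same $p$-DM reduction and the same completeness allocation, and your per-agent good-count inequality $a+b\ge q-\frac{q-p}{p}a$ is just $aq+bp\ge pq$ rearranged, so summing it is equivalent to the paper's total-welfare bound $\sum_i v_i(A_i)\le pq\,m'$, after which the same coprimality step forces $a\in\{0,p\}$ and yields a perfect matching. As in the paper, the assumption $|E|\ge n'$ is harmless, since otherwise the $p$-DM instance is trivially a no-instance.
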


\begin{proof}
Given a $p$-DM instance $\cI$ with $|E|=m'$ hyperedges, build the following $\MEW$ instance $\cJ$.
\begin{itemize}%[nosep]
  \item \emph{Agents:} one agent $i_e$ for each hyperedge $e\in E$; so $n=m'$.
  \item \emph{Goods:} one \emph{vertex good} $g_v$ for each vertex $v\in V_1\cup\dots\cup V_p$, and $q(m'-n')$ \emph{dummy goods}.
  \item \emph{Values:} $v_{i_e}(g_v)=q$ if $v\in e$; every other value is $p$.
\end{itemize}
This instance $\cJ$ has good values in $\{p,q\}$, and its size is polynomial. 
%\begin{lemma}\label{lem:reduction}
Let $\OPT$ be its optimal value. It suffices to prove that if $\cI$ has a perfect matching then $\OPT=pq$; otherwise $\OPT\le pq-1$.
%\end{lemma}

Suppose that $\cI$ has a perfect matching  $F\subseteq E$. Give each agent $i_e$ with $e\in F$ its $p$ incident vertex goods, worth $pq$ to it. The $pn'$ vertex goods are thereby all allocated. Give each of the remaining $m'-n'$ agents $q$ dummy goods, worth $pq$. This uses all $q(m'-n')$ dummies, and every agent has utility exactly $pq$, so $\OPT\ge pq$. The upper bound $\OPT\le pq$ follows from the averaging argument below.
 
Every vertex good is worth at most $q$ to anyone and every dummy is worth $p$ to everyone. Hence for every allocation $A$,
\begin{equation}\label{eq:sum}
  {\sum}_{i\in N}v_i(A_i)\;\le\;pn'\cdot q+q(m'-n')\cdot p\;=\;pq\,m',
\end{equation}
with equality only if every vertex good is given to an agent that values it $q$. So the average utility is at most $pq$, and $\OPT\le pq$.
 
Suppose that $\cI$ has no perfect matching, and that some allocation $A$ has $\EW(A)\ge pq$. By~\eqref{eq:sum}, equality holds throughout: every agent has utility exactly $pq$, and every vertex good is held by an incident agent. It is seen that if $qk+p\ell=pq$ with integers $0\le k\le p$ and $\ell\ge0$, then $k\in\{0,p\}$. Agent $i_e$ values high only its $p$ incident vertex goods, so $k\le p$ and the claim applies: each agent holds either none or all $p$ of its incident vertex goods. Since all $pn'$ vertex goods are held by incident agents, exactly $n'$ agents hold all $p$ of theirs, and their hyperedges are pairwise disjoint and cover every vertex, i.e.\ they form a perfect matching, a contradiction. Hence $\OPT<pq$, and since utilities are integers, $\OPT\le pq-1$.~
\end{proof}

%\begin{proof}
%Suppose a polynomial-time algorithm always returns an allocation with $\EW\ge\OPT/\rho$ where $\rho<pq/(pq-1)$. On an instance from the construction, if a perfect matching exists it returns $\EW\ge pq/\rho>pq-1$, i.e.\ $\EW\ge pq$ (utilities are integers). If none exists, $\EW\le\OPT\le pq-1$. Comparing the returned welfare with $pq$ decides $p$-DM.
%\end{proof}

%\subsubsection{A $\sqrt{3/2}$-Approximation Algorithm}\label{sec:alg}
%======================================================================
We now to present a constant approximation for $\MEW$.
\begin{theorem}\label{thm:rounding}
    There is a $\sqrt{3/2}$-approximation algorithm for $\MEW$ with $\{p,q\}$-additive valuations, where $3<p<q$.
\end{theorem}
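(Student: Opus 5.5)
The plan is to combine two polynomial-time procedures and return the better allocation. The factor $\sqrt{3/2}$ should come out of balancing their guarantees. As in Theorem~\ref{thm:main_q3}, we guess a target $T$ by binary search. By scaling we may assume valuations lie in $\{p,q\}$ with ratio $r=q/p>1$.

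The first procedure is a flow relaxation of Lemma~\ref{lem:char}. Call a target \emph{fractionally feasible} if the ceiling-free cost
\[
\Phi^\ast(k)=k+\frac{(T-qk)^+}{p}
\]
satisfies $\sum_i \Phi^\ast(k_i)\le m$ for some high assignment. Since $\Phi^\ast$ saves exactly $r-1$ goods per high good up to $\lfloor T/q\rfloor$ high goods, the maximum saving is a bipartite $b$-matching. Each agent gets capacity $\lfloor T/q\rfloor$, plus possibly one extra fractional high good worth the residual saving. This is solvable by max-flow with integral optimum, and it upper-bounds $\OPT$: if $\OPT\ge T$, the target is fractionally feasible. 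From an integral optimal flow, each agent $i$ takes its $k_i$ high goods and then $\lfloor (T-qk_i)^+/p\rfloor$ low goods, which uses at most $m$ goods. Every agent therefore receives at least $T-p$, and in general at least $T-\min\{p,\,T-q\lfloor T/q\rfloor\}$. The guarantee is $(T-p)/T$, which is good when $T$ is large relative to $p$, i.e.\ when agents receive many goods in $\OPT$.

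The second procedure handles the regime where $\OPT$ is small, say when every agent in $\OPT$ receives only a few goods. In that regime the structure is combinatorial. Each agent is satisfied by one of a bounded number of patterns: one high good; two low goods; one high plus one low; three lows; and so on. I would handle it by a matching argument. Compute a maximum matching giving agents high goods, then fill unmatched agents with pairs of arbitrary goods. This guarantees $\min\{q,2p\}$ whenever $\OPT\le \max\{q,2p\}$-type thresholds hold. More precisely, I would enumerate the few candidate values $\OPT\in\{p,q,2p,p+q,2q,3p,\dots\}$ below a constant number of goods and compare each with what the matching-based allocation certifies.

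The main obstacle is the case analysis that merges the two bounds into exactly $\sqrt{3/2}$. Write $\OPT=T$. Procedure one gives at least $T-p$, which is good once $T\ge cp$. Procedure two gives a ratio governed by $q$ versus $2p$ and by $2p$ versus $3p$. The worst case should be an instance where $\OPT$ uses three low goods ($T=3p$) and the best cheap certificate is two goods; the alternating choice between $q$ and $2p$ then leaves a ratio like $\max\{3p/q,\,q/(2p)\}$. This is minimized at $q=\sqrt{6}\,p$, giving $\sqrt{3/2}$. I would first verify this extremal trade-off, then check that every other regime of $T/p$ is dominated by it, possibly refining the rounding in procedure one by giving each agent one extra rounded-up good when leftovers permit. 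I am least certain that the large-$T$ rounding loss never exceeds $\sqrt{3/2}$ for intermediate $T\in(3p,6p)$. That range will need the most care, perhaps a third hybrid rule there.
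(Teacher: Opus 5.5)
\textbf{There is a genuine gap: the proposal does not prove the theorem.} Your Procedure one is sound as far as it goes. The ceiling-free savings ($r-1$ repeated, then the residual, then $-1$) are concave, so the relaxation is a min-cost flow; it upper-bounds $\OPT$; and rounding down loses less than $p$ per agent. But an additive loss of $p$ only gives the ratio $\OPT/(\OPT-p)$, which is at most $\sqrt{3/2}$ exactly when $\OPT\ge(3+\sqrt6)p\approx 5.45p$.

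Everything below that threshold is left to Procedure two, which you never specify or analyze. A maximum matching of high goods followed by ``pairs of arbitrary goods'' certifies only bundles of at most two goods. In this regime optimal bundles may need three to five goods. For example, with $q$ close to $p$ and $\OPT=4p$, two goods are worth at most $2q\approx 2p$, and Procedure one guarantees only about $3p$. So even the better of the two is off by roughly $4/3>\sqrt{3/2}$.

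This regime also contains the hard instances of Theorem~\ref{thm:hard}. For instance, with $p=3$ and $q\in\{4,5\}$, $\OPT=pq$ is $4p$ or $5p$. So Procedure two cannot resolve the bundle patterns exactly there unless $\p=\np$, and it would need a genuine approximation argument of its own.

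Finally, your balancing step is inverted. The guarantee must hold for every $q$, so minimizing $\max\{3p/q,\,q/(2p)\}$ over $q$ gives a best case, not a bound; at $q=2p$ the expression is $3/2$. You would need a minimum of two \emph{proven} guarantees, maximized over $q$, plus a proof that $\OPT=3p$ is extremal. You flag the range $(3p,6p)$ yourself, but the uncovered range is in fact all of $\OPT<(3+\sqrt6)p$.

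\textbf{The missing idea} is that no case analysis on $\OPT/p$ is needed. Use the exact algorithms for integer and half-integer ratios as black boxes, as follows.
\begin{enumerate}
\item Normalize to values $\{1,r\}$ with $r=q/p$.
\item If $2r\in\mathbb N$, solve exactly: use Theorem~\ref{thm:main_q3} after doubling, or the $\{1,c\}$ algorithm of \citet{CVZ23a}.
\item Otherwise let $a=\lfloor 2r\rfloor/2<r<b=\lceil 2r\rceil/2$. Solve the two instances with high value $a$ and $b$ exactly, and return whichever optimal allocation $S^a$ or $S^b$ is better in the original instance.
\end{enumerate}
For every bundle, $\frac{a}{r}v_i(X)\le v_i^a(X)\le v_i(X)$ and $v_i(X)\le v_i^b(X)\le\frac{b}{r}v_i(X)$. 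Optimality in the rounded instances then gives $\EW(S^a)\ge\frac{a}{r}\OPT$ and $\EW(S^b)\ge\frac{r}{b}\OPT$. Hence the better allocation has welfare at least $\max\{\frac{a}{r},\frac{r}{b}\}\OPT\ge\sqrt{a/b}\,\OPT\ge\sqrt{2/3}\,\OPT$, using $b=a+\frac12$ and $a\ge 1$. The constant comes from the worst gap between consecutive half-integers ($a=1$, $b=\frac32$, $r=\sqrt{3/2}$), not from a trade-off at $\OPT=3p$.
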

\begin{proof}
Let $r=q/p>1$ and $\cI$ the instance with values in $\{1,r\}$. If $2r\in\mathbb{N}$, the instance can be solved exactly by
Theorem~\ref{thm:main_q3}. Otherwise, let $a=\frac{\lfloor 2r\rfloor}{2},b=\frac{\lceil 2r\rceil}{2},
$ so that $1\le a<r<b$. Let $A^*$ be an optimal allocation for $\cI$, and let $S^a$ and $S^b$ be optimal allocations for the rounded instances $\cI^a$ (which has values in $\{1,a\}$) and $\cI^b$ (which has values in $\{1,b\}$), respectively.

For every agent $i$ and bundle $X$, rounding down gives $\frac{a}{r}v_i(X)\le v_i^a(X)\le v_i(X)$. 
Consequently, $\EW(S^a)\ge \EW^a(S^a)\ge \EW^a(A^*)\ge \frac{a}{r}\OPT$. Similarly, rounding up gives
$\EW(S^b)\ge  \frac{r}{b}\OPT$. Now, returning the allocation $S$ with larger welfare in the
original instance therefore guarantees
\[
\EW(S)\ge
\max\left\{\frac ar, \frac rb\right\}\cdot\OPT \ge \sqrt{\frac ab}\cdot\OPT\ge \sqrt{\frac 23}\cdot\OPT.
\]
Finally, note that both rounded instances have integer or half-integer high
values and are solvable exactly in polynomial time by
Theorem~\ref{thm:main_q3}. Thus $S$ is found efficiently, completing the proof.~
\end{proof}

%==================================================================================
\section[Polynomial Algorithm for Mixed Additive Valuations]{Poly-time Algorithm for $\{-p,0,c\}$-Additivities}
\label{sec:additive}\label{sec:mixed}
%==================================================================================

In this section we will present a polynomial time algorithm for solving $\MEW$, assuming that each agent $i$ has a $\{-p_i,0,c_i\}$-additive valuation, where $p_i\in\{1,2\}$ and $c_i$ is a positive integer. 

\begin{theorem}\label{thm:poly-pc}\label{thm:mixed-exact}
An allocation of maximum egalitarian welfare can be found in time
polynomial in the input length for additive $\{-p_i,0,c_i\}$ valuations,
where $p_i\in\{1,2\}$ and $c_i$ is a positive integer. %The algorithm uses $O(1+\log(n(m+1)))$ maximum-weight perfect-matching computations on graphs with $O(n(m+1))$ vertices and $O(nm^2+n^2)$ edges. The number of matching computations and the graph sizes are independent of the magnitudes of the valuations.
\end{theorem}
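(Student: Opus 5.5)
We reduce to the decision version, as in Section~\ref{sec:prelim}. Fix an integer threshold $T$. Since all values are integers in $[-2m,\max_i c_i m]$, binary search needs only $O(\log(m\max_i c_i))$ tests, which is polynomial in the input length. For agent $i$, let $G_i$, $Z_i$ and $B_i$ be the items it values $c_i$, $0$ and $-p_i$, respectively. An agent holding $g$ items of $G_i$ and $b$ items of $B_i$ (plus any items of $Z_i$) has utility $c_ig-p_ib$. The condition $c_ig-p_ib\ge T$ is therefore equivalent to
\[
b\;\le\;\kappa_i(g):=\Bigl\lfloor \tfrac{c_ig-T}{p_i}\Bigr\rfloor ,
\]
with the convention that $g$ is infeasible when $\kappa_i(g)<0$. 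Items in $Z_i$ are harmless to $i$. The analogue of Lemma~\ref{lem:char} is then: $\OPT\ge T$ iff every item can be assigned to some agent so that, for every agent $i$, the number of its $B_i$-items is at most $\kappa_i$ of the number of its $G_i$-items. Goods in $G_i$ assigned to $i$ beyond what is needed never hurt.

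The plan is to encode this as a perfect matching problem, since every item must be allocated. Each item is a vertex. For each agent $i$ we build a gadget whose vertices are ``good slots'' (adjacent to the items of $G_i$) and ``absorb slots'' (adjacent to all items, but intended for chores of $i$). Every item of $Z_i\cup G_i$ may also go to a cheap catch-all slot of $i$. The increments $\kappa_i(g)-\kappa_i(g-1)$ are constant $c_i$ when $p_i=1$. When $p_i=2$ they alternate between $\lfloor c_i/2\rfloor$ and $\lceil c_i/2\rceil$, in an order depending on the parity of $T$. Thus, exactly as in the slot decomposition of Lemma~\ref{lem:form}, $\kappa_i$ becomes concave after grouping: an optional lead single slot, then pair slots, each of which unlocks exactly $c_i$ absorb slots when both of its goods are present. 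We also include $\max(\kappa_i(0),0)$ free absorb slots when $T\le0$, and a mandatory block of $\lceil T/c_i\rceil$ good slots when $T>0$.

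The gadget must enforce that an unlocked absorb slot is usable only if its good slot is filled. I will do this with the standard perfect-matching trick. Each good slot $u$ has a private partner vertex $\bar u$, and $u$ and $\bar u$ are both tied to the absorb slots it controls via auxiliary vertices. If $u$ is not matched to a good, $\bar u$ and $u$ must cover those absorb slots themselves, so no chore can use them. If $u$ takes a good, the absorb slots are released and must be covered either by chores or by unused padding vertices. Padding vertices are added for every slot that is allowed to remain empty. Mandatory slots receive no padding, so a perfect matching exists only if they are filled. Rather than getting the combinatorics of exact coverage perfect, I would attach weights, large for edges realising required structure and zero elsewhere, as in the proof of Lemma~\ref{lem:matching}. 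The claim to prove is then: a perfect matching of weight at least a computable bound exists iff a feasible allocation exists. The two directions mirror Lemma~\ref{lem:matching}. From an allocation, fill slots in parse order, using Lemma~\ref{lem:slots} to show that the optimal filling realises the capacity $\kappa_i$. From a matching, read off the allocation and bound each agent's chore count by the filling value.

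The main obstacle is the case $p_i=2$ with $c_i$ odd. There the single-slot greedy fails, because one good unlocks only $\lfloor c_i/2\rfloor$ chores while two unlock $c_i$, so $\kappa_i$ is not concave. The pair-slot gadget must simultaneously enforce this non-additive unlocking and keep the graph perfect-matchable in all legitimate configurations. I would first try to copy the weight-$1$ internal edge of Section~\ref{sec:matching_max_weight}: the internal edge $(u,u')$ absorbs one ``phantom'' unit when the pair holds at most one good. I would then verify by a case check on $0$, $1$ or $2$ goods that exactly $0$, $\lfloor c_i/2\rfloor$ or $c_i$ absorb slots are released. The graph has $O(n(m+1))$ slot vertices per agent after capping the number of slots by $m$, so the number of edges is polynomial, and Edmonds' algorithm finishes the proof.
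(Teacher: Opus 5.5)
Your reduction to the decision version and your characterization via $\kappa_i$ are fine. The gap is in the one step that carries the difficulty: the gadget in which a filled good slot (or a full pair slot) \emph{unlocks} $c_i$ absorb slots for chores, while an empty one keeps them locked. You never construct it, and matching constraints cannot build it. Your sketch already fails because $u$ and $\bar u$ together cover at most two vertices.

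More generally, take a gadget with good slots $u_1,\dots,u_r$ (adjacent outside only to goods), absorb slots $a_1,\dots,a_c$ (adjacent outside only to chores or padding), and all other vertices internal. Suppose it admits two configurations. In $M_1$, every $u_j$ is matched to a good, every $a_j$ is matched outside, and the rest is matched internally. $M_2$ is a perfect matching of the gadget using internal edges only. Inside the gadget, $M_1\triangle M_2$ consists of cycles and vertex-disjoint alternating paths. The path endpoints are exactly the $r+c$ slot vertices, and each path starts and ends with an $M_2$-edge. If $c>r$, some path $Q$ joins two absorb slots and avoids every $u_j$. Then $M_2\triangle Q$ covers the gadget with every good slot empty and two absorb slots left open for chores.

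So matching constraints can tie absorb slots to goods only one-for-one, whereas you need ratio $c_i/p_i$. Attaching large weights does not supply the missing coupling. The weight-$1$ internal edge of Section~\ref{sec:matching_max_weight} changes only the \emph{value} of a matching, not how many vertices it frees. Hence your planned case check on $0$, $1$ or $2$ goods cannot come out as you want.

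The missing idea is to keep chores out of the graph entirely. By Lemma~\ref{lem:mixed-normalization}, you may assume every good goes to an agent valuing it $c_i$ and every neutral item to an agent valuing it $0$. Then only the $H$ universal chores need absorbing. Since they are interchangeable, $T$ is feasible iff some good assignment has $g_i\ge L_i$ for all $i$ and $\sum_i f_i(g_i)\ge H$ (Lemma~\ref{lem:mixed-absorption}). Capacity thus becomes an objective rather than a constraint: $2\sum_i f_i(g_i)=\Phi(\sigma)-\beta$ (Lemma~\ref{lem:mixed-parity}). Here each good given to $i$ earns $\omega_i$, and each $i\in N_2^{\mathrm{odd}}$ with $g_i\not\equiv T\pmod 2$ loses $1$.

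The paper maximizes $\Phi$ with a single maximum-weight perfect matching whose only hard constraint is one-for-one: $U_i=d_i-L_i$ absorbers, each taking one unused port, enforce $g_i\ge L_i$. The parity loss is charged through a vertex $z_i$ joined to the absorbers by edges of weight $-1$, and the optimum is compared with $2H+\beta$. Your pair slots in the style of Section~\ref{sec:matching_max_weight} (good edges of weight $\lceil c_i/2\rceil$, internal edge of weight $1$) would in fact evaluate the non-concave $f_i$ correctly. They work only in this role, as part of the objective, not as a lock on individual chores.
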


Fix an integer threshold $T$. We reduce the decision version of $\MEW$ to the problem of finding a maximum-weight perfect matching in a suitably constructed graph. We next describe this graph and show that the weight of a maximum-weight perfect matching characterizes the feasibility of the threshold $T$.

For every item $o$, let
$G_o=\{i:v_i(o)=c_i\}$ and $Z_o=\{i:v_i(o)=0\}$. 
We classify an item as 
\begin{itemize}
    \item a \emph{good} if $G_o\ne\varnothing$,
    \item a \emph{neutral item} if $G_o=\varnothing$ and $Z_o\ne\varnothing$,
    \item a \emph{universal chore} if $G_o=Z_o=\varnothing$. 
\end{itemize}
 Thus a universal chore
has value $-p_i$ to every agent $i$. Let $H$ be the number of universal
chores. Define a bipartite incidence graph $\Gamma$ between goods and
agents, with $(o,i)$ an edge precisely when $i\in G_o$, and let $d_i$ be
the degree of agent $i$. A \emph{good assignment} is a map $\sigma$ sending
each good $o$ to an agent in $G_o$; it induces the vector
$g(\sigma)=(g_1,\ldots,g_n)$, where $g_i=|\sigma^{-1}(i)|$.

\begin{lemma}[Normalization]\label{lem:normal}\label{lem:mixed-normalization}
Every allocation can be transformed, without decreasing any agent's
value, so that each good goes to an agent in $G_o$ and each neutral item
goes to an agent in $Z_o$.
\end{lemma}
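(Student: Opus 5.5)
The plan is a direct exchange argument, applied to one item at a time; moving an item changes only the values of its old and new holders. Let $A$ be an arbitrary allocation and let $o$ be an item currently held by agent $j$. I would show that each of the following moves leaves every agent's value at least as large as before, with the new holder $i$ chosen as follows.
\begin{enumerate}
\item If $o$ is a good and $j\notin G_o$, move $o$ to some $i\in G_o$.
\item If $o$ is a neutral item and $j\notin Z_o$, move $o$ to some $i\in Z_o$.
\end{enumerate}

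For a good, since $j\notin G_o$ we have $v_j(o)\in\{0,-p_j\}$. Removing $o$ therefore changes $v_j(A_j)$ by $-v_j(o)\ge 0$, by additivity. Agent $i$ gains $c_i>0$, and every other agent is unaffected. For a neutral item, $G_o=\varnothing$, so $v_j(o)=-p_j<0$ whenever $j\notin Z_o$. Removing $o$ strictly increases $j$'s value, and giving it to $i\in Z_o$ leaves $v_i$ unchanged since $v_i(o)=0$. The universal chores are left where they are.

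Each move is made only for an item that violates the required condition, and afterwards that item satisfies it. Moving one item does not change where any other item lies, so each item is moved at most once. Hence after at most $m$ moves, which takes polynomial time, the allocation satisfies both conditions simultaneously, and by the above no agent's value has decreased at any step.

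I expect no real obstacle here; the only subtlety is that each condition must be checked against the value to the current holder. One point worth recording is that the transformation weakly increases every agent's value individually, and not merely the minimum. Consequently it preserves any threshold guarantee $\EW(A)\ge T$, which is how the lemma will be used in the matching reduction.
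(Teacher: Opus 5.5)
Your proof is correct and follows the paper's argument. You move each misplaced good to an agent in $G_o$ and each misplaced neutral item to an agent in $Z_o$, and you use additivity so that each move affects only the old and new holder and the items can be handled independently.
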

\begin{proof}
If a good $o$ is held by $h\notin G_o$, remove it from $h$ and give it to
any $j\in G_o$. Its removal cannot decrease $h$'s value, and $j$ gains
$c_j$. Similarly, a neutral item held by $h\notin Z_o$ has value $-p_h$
to $h$ and can be moved to any $j\in Z_o$. These moves treat distinct
items independently because the valuations are additive.
\end{proof}

In a normalized allocation, agent $i$ holds $g_i$ goods, each worth
$c_i$, some neutral items worth zero, and $h_i$ universal chores, each
worth $-p_i$. Hence $v_i(A_i)=c_i g_i-p_i h_i.$ 
Define
\begin{equation}\label{eq:mixed-capacity}
L_i=\max\{0,\lceil T/c_i\rceil\},\quad\text{and }~~
f_i(g_i)=\left\lfloor(c_i g_i-T)/p_i\right\rfloor.
\end{equation}
For a nonnegative integer $g_i$, the inequality $c_i g_i\ge T$ is
equivalent to $g_i\ge L_i$. In that case $f_i(g_i)\ge0$ is the
largest number of universal chores that agent $i$ can absorb while
keeping value at least $T$. These statements also hold when $T$ is
negative.

\begin{lemma}[Chore absorption]\label{lem:absorb}\label{lem:mixed-absorption}
    There is an allocation of egalitarian welfare at least $T$ if and only if there is a good assignment $\sigma$ such that, for $g=g(\sigma)$, $g_i\ge L_i$ for all $i$, and $\sum_{i=1}^n f_i(g_i)\ge H$.
\end{lemma}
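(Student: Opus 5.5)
The proof has two directions. Lemma~\ref{lem:normal} lets us pass to normalized allocations, and the value formula $v_i(A_i)=c_ig_i-p_ih_i$ then reduces everything to arithmetic on the numbers $g_i,h_i$.

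($\Rightarrow$) Take an allocation $A$ with $\EW(A)\ge T$. By Lemma~\ref{lem:normal} we may assume it is normalized, since normalizing does not decrease any value and so keeps $\EW\ge T$. Define $\sigma(o)$ as the agent holding good $o$. Because $A$ is normalized, $\sigma(o)\in G_o$, so $\sigma$ is a good assignment, with $g_i=|\sigma^{-1}(i)|$. Let $h_i$ be the number of universal chores held by $i$; these numbers partition the $H$ chores, so $\sum_i h_i=H$. From $c_ig_i-p_ih_i\ge T$ and $h_i\ge 0$ we get $c_ig_i\ge T$, hence $g_i\ge L_i$ by the equivalence noted after~\eqref{eq:mixed-capacity}. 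Since $h_i$ is an integer with $h_i\le (c_ig_i-T)/p_i$, we have $h_i\le f_i(g_i)$. Summing over $i$ gives $\sum_i f_i(g_i)\ge H$.

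($\Leftarrow$) Let $\sigma$ be a good assignment satisfying both conditions. Give each good $o$ to $\sigma(o)$, and each neutral item $o$ to an arbitrary agent in $Z_o$, which is nonempty by definition of a neutral item. Since $g_i\ge L_i$, each $f_i(g_i)$ is a nonnegative integer, and the $f_i(g_i)$ sum to at least $H$. We can therefore choose integers $0\le h_i\le f_i(g_i)$ with $\sum_i h_i=H$, for example greedily. Distribute the universal chores so that agent $i$ receives exactly $h_i$ of them. Every item is now allocated, and agent $i$ has value $c_ig_i-p_ih_i\ge c_ig_i-p_if_i(g_i)\ge T$, where the last inequality uses $p_if_i(g_i)\le c_ig_i-T$.

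The only delicate point is the bookkeeping with negative $T$. When $T\le 0$ we have $L_i=0$, the condition $g_i\ge L_i$ is vacuous, and $f_i(g_i)\ge 0$ still holds for every $g_i\ge 0$, so the argument needs no change. I expect no real obstacle here; the care is simply in checking that the floor and ceiling equivalences stated after~\eqref{eq:mixed-capacity} are used correctly in both directions.
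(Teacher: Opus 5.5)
Your proof is correct and takes the same route as the paper's. In the forward direction you normalize via Lemma~\ref{lem:normal} and read $g_i\ge L_i$ and $h_i\le f_i(g_i)$ off $c_ig_i-p_ih_i\ge T$; in the converse you assign goods by $\sigma$, put each neutral item into $Z_o$, and distribute the $H$ chores within the nonnegative capacities $f_i(g_i)$.
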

\begin{proof}
Suppose there is an allocation of welfare at least $T$, and normalize it
using Lemma~\ref{lem:mixed-normalization}. If agent $i$ receives $g_i$
goods and $h_i$ universal chores, then $c_i g_i-p_i h_i\ge T$.
Since $g_i$ is a nonnegative integer and $c_i g_i\ge T$, we have
$g_i\ge\max\{0,\lceil T/c_i\rceil\}=L_i$. Since $h_i$ is an integer
and $p_i h_i\le c_i g_i-T$, we also have $h_i\le\lfloor (c_i g_i-T)/p_i\rfloor=f_i(g_i).$ 
Every universal chore is allocated, so $\sum_i h_i=H$, proving the
necessary inequality.

Conversely, assign the goods according to $\sigma$. The capacities
$f_i(g_i)$ are nonnegative integers with sum at least $H$. Distribute
the $H$ universal chores to agents with unused capacity, obtaining
integers $0\le h_i\le f_i(g_i)$ with $\sum_i h_i=H$. Assign each
neutral item to an agent in its nonempty set $Z_o$. Each resulting
utility satisfies $c_i g_i-p_i h_i\ge T$, as required.
\end{proof}

\begin{definition}\label{def:mixed-admissible}
A good assignment $\sigma$ is \emph{$T$-admissible} if
$g_i(\sigma)\ge L_i$ for every agent $i$.
\end{definition}

Partition the agents into $N_1=\{i:p_i=1\},$ $N_2^{\mathrm{even}}=\{i:p_i=2,\ c_i\text{ is even}\}$, $N_2^{\mathrm{odd}}=\{i:p_i=2,\ c_i\text{ is odd}\}$. 
Set $\omega_i=2c_i$ for $i\in N_1$ and $\omega_i=c_i$ for $p_i=2$.
Intuitively, $\omega_i$ measures how much assigning one good to agent
$i$ contributes to the doubled chore-absorption capacity, before the
fixed and parity corrections. Define
\begin{align}
&\beta
=2T\cdot|N_1|+T\cdot\bigl(|N_2^{\mathrm{even}}|+|N_2^{\mathrm{odd}}|\bigr)
  +\1[T\text{ is odd}]\cdot|N_2^{\mathrm{even}}|,
  \label{eq:mixed-beta}\\
&\Phi(\sigma)
={\sum}_{o\text{ good}}\omega_{\sigma(o)}
  -{\sum}_{i\in N_2^{\mathrm{odd}}}
      \1[g_i\not\equiv T\pmod2].
  \label{eq:mixed-phi}
\end{align}
Here $\1[\cdot]$ denotes an indicator.

\begin{lemma}[Parity identity]\label{lem:parity}\label{lem:mixed-parity}
    For every good assignment $\sigma$ and every integer $T$, $2\sum_{i=1}^n f_i(g_i)=\Phi(\sigma)-\beta(T).$ 
\end{lemma}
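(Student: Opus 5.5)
The identity is a per-agent computation that we then sum over agents. First rewrite the good-sum in $\Phi(\sigma)$ agent by agent. Each good $o$ contributes $\omega_{\sigma(o)}$, and agent $i$ receives exactly $g_i$ goods. Hence
\[
\textstyle\sum_{o\text{ good}}\omega_{\sigma(o)}=\sum_{i=1}^n\omega_i g_i .
\]
So both sides of the claimed identity decompose into per-agent terms. It then suffices to show that $2f_i(g_i)$ equals $\omega_i g_i$ minus that agent's share of $\beta$ and of the parity correction in~\eqref{eq:mixed-phi}. The only tool needed is the elementary fact that for every integer $x$ (including negative $x$, since the lemma is claimed for every $T$ and every assignment, not only $T$-admissible ones),
\[
2\lfloor x/2\rfloor = x-\1[x\text{ is odd}].
\]

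\emph{Agents in $N_1$.} Here $p_i=1$, so $f_i(g_i)=c_ig_i-T$ exactly, with no floor. Therefore $2f_i(g_i)=2c_ig_i-2T=\omega_ig_i-2T$. The term $-2T$ is exactly this agent's contribution $2T$ to $\beta$, and such agents do not appear in the parity sum.

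\emph{Agents with $p_i=2$.} Here $f_i(g_i)=\lfloor (c_ig_i-T)/2\rfloor$. The fact above gives
\[
2f_i(g_i)=c_ig_i-T-\1[c_ig_i-T\text{ odd}]=\omega_ig_i-T-\1[c_ig_i\not\equiv T \pmod 2].
\]
We then split on the parity of $c_i$.
\begin{itemize}
\item If $c_i$ is even, then $c_ig_i$ is even, so the indicator equals $\1[T\text{ odd}]$ regardless of $g_i$. The correction is therefore the constant $T+\1[T\text{ odd}]$, which is exactly the contribution of $i\in N_2^{\mathrm{even}}$ to $\beta$.
\item If $c_i$ is odd, then $c_ig_i\equiv g_i\pmod 2$, so the indicator becomes $\1[g_i\not\equiv T\pmod 2]$. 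This is precisely agent $i$'s term in the subtracted sum of~\eqref{eq:mixed-phi}, while the constant $-T$ matches its contribution to $\beta$.
\end{itemize}

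Summing the three cases over all agents and regrouping the constants into $\beta$ yields $2\sum_i f_i(g_i)=\Phi(\sigma)-\beta$. There is no real obstacle here. The only points needing care are using the floor identity in a form valid for negative arguments and observing that the parity of $c_ig_i-T$ becomes a constant when $c_i$ is even but depends on $g_i$ when $c_i$ is odd. This dependence is why the odd-$c_i$ agents carry a $\sigma$-dependent correction inside $\Phi$, whereas the even ones are absorbed into the constant $\beta$.
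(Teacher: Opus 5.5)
Your proof is correct and follows essentially the same route as the paper: a per-agent computation using $2\lfloor a/2\rfloor=a-\1[a\text{ odd}]$, the split of the $p_i=2$ agents by the parity of $c_i$, and the regrouping $\sum_{o}\omega_{\sigma(o)}=\sum_i\omega_i g_i$ to recover $\Phi(\sigma)-\beta$. Your check that each agent's constant matches its share of $\beta$ is exactly the bookkeeping the paper leaves to ``sum these formulas.''
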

\begin{proof}
For $i\in N_1$, we have $2f_i(g_i)=2c_i g_i-2T$. For $p_i=2$, the
identity $2\lfloor a/2\rfloor=a-\1\{a\text{ is odd}\}$,
valid for every integer $a$, gives
\begin{equation}\label{eq:indicator}
2f_i(g_i)=c_i g_i-T-
\1[c_i g_i-T\text{ is odd}].
\end{equation}
For $i\in N_2^{\mathrm{even}}$, $c_i$ is even, so the indicator equals
$\1[T\text{ is odd}]$. For $i\in N_2^{\mathrm{odd}}$,
$c_i$ is odd, so it equals $\1[g_i\not\equiv T\pmod2]$.
Sum these formulas and use
$\sum_i\omega_i g_i=\sum_{o\text{ good}}\omega_{\sigma(o)}$.
\end{proof}

By Lemmas~\ref{lem:mixed-absorption} and~\ref{lem:mixed-parity}, there
is an allocation of welfare at least $T$ precisely when some
$T$-admissible good assignment satisfies $\Phi(\sigma)\ge=2H+\beta.$
%\begin{equation}\label{eq:threshold}
%\end{equation}

\paragraph{Maximizing $\Phi$ by a weighted perfect matching.}
For every agent $i$, write $U_i=d_i-L_i$. The number
$U_i$ is the maximum number of incident goods that may be assigned to
other agents while agent $i$ still receives at least $L_i$ goods. If
some $U_i<0$, no $T$-admissible assignment exists. Otherwise construct
a graph $\mathcal H$ as follows, with all unspecified edge weights zero.
\begin{enumerate}[leftmargin=*,itemsep=3pt]
    \item \emph{Good vertices.} For every good $o$, add a vertex $x_o$.
    \item \emph{Port vertices.} For every incidence edge $(o,i)$, add a port $q_{i,o}$ and an edge $x_o q_{i,o}$ of weight $\omega_i$.
    \item \emph{Absorber vertices.} For agent $i$, add $U_i$ ordinary absorbers $a_{i,1},\ldots,a_{i,U_i}$ and join every port of $i$ to every ordinary absorber of $i$. If $i\in N_2^{\mathrm{odd}}$ and $L_i\not\equiv T\pmod2$, also add an extra absorber $a_{i,0}$, with no edges to ports. Denote all absorbers of $i$ by $\mathcal A_i$ and make $\mathcal A_i$ a clique.
    \item \emph{Parity vertices.} For each agent $i$, add $z_i$ adjacent to every vertex of $\mathcal A_i$. Give these edges weight $-1$ for $i\in N_2^{\mathrm{odd}}$ and zero otherwise.
    \item \emph{Extra vertices.} If the number of vertices constructed so far is odd, add one more vertex $z_0$. Make $Z=\{z_i:i\in N\}$, together with $z_0$ if present, a clique.
\end{enumerate}
The graph has $O(n(m+1))$ vertices and $O(nm^2+n^2)$ edges, including
the clique on $Z$. It has an even number of vertices. Whether it has a
perfect matching is characterized by Lemma~\ref{lem:mixed-matching}
below.

\begin{example}[The gadget graph $\mathcal H$]\label{ex:gadget}\label{ex:mixed-gadget}
Consider two agents with values in $\{-2,0,3\}$, so that
$p_1=p_2=2$, $c_1=c_2=3$, both agents belong to $N_2^{\mathrm{odd}}$,
and $\omega_1=\omega_2=3$. There are four items:
\begin{center}
\small
\begin{tabular}{lrrrr}
\toprule
 & $o_1$ & $o_2$ & $o_3$ & $h$\\
\midrule
agent 1 & 3 & 3 & 0 & $-2$\\
agent 2 & 3 & $-2$ & 3 & $-2$\\
\bottomrule
\end{tabular}
\end{center}
Thus $o_1,o_2,o_3$ are goods and $h$ is the only universal chore. Agent 1 is adjacent to $o_1,o_2$ and agent 2 to $o_1,o_3$ in $\Gamma$, so $d_1=d_2=2$. The optimum is 3, attained by assigning $\{o_1,o_2,h\}$ to agent 1, of value 4, and $\{o_3\}$ to agent 2, of value 3. For an upper bound, normalize any allocation. With only three goods, one agent receives at most one good and consequently has utility at most 3.

For $T=2$, we have $L_i=\lceil2/3\rceil=1$ and $U_i=1$. Since $L_i\not\equiv T\pmod2$, each agent receives an extra absorber $a_{i,0}$ not adjacent to any port. There are three good vertices, four ports, four absorbers, and the two parity vertices $z_1,z_2$, giving 13 vertices; hence $z_0$ is added. Figure~\ref{fig:mixed-gadget} shows the resulting graph. All good--port edges have weight 3, all edges from $z_i$ to its absorbers have weight $-1$, and all other edges have weight zero.
\end{example}

\begin{figure}[t]
\centering
\begin{tikzpicture}[
  every node/.style={draw, rounded corners=2pt, minimum width=1.25cm, minimum height=0.62cm, inner sep=1pt, font=\small},
  good/.style={fill=teal!12},
  port/.style={fill=violet!12},
  absb/.style={fill=gray!12},
  extra/.style={fill=gray!12, dashed},
  par/.style={fill=orange!15},
  e/.style={draw=gray!70, thin},
  m/.style={draw=orange!80!black, line width=1.6pt},
  neg/.style={draw=red!60!black, thin, dashed}
]
\node[good] (o2) at (-3.0, 3.2) {$x_{o_2}$};
\node[good] (o1) at ( 0.0, 3.2) {$x_{o_1}$};
\node[good] (o3) at ( 3.0, 3.2) {$x_{o_3}$};
\node[port] (q12) at (-4.2, 1.8) {$q_{1,o_2}$};
\node[port] (q11) at (-2.0, 1.8) {$q_{1,o_1}$};
\node[port] (q21) at ( 2.0, 1.8) {$q_{2,o_1}$};
\node[port] (q23) at ( 4.2, 1.8) {$q_{2,o_3}$};
\node[absb]  (a11) at (-3.5, 0.3) {$a_{1,1}$};
\node[extra] (a10) at (-1.5, 0.3) {$a_{1,0}$};
\node[absb]  (a21) at ( 1.5, 0.3) {$a_{2,1}$};
\node[extra] (a20) at ( 3.5, 0.3) {$a_{2,0}$};
\node[par] (z1) at (-2.5,-1.3) {$z_1$};
\node[par] (z0) at ( 0.0,-1.3) {$z_0$};
\node[par] (z2) at ( 2.5,-1.3) {$z_2$};
\draw[m] (o2) -- (q12);
\draw[m] (o1) -- (q11);
\draw[e] (o1) -- (q21);
\draw[m] (o3) -- (q23);
\draw[e] (q12) -- (a11);
\draw[e] (q11) -- (a11);
\draw[m] (q21) -- (a21);
\draw[e] (q23) -- (a21);
\draw[m] (a11) -- (a10);
\draw[e] (a21) -- (a20);
\draw[neg] (z1) -- (a11);
\draw[neg] (z1) -- (a10);
\draw[neg] (z2) -- (a21);
\draw[m,dashed] (z2) -- (a20);
\draw[m] (z1) -- (z0);
\draw[e] (z0) -- (z2);
\draw[e] (z1) to[bend right=25] (z2);
\draw[gray, dashed, rounded corners=6pt] (-5.1,-0.25) rectangle (-0.7,2.35);
\draw[gray, dashed, rounded corners=6pt] ( 0.7,-0.25) rectangle ( 5.1,2.35);
\node[draw=none, font=\footnotesize, anchor=north west, minimum width=0pt] at (-5.1,-0.3) {agent $1$};
\node[draw=none, font=\footnotesize, anchor=north east, minimum width=0pt] at ( 5.1,-0.3) {agent $2$};
\end{tikzpicture}
\caption{The graph $\mathcal H$ of Example~\ref{ex:mixed-gadget}. 
 Good--port edges have weight $3$, dashed edges have weight $-1$, and all other edges have weight $0$. Small dashed boxes mark the extra absorbers $a_{1,0},a_{2,0}$. 
Thick edges form the maximum-weight perfect matching of weight $8$.}
\label{fig:mixed-gadget}
\end{figure}

\begin{lemma}[Matching reduction]\label{lem:matching_1}\label{lem:mixed-matching}
The graph $\mathcal H$ has a perfect matching if and only if a $T$-admissible
good assignment exists. Whenever either exists, the maximum weight of
a perfect matching is the maximum of $\Phi(\sigma)$ over all such
assignments.
\end{lemma}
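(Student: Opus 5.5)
We will prove the lemma by a two-way correspondence between perfect matchings of $\mathcal H$ and $T$-admissible good assignments that preserves weight up to the stated value. The structural fact driving everything is local to each agent $i$. Let $\pi_i=d_i$ be the number of ports of $i$ and let $r_i$ be the number of those ports matched to good vertices. Each remaining port of $i$ can only be matched to an ordinary absorber of $i$, because ports are adjacent only to their good and to the ordinary absorbers of their agent. Hence $d_i-r_i\le U_i$, i.e.\ $r_i\ge L_i$. This is exactly the admissibility condition, once we set $\sigma(o)=i$ when $x_o$ is matched to $q_{i,o}$. Every good vertex is adjacent only to ports, so in a perfect matching each $x_o$ is matched to exactly one port, and $\sigma$ is a well-defined good assignment with $g_i=r_i$.

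\emph{Direction from matchings to assignments.} Take a perfect matching $Y$ and define $\sigma$ as above; we have just seen that it is $T$-admissible. For the weight, the good--port edges contribute $\sum_o\omega_{\sigma(o)}$, since each good uses exactly one such edge. The only other nonzero weights are the $-1$ edges from $z_i$ to $\mathcal A_i$ for $i\in N_2^{\mathrm{odd}}$.

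The parity bookkeeping runs as follows. Of the $U_i$ ordinary absorbers, $d_i-g_i$ are matched to ports. The remaining $U_i-(d_i-g_i)=g_i-L_i$ ordinary absorbers, plus the extra absorber if present, must be matched inside the clique $\mathcal A_i$ or to $z_i$, since absorbers have no other neighbours. Let $e_i=\1[i\in N_2^{\mathrm{odd}},\,L_i\not\equiv T \pmod 2]$. The number of leftover absorbers is then $g_i-L_i+e_i$. For $i\in N_2^{\mathrm{odd}}$ this number is odd exactly when $g_i\not\equiv T\pmod 2$. Whenever the leftover number is odd, $z_i$ must be matched into $\mathcal A_i$; for $i\in N_2^{\mathrm{odd}}$ this costs $-1$. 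Therefore $w(Y)\le\sum_o\omega_{\sigma(o)}-\sum_{i\in N_2^{\mathrm{odd}}}\1[g_i\not\equiv T\pmod 2]=\Phi(\sigma)$. This also covers agents outside $N_2^{\mathrm{odd}}$, whose $z_i$-edges have weight $0$ and who have no extra absorber.

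\emph{Direction from assignments to matchings.} Given a $T$-admissible $\sigma$, we build a perfect matching of weight exactly $\Phi(\sigma)$. Match $x_o$ to $q_{\sigma(o),o}$. For each agent $i$, match its $d_i-g_i\le U_i$ free ports injectively to ordinary absorbers; this is possible because every port is adjacent to every ordinary absorber. Pair up the leftover absorbers inside the clique $\mathcal A_i$. If their number is odd, match one of them to $z_i$, paying $-1$ only when $i\in N_2^{\mathrm{odd}}$ and $g_i\not\equiv T$. Finally, the unmatched parity vertices, together with $z_0$ if present, lie in the clique on $Z$. Their number is even because $\mathcal H$ has an even number of vertices and all other vertices are already matched. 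So they can be paired at weight zero.

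Combining the two directions, a perfect matching exists iff a $T$-admissible assignment exists, and the maximum weight is $\max\Phi(\sigma)$. The main obstacle, and the step to check carefully, is the parity count of leftover absorbers. It must be verified that the extra absorber $a_{i,0}$ is introduced exactly when it converts the parity of $g_i-L_i$ into the parity of $g_i-T$. It must also be verified that the forced $z_i$-edge is the only source of the $-1$ penalty, so that the upper bound in the first direction is tight.
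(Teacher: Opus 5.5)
Your proposal is correct and follows essentially the same argument as the paper. It uses the same count of leftover absorbers, $g_i-L_i+\1[a_{i,0}\text{ present}]$, which forces $z_i$ into $\mathcal A_i$ exactly when $g_i\not\equiv T\pmod 2$, and the same explicit construction in which the remaining parity vertices are paired inside the clique on $Z$. The one small difference is that you only derive $w(Y)\le\Phi(\sigma)$ in the first direction; this suffices together with your second direction, and the paper in fact gets equality there because $z_i$ covers at most one absorber.
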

\begin{proof}
First consider the parity mechanism for one agent $i$ in a perfect
matching. Suppose $u_i$ of its ports are matched to absorbers. Ports
reach only $a_{i,1},\ldots,a_{i,U_i}$, so $u_i\le U_i$. The remaining
$\ell_i=|\mathcal A_i|-u_i$ absorbers can be matched only inside the clique
$\mathcal A_i$ or to $z_i$, and $z_i$ covers at most one absorber. Therefore
$z_i$ is matched into $\mathcal A_i$ if and only if $\ell_i$ is odd.

Let $g_i=d_i-u_i$ be the number of ports of agent $i$ matched to goods.
For $i\in N_2^{\mathrm{odd}}$, if $a_{i,0}$ is absent, then
$L_i\equiv T\pmod2$ and
$\ell_i=U_i-u_i=g_i-L_i$. If $a_{i,0}$ is present, then
$L_i\not\equiv T\pmod2$ and
$\ell_i=U_i+1-u_i=g_i-L_i+1$. In both cases,
\begin{equation}\label{eq:odd}
\ell_i\text{ is odd}\quad\Longleftrightarrow\quad
g_i\not\equiv T\pmod2,
\qquad(i\in N_2^{\mathrm{odd}}).
\end{equation}

\emph{From matchings to assignments.}
Let $\mathcal M$ be a perfect matching. Every $x_o$ is adjacent only
to ports $q_{i,o}$ with $i\in G_o$. Set $\sigma(o)=i$ when $x_o$ is
matched to $q_{i,o}$. This defines a good assignment. Every port is
adjacent only to its good and its agent's ordinary absorbers, so the
ports of agent $i$ matched to goods correspond exactly to
$\sigma^{-1}(i)$. Consequently, $g_i(\sigma)=d_i-u_i\ge d_i-U_i=L_i$, 
and $\sigma$ is $T$-admissible. The weight of $\mathcal M$ is
$\sum_o\omega_{\sigma(o)}$ minus the number of
$i\in N_2^{\mathrm{odd}}$ whose vertex $z_i$ is matched into $\mathcal A_i$.
By~\eqref{eq:odd}, this weight is $\Phi(\sigma)$.

\emph{From assignments to matchings.}
Let $\sigma$ be $T$-admissible. Match each $x_o$ to $q_{\sigma(o),o}$.
For each agent $i$, match its $u_i=d_i-g_i\le U_i$ remaining ports to
distinct ordinary absorbers. If the number $\ell_i$ of remaining
absorbers is odd, match one of them to $z_i$; pair the rest inside
$\mathcal A_i$. Every vertex outside $Z$ is now covered. Since every matching
edge covers two vertices and the number of vertices in $\mathcal H$ is even, the number of
uncovered vertices in $Z$ is even. They can be paired because $Z$ is
a clique. The resulting perfect matching has weight $\Phi(\sigma)$
by~\eqref{eq:odd}. The two constructions prove both
claims.
\end{proof}

By Lemma~\ref{lem:mixed-absorption} and \ref{lem:mixed-matching}, the proof of Theorem~\ref{thm:mixed-exact} is followed. 

\paragraph{Continuation of Example~\ref{ex:mixed-gadget}.} The highlighted matching in Figure~\ref{fig:mixed-gadget} corresponds to $\sigma(o_1)=1$, $\sigma(o_2)=1$, and $\sigma(o_3)=2$. Agent 1 uses both ports, so its two absorbers are matched to each other. Since $g_1=2\equiv T\pmod2$, it incurs no penalty. Agent 2 does not use $q_{2,o_1}$, which is matched to $a_{2,1}$; the leftover absorber $a_{2,0}$ is matched to $z_2$ at cost $-1$, reflecting $g_2=1\not\equiv T\pmod2$. Finally, $z_0$ is matched to $z_1$. The weight is $3+3+3-1=8=\Phi_2(\sigma)$. Any admissible assignment has goods counts $(2,1)$ or $(1,2)$, so this is the maximum. As $\beta(2)=2|N_2^{\mathrm{odd}}|=4$ and $\alpha(2)=2H+\beta(2)=6\le8$, the threshold $T=2$ is feasible.

\section[Hardness for Ternary Submodular Valuations]{Hardness for $\{-2,0,c\}$-Submodular Valuations}
\label{sec:submod}
%==================================================================================

In this section, we consider $\{-2,0,c\}$-submodular valuations. In contrast to the additive case, we show that $\MEW$ is strongly $\np$-hard for this class of valuations. %Moreover, it is impossible to approximate $\MEW$ within an additive error strictly smaller than $1$, unless $\p=\np$.

\begin{theorem}
\label{thm:submod}\label{thm:submod-odd}
Let $c\geq1$ be a fixed odd integer. For polynomial-time computable
$\{-2,0,c\}$-submodular valuations, it is $\np$-hard to distinguish
instances with $\OPT=T$ from instances with $\OPT=T-1$, where
$T>1$ is a polynomially bounded integer. This holds even when all
agents but one are additive. Consequently, unless $\p=\np$, this
class admits neither an FPTAS nor a polynomial-time algorithm returning
an allocation with additive error strictly smaller than one.

%In the value-oracle model, exact optimization and finding an allocation with additive error strictly smaller than one require $2^{\Omega(n)}=2^{\Omega(\sqrt m)}$ queries on a family with $n$ agents and $m=\Theta(n^2)$ items, again with only one nonadditive agent. This lower bound also holds in expected query complexity for randomized algorithms that succeed with probability at least $2/3$ on every instance.
\end{theorem}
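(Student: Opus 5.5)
The plan is a polynomial-time reduction from an $\np$-complete matching problem, with the single non-additive agent carrying all the combinatorics through a matroid-rank construction. For a matroid with rank function $r$ on ground set $M$, consider
\[
v_0(S)=(c+2)\,r(S)-2|S| .
\]
Its marginal is $c$ when the added item raises the rank and $-2$ otherwise. Hence $v_0$ is submodular (a nonnegative multiple of a rank function minus a modular function) with marginals in $\{-2,c\}\subseteq\{-2,0,c\}$. If $r$ can be evaluated in polynomial time, so can $v_0$. All other agents get additive $\{-2,0,c\}$-valuations. By Theorem~\ref{thm:mixed-exact} these are tractable alone, so the hardness must come from the interaction with agent $0$.

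I would reduce from $3$-DM, or equivalently from X3C, and reuse the counting idea of Theorem~\ref{thm:hard}.
\begin{itemize}
\item Create vertex items. Create one additive agent per hyperedge that values its incident vertex items at $c$ and everything else at $-2$, or at $0$ for designated padding items.
\item Add a controlled number of universal chores and padding goods.
\item Let $r$ be a sparse paving matroid of small rank whose circuit-hyperplanes are determined by the instance. A set is dependent exactly when it is one of the listed hyperedges or is too large. Its rank is then trivial to compute from the explicit list.
\end{itemize}
The target $T$ is chosen so that agent $0$ reaches $T$ only by receiving a bundle of a prescribed size and rank. The hyperedge agents reach $T$ only in the pattern "all $3$ incident items or none", as in the proof of Theorem~\ref{thm:hard}.

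The parity of $c$ is what turns the construction into an exact unit gap. Every utility has the form $c\,a-2b$. Since $c$ is odd, changing the number of goods an agent holds by one flips the parity of its utility. With $T$ tuned accordingly, any allocation that deviates from the intended perfect-matching structure forces some agent to lose at least one unit but no more: the stray good or chore moves that agent from $T$ to $T-1$. Conversely, the complementary gadget (padding goods, distributed as in the first half of the proof of Theorem~\ref{thm:hard}) ensures that $T-1$ is always attainable. So $\OPT\in\{T-1,T\}$ with $\OPT=T$ iff a perfect matching exists.

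All numbers are polynomially bounded, which gives strong $\np$-hardness. Ruling out an FPTAS then follows by running it with $\varepsilon<1/T$, and additive error below $1$ would separate $T$ from $T-1$ directly.

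The main obstacle is the soundness direction. I must verify, via a weight/averaging identity like~\eqref{eq:sum} augmented by the rank term of $v_0$, that no "fractional-looking" allocation lets everyone reach $T$ without encoding a perfect matching. In particular, agent $0$ must not be able to exploit non-listed triples or a rank deficit to absorb items cheaply. I would first try choosing the rank so that agent $0$ needs exactly one extra circuit-hyperplane-free configuration, and check the case analysis on $|S|$ and $r(S)$ by the parity argument. If that leaks, I would fall back to a partition-like matroid truncated at the right rank.
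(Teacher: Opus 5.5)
Your proposal has a genuine gap at its central step. You assert that the additive hyperedge agents reach $T$ only by taking all three incident vertex items or none, ``as in the proof of Theorem~\ref{thm:hard}''. That argument relied on $\gcd(p,q)=1$ with $p\ge3$, and it has no analogue here. Once the averaging bound forces every utility to equal $T$, an additive agent holding $a$ items it values at $c$ and $b$ chores satisfies $c\,a-2b=T$. Chores can be redistributed freely, so beyond $c\,a\ge T$ the only constraint is $c\,a\equiv T\pmod 2$, i.e.\ the parity of the number $x$ of incident vertex items the agent holds. For a triple this allows $x\in\{0,2\}$ or $x\in\{1,3\}$, never exactly $\{0,3\}$, so your hyperedge agents do not encode a matching. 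This limitation is intrinsic: it is exactly why the additive case is polynomial (Theorem~\ref{thm:mixed-exact}, whose gadget only has to track a parity correction).

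The role of agent $0$ is also not pinned down. Taking the 3-DM hyperedges as circuit-hyperplanes of a sparse paving matroid requires any two of them to share at most $r-2$ elements, which fails when two triples share two vertices. You also do not say what agent $0$ is supposed to collect, or why making the listed hyperedges dependent (rather than the forbidden configurations) enforces a perfect matching. Finally, $\OPT\ge T-1$ on NO-instances needs an explicit allocation; the claim that a deviation costs ``at least one unit but no more'' does not establish it.

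The paper builds the reduction around what parity can express. It reduces from CLIQUE and gives each additive agent $u$ a private pair $P_u$ plus $d=2k$ private dummies. Tightness then gives $x_u=|A_u\cap P_u|\in\{0,2\}$, which for a pair is exactly all-or-none. All the combinatorics sits in the collector, whose rank on the $2k$ pair items is the sparse paving matroid of rank $2s$. Its circuit-hyperplanes are the sets $P_J$ with $|J|=s$ and $J$ not a clique; distinct such sets automatically meet in at most $2s-2$ elements. With $H=c(k-s)$ chores and $T=2ck$, the bound $\sum_i v_i(A_i)\le(c+2)R-2(R+H)=(k+1)T$ forces tightness. The collector must then receive exactly the whole pairs not kept by their agents, and its rank bound forces $|J|=s$ with $P_J$ independent, i.e.\ a clique.

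The universal lower bound $T-1$ is explicit. The collector gets $s-1$ whole pairs plus one item from each of two further pairs; this set is not a union of pairs, hence independent. The two pair agents left holding a single pair item receive $(c-1)/2$ and $(c+1)/2$ chores respectively, which is a second use of the oddness of $c$. Your ingredients (rank-minus-cardinality valuation, a single nonadditive agent, a sparse paving matroid, the averaging bound, and the FPTAS/additive-error consequences) are the right ones, but the reduction itself has to be rebuilt along these lines.
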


\begin{lemma}\label{lem:matroid-val}
If $\rho$ is a matroid rank function, then
$v(S)=(c+2)\rho(S)-2|S|$ is $\{-2,c\}$-submodular, and $ v(S)=c\rho(S)-2\bigl(|S|-\rho(S)\bigr).$
\end{lemma}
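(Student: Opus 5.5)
The algebraic identity is immediate: expanding gives $(c+2)\rho(S)-2|S| = c\rho(S)-2(|S|-\rho(S))$. So the work is in checking the two defining properties of $\{-2,c\}$-submodularity, and I would treat them in turn.

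\emph{Normalization and marginals.} First, $v(\varnothing)=(c+2)\rho(\varnothing)-0=0$. Fix an item $o$ and a set $S\subseteq M\setminus\{o\}$. The marginal gain is
\[
v(S\cup\{o\})-v(S)=(c+2)\bigl(\rho(S\cup\{o\})-\rho(S)\bigr)-2 .
\]
For a matroid rank function, the rank increment $\rho(S\cup\{o\})-\rho(S)$ lies in $\{0,1\}$, by the unit-increase property of rank. Hence the marginal is $-2$ when the increment is $0$ and $(c+2)-2=c$ when it is $1$. This gives property (b) with $\cA=\{-2,c\}$.

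\emph{Submodularity.} Take $S\subseteq T\subseteq M\setminus\{o\}$. Then
\[
\bigl(v(S\cup\{o\})-v(S)\bigr)-\bigl(v(T\cup\{o\})-v(T)\bigr)=(c+2)\Bigl[\bigl(\rho(S\cup\{o\})-\rho(S)\bigr)-\bigl(\rho(T\cup\{o\})-\rho(T)\bigr)\Bigr].
\]
The bracket is nonnegative because $\rho$ is submodular. Since $c+2>0$, property (a) follows. The additive term $-2|S|$ is modular, so it contributes nothing to this difference.

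Nothing here is an obstacle. The only facts used are the standard rank properties: normalization, unit increase, and submodularity.
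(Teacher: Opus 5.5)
Your proposal is correct and follows essentially the same argument as the paper: rank marginals lie in $\{0,1\}$, so the marginals of $v$ are $-2$ or $c$; submodularity follows from that of $\rho$ together with the modularity of $|S|$; and the identity comes from rearranging terms. Your explicit remark that $c+2>0$ preserves the submodularity inequality under scaling is a small detail the paper leaves implicit.
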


Thus an agent with such a valuation perceives the items that raise
its rank as goods of value $c$ and all other items as chores of
value $-2$. Such valuations are {\em order-neutral}: every ordering of a
bundle $S$ has exactly $\rho(S)$ marginals of value $c$ and
$|S|-\rho(S)$ marginals of value $-2$. For odd positive $c$, every
$\{-2,0,c\}$-submodular valuation is order-neutral as well. Indeed,
the six unordered pair sums
$-4,-2,0,c-2,c,2c$ are distinct, so swapping adjacent items in an
ordering preserves the unordered pair of their marginals, and
therefore the entire marginal multiset. This is the argument of
\citet[Proposition~4.8]{FVZ25}. Hence order-neutrality alone does not
yield tractability when chores have magnitude two.

\subsection{A sparse paving encoding of CLIQUE}
\label{subsec:submod-encoding}

Let $k\geq 2$ and let
$E=\{e_1,e'_1,\ldots,e_k,e'_k\}$ be a set of $2k$ elements,
partitioned into the pairs $P_u=\{e_u,e'_u\}$ for $u\in[k]$.
For $J\subseteq[k]$, write $P_J=\bigcup_{u\in J}P_u$.
Fix an integer $s$ with $1\leq s<k$, and put $r=2s$.
In the reduction, the elements of $E$ are the pair items and $s$ is
the target clique size. Restricting CLIQUE to $1\leq s<k$ preserves
NP-hardness, since its other endpoints are decidable directly.

\begin{lemma}
\label{lem:sparse-paving}\label{lem:submod-sparse-paving}
Let $\mathcal F\subseteq\{P_J:|J|=s\}$. Then
\[
 \rho_{\mathcal F}(S)=
 \begin{cases}
 |S|, & |S|<r,\\
 r-1, & S\in\mathcal F,\\
 r,   & \text{otherwise}
 \end{cases}
\]
is the rank function of a matroid of rank $r$ on $E$. A set $P_J$
with $|J|=s$ is independent if and only if $P_J\notin\mathcal F$.
\end{lemma}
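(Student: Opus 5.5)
The plan is to invoke the standard sparse paving construction. A family $\mathcal F$ of $r$-subsets of $E$ is the family of circuit-hyperplanes of a sparse paving matroid of rank $r$ whenever any two members satisfy $|F\triangle F'|\ge 4$, equivalently $|F\cap F'|\le r-2$. So the first step is to verify this condition for sets of the form $P_J$ with $|J|=s$. Each such set has size $2s=r$. If $J\ne J'$ with $|J|=|J'|=s$, then $|J\cap J'|\le s-1$, so $|P_J\cap P_{J'}|=2|J\cap J'|\le 2s-2=r-2$. Hence the condition holds, and it holds with room to spare because the pairs are never split.

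To keep the argument self-contained, I would then check directly that $\rho_{\mathcal F}$ is a rank function rather than cite the circuit-hyperplane theorem. Three properties are needed.
\begin{itemize}
\item[(R1)] $0\le\rho(S)\le|S|$. This is immediate: members of $\mathcal F$ have size $r$, so the value $r-1$ is always at most $|S|$, and the cases $|S|<r$ and ``otherwise'' are clear.
\item[(R2)] Monotonicity. The only potentially delicate step is from a set of size $r-1$, whose value is $r-1$, to a set in $\mathcal F$, whose value is also $r-1$. This is fine. Going from $F\in\mathcal F$ to any superset gives $r-1\le r$. A proper superset of $F$ has size greater than $r$, so it is not in $\mathcal F$ and gets value $r$.
\item[(R3)] Submodularity. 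It suffices to check the local form $\rho(S+x)+\rho(S+y)\ge\rho(S)+\rho(S+x+y)$ for $x,y\notin S$.
\end{itemize}

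For (R3) I would split by $|S|$.
\begin{itemize}
\item If $|S|\le r-3$, every set involved has size less than $r$, so $\rho$ equals cardinality and the inequality holds with equality.
\item If $|S|\ge r$, then $\rho(S+x)=\rho(S+y)=\rho(S+x+y)=r$ because these sets have size greater than $r$. Since $\rho(S)\le r$, the inequality holds.
\item If $|S|=r-1$, then $\rho(S)=r-1$ and $\rho(S+x+y)=r$. The inequality fails only if both $S+x$ and $S+y$ lie in $\mathcal F$. These two sets are distinct members sharing $r-1$ elements, which contradicts the intersection bound.
\item If $|S|=r-2$, then $\rho(S+x)=\rho(S+y)=r-1$ and $\rho(S)=r-2$, while $\rho(S+x+y)\le r$. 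This gives $2r-2\ge 2r-2$, so the inequality holds.
\end{itemize}
The $|S|=r-1$ case is the only place where the structure of $\mathcal F$ is used, and I expect it to be the main point of the proof. The rank of the matroid is $\rho(E)=r$, since $|E|=2k>r$ and $E\notin\mathcal F$.

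Finally, a set $S$ is independent exactly when $\rho(S)=|S|$. For $S=P_J$ with $|J|=s$ we have $|S|=r$, so $S$ is independent if and only if $\rho(S)=r$, which holds if and only if $S\notin\mathcal F$.
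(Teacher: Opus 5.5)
Your proof is correct, but it takes a different route from the paper's. Both arguments rest on the same observation, that distinct members of $\mathcal F$ meet in at most $r-2$ elements. They use it differently.

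The paper verifies the basis-exchange axiom for the family $\mathcal B$ of all $r$-subsets of $E$ outside $\mathcal F$. An exchange candidate $(B_1\setminus\{x\})\cup\{y\}$ can be forbidden for at most one $y$, since two forbidden candidates would be members of $\mathcal F$ sharing $r-1$ elements. The paper then has to read off the rank function of the resulting matroid: every $(r-1)$-set extends to a basis, an $r$-set has rank $r-1$ exactly when it lies in $\mathcal F$, and every $(r+1)$-set contains a basis.

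You instead check the rank axioms directly on the displayed formula. You reduce submodularity to the local inequality $\rho(S+x)+\rho(S+y)\ge\rho(S)+\rho(S+x+y)$ and split on $|S|$. What your route buys:
\begin{itemize}
\item It proves exactly the stated claim, with no bases-to-rank translation.
\item It isolates the single case, $|S|=r-1$, where the structure of $\mathcal F$ is needed.
\end{itemize}
Its one outside ingredient is the standard equivalence between local and global submodularity. You should cite or briefly sketch it; it follows by telescoping marginals along a chain.

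The paper's route exhibits the bases explicitly. It also matches the textbook description of a sparse paving matroid: the uniform matroid $U_{r,|E|}$ with a family of pairwise far-apart $r$-sets declared dependent. Both arguments use $s<k$ in the same way, to guarantee $|E|>r$.
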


Given a graph $\mathcal G$ on vertex set $[k]$, let
\[
 \mathcal F_{\mathcal G}=\{P_J:|J|=s,\ J\text{ is not a clique of }\mathcal G\}.
\]
Membership in $\mathcal F_{\mathcal G}$ is decidable in polynomial time: check
that the set is a union of $s$ whole pairs and then check the edges
among their indices. Thus $\rho_{\mathcal F_{\mathcal G}}$ is polynomial-time
computable, and a union of $s$ pairs is independent exactly when
its indices form a clique.

\subsection{The reduction}
\label{subsec:submod-pair-reduction}

Let $c$ be a fixed odd positive integer and let $(\mathcal G,s)$ be a CLIQUE
instance on $k$ vertices with $1\leq s<k$. Write $M_{\mathcal G}$ for the
matroid in Lemma~\ref{lem:sparse-paving} with
$\mathcal F=\mathcal F_{\mathcal G}$, and set: $d=2k$, $d_0=2k-2s$, $H=c(k-s)$, $T=cd.$

\paragraph{Items.} The $2k$ pair items $E=\{e_1,e'_1,\ldots,e_k,e'_k\}$,
where $P_u=\{e_u,e'_u\}$ is the pair of vertex $u$; for each
$u\in[k]$ a set $D_u$ of $d$ private dummy items; a set $D_0$ of
$d_0$ private dummy items; and a set $C$ of $H$ chores. These sets
are pairwise disjoint.

\paragraph{Agents.} There are $k+1$ agents, with rank functions
as follows:
\begin{itemize}%[nosep]
\item the \emph{collector} $0$, with
$\rho_0(S)=\rho_{\mathcal F_{\mathcal G}}(S\cap E)+|S\cap D_0|$;
\item for each $u\in[k]$ a \emph{pair agent} $u$, with
$\rho_u(S)=|S\cap(P_u\cup D_u)|$.
\end{itemize}
These are matroid rank functions: the collector takes a direct sum
with the free matroid on $D_0$, and each pair agent uses a free
matroid on $P_u\cup D_u$; all unspecified items are loops.

\paragraph{Valuations.} For every agent $i$, define $ v_i(S)=(c+2)\rho_i(S)-2|S|.$ 
Each pair agent is additive: it values every item of $P_u\cup D_u$
at $c$ and every other item at $-2$. Only the collector can be
non-additive. Every chore $h\in C$ is a loop for every agent,
that is, $\rho_i(S\cup\{h\})=\rho_i(S)$ for every $S$; hence
its marginal value is $-2$ regardless of the bundle.

\begin{claim}
\label{clm:submod-size}
For fixed $c$, the construction has polynomial size and all its
valuations are computable in polynomial time.
\end{claim}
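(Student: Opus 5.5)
We count the items and agents, and then check that each valuation can be evaluated efficiently.

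\emph{Size.} The instance has $k+1$ agents. The items are:
\begin{itemize}
\item the $2k$ pair items;
\item $kd=2k^2$ private dummies in the sets $D_u$;
\item $d_0=2k-2s\le 2k$ dummies in $D_0$;
\item $H=c(k-s)\le ck$ chores.
\end{itemize}
For fixed $c$ this totals $O(k^2)$ items. The threshold $T=cd=2ck$ is polynomially bounded. Every item can be written down explicitly, so the construction is polynomial in the size of the CLIQUE instance $(\mathcal G,s)$, that is, in $k$ and the edge list.

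\emph{Pair agents.} Each pair agent $u$ has an additive valuation. Given a bundle $S$, we compute $|S\cap(P_u\cup D_u)|$ and $|S|$ by a single scan. Then $v_u(S)=(c+2)|S\cap(P_u\cup D_u)|-2|S|$ follows in linear time.

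\emph{Collector.} For the collector we compute $|S\cap D_0|$ and $|S|$ directly, so it suffices to evaluate $\rho_{\mathcal F_{\mathcal G}}(S\cap E)$ in polynomial time. Write $S'=S\cap E$. We follow the case split of Lemma~\ref{lem:sparse-paving}:
\begin{itemize}
\item If $|S'|<r$, the rank is $|S'|$.
\item Otherwise we test membership $S'\in\mathcal F_{\mathcal G}$. First check that $|S'|=2s$ and that $S'$ is a union of whole pairs $P_u$. If so, read off $J=\{u:P_u\subseteq S'\}$ and check all $\binom{s}{2}$ pairs in $J$ for adjacency in $\mathcal G$. Then $S'\in\mathcal F_{\mathcal G}$ exactly when some pair in $J$ is non-adjacent; in that case the rank is $r-1$, and otherwise it is $r$.
\end{itemize}
All of these steps take $O(k^2)$ time.

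\emph{Well-definedness.} Lemma~\ref{lem:sparse-paving} guarantees that $\rho_{\mathcal F_{\mathcal G}}$ is a matroid rank function. A direct sum with a free matroid is again a matroid. Lemma~\ref{lem:matroid-val} then shows that every $v_i$ is $\{-2,c\}$-submodular, hence $\{-2,0,c\}$-submodular.

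None of these steps is a real obstacle. The only point needing care is that the collector's rank is evaluated from an explicit membership test, rather than from an exponentially large listing of $\mathcal F_{\mathcal G}$.
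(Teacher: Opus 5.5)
Your proposal is correct and follows the paper's argument. You count the $2k+kd+d_0+H=2k^2+(c+4)k-(c+2)s$ items, note that the pair agents are additive, and evaluate the collector's rank by set counting plus one membership test in $\mathcal F_{\mathcal G}$ via adjacency checks, spelling out that test and the well-definedness check in more detail than the paper does.
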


\begin{lemma}[Completeness]
\label{lem:submod-completeness}
If $\mathcal G$ has a clique of size $s$, some allocation has welfare at
least $T$.
\end{lemma}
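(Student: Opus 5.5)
The plan is to exhibit the allocation explicitly, using the clique to decide which pairs go to the collector. Then we verify each agent's value via the formula $v_i(S)=(c+2)\rho_i(S)-2|S|$ (Lemma~\ref{lem:matroid-val}), so that value is determined by rank and size alone.

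Let $J\subseteq[k]$ be a clique of $\mathcal G$ with $|J|=s$. By the discussion after Lemma~\ref{lem:sparse-paving}, $P_J\notin\mathcal F_{\mathcal G}$, so $P_J$ is independent in $M_{\mathcal G}$ and $\rho_{\mathcal F_{\mathcal G}}(P_J)=2s$. Define the allocation as follows:
\begin{itemize}
\item the collector receives $A_0=P_J\cup D_0$;
\item every pair agent $u\in J$ receives $A_u=D_u$;
\item every pair agent $u\notin J$ receives $A_u=D_u\cup P_u\cup C_u$, where the chores $C$ are split into $k-s$ blocks $C_u$ of exactly $c$ chores each. This is possible since $|C|=H=c(k-s)$.
\end{itemize}
Every item is allocated exactly once. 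The pairs $P_u$ with $u\in J$ go to the collector, and the remaining pairs go to their own pair agents.

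Now we verify the values.
\begin{itemize}
\item \emph{Collector.} We have $\rho_0(A_0)=2s+d_0=2k$ and $|A_0|=2k$, so $v_0(A_0)=(c+2)2k-4k=2kc=T$.
\item \emph{Pair agent $u\in J$.} All $d=2k$ items of $D_u$ are worth $c$, giving $v_u=2kc=T$.
\item \emph{Pair agent $u\notin J$.} The items of $D_u\cup P_u$ are worth $c$ each, contributing $(2k+2)c$. The $c$ chores are loops and cost $2$ each, contributing $-2c$. Hence $v_u=2kc+2c-2c=T$.
\end{itemize}
Therefore every agent attains exactly $T$, and the egalitarian welfare is at least $T$.

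The only delicate step is confirming that the collector's rank on $P_J$ is the full $2s$ rather than $r-1$. This is exactly where the clique property is used, through the definition of $\mathcal F_{\mathcal G}$. The remaining steps are bookkeeping: the chosen parameters make each agent's count come out to exactly $T$, and the chores exactly fill the surplus $2c$ of the $k-s$ agents that keep their pairs.
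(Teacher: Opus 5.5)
Your proposal is correct and is essentially the paper's own proof: the same allocation ($D_0\cup P_J$ to the collector, $D_u$ to $u\in J$, $D_u\cup P_u$ plus $c$ chores to $u\notin J$) and the same value checks. It also matches the paper in how it uses the clique property, namely that $P_J\notin\mathcal F_{\mathcal G}$ gives full rank $2s$ on the collector's pair items.
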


\begin{proof}
Let $J$ be such a clique. Give the collector $D_0\cup P_J$.
This set is independent for $\rho_0$ and has size
$d_0+2s=d$, so its value is $cd=T$. Give each $u\in J$ the set
$D_u$, of value $cd$. Give each $u\notin J$ the set
$D_u\cup P_u$ and $c$ chores. Its value is
$c(d+2)-2c=cd$. This uses exactly $c(k-s)=H$ chores and allocates
every item.
\end{proof}

\begin{lemma}[Soundness]
\label{lem:submod-soundness}
If some allocation has welfare at least $T$, then $\mathcal G$ has a clique
of size $s$.
\end{lemma}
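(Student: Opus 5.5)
The plan is a counting (averaging) argument, in the spirit of the proof of Theorem~\ref{thm:hard}. The idea is to show that welfare $T$ forces every agent to receive exactly $T$ with no waste. The parity of $c$ then forces the pair items to move in whole pairs, and the collector's matroid converts this into a clique.

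\emph{Step 1 (global upper bound).} By Lemma~\ref{lem:matroid-val}, each agent satisfies $v_i(A_i)=c\,\rho_i(A_i)-2(|A_i|-\rho_i(A_i))$. Every chore is a loop for everyone. Every non-chore item can contribute at most one unit of rank, and only to an agent for which it is not a loop. Hence
\[
\textstyle\sum_i v_i(A_i)\le c\,(2k+kd+d_0)-2H .
\]
Substituting $d=2k$, $d_0=2k-2s$ and $H=c(k-s)$ gives exactly $c(2k^2+2k)=(k+1)\,cd=(k+1)T$. So if $\EW(A)\ge T$, equality holds throughout, with the following consequences:
\begin{itemize}
\item every agent has value exactly $T$;
\item no non-chore item is wasted, i.e.\ $\rho_i(A_i)=|A_i\setminus C|$ for every $i$;
\item each dummy of $D_u$ goes to pair agent $u$, and each dummy of $D_0$ goes to the collector;
\item each pair item of $P_u$ goes to $u$ or to the collector;
\item the set $S_E=A_0\cap E$ is independent in $M_{\mathcal G}$.
\end{itemize}

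\emph{Step 2 (collector).} The collector holds $D_0$, the independent set $S_E$ and some number $h_0$ of chores. Its value is $c(d_0+|S_E|)-2h_0=cd$, so $c|S_E|-2h_0=2sc$. Since the matroid has rank $r=2s$, we get $|S_E|\le 2s$. This forces $|S_E|=2s$ and $h_0=0$.

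\emph{Step 3 (pair agents and parity).} Pair agent $u$ holds $D_u$, a number $a_u\in\{0,1,2\}$ of items of $P_u$, and $h_u$ chores. Its value is $c(d+a_u)-2h_u=cd$, so $c\,a_u=2h_u$. Because $c$ is odd, $a_u$ is even, so $a_u\in\{0,2\}$.

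\emph{Step 4 (conclusion).} Every pair $P_u$ therefore goes wholly to $u$ or wholly to the collector. So $S_E=P_J$ for some $J$, and $|S_E|=2s$ gives $|J|=s$. Since $P_J$ is independent, Lemma~\ref{lem:sparse-paving} gives $P_J\notin\mathcal F_{\mathcal G}$, which means $J$ is a clique of size $s$.

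The main point needing care is Step~1. The bound must be charged item by item: a pair item held by a non-owning pair agent, a dummy held by a non-owner, or a dependent pair item held by the collector each costs $c+2$ relative to the bound. Only then does "equality" genuinely yield independence of $S_E$ together with the exact placement of every item. Everything after that is short arithmetic; the parity step is where the oddness of $c$ is used.
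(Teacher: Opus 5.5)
Your proof is correct and follows essentially the same route as the paper. Both use the total-welfare bound $\sum_i v_i(A_i)\le (k+1)T$ and its forced tightness, then the parity argument for pair agents using that $c$ is odd, and finally the independence criterion of Lemma~\ref{lem:sparse-paving}. The only difference is the order of steps: you read off $|S_E|=2s$ and $h_0=0$ directly from the collector's value equation before the parity step. The paper does parity first and then sandwiches $\rho_0(A_0)$ between $d$ and $d_0+2s$.
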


\begin{proof}
Let $A$ be such an allocation. For agent $i$, let $N_i$ be the set
of nonloops of $\rho_i$. Since $r\geq 2$, every pair item is a
nonloop of the collector. Hence $R=\left|\bigcup_iN_i\right|=2k+kd+d_0,$ and the total number of items is $R+H$.

\emph{Tightness.}
For every $i$, $\rho_i(A_i)\leq|A_i\cap N_i|$. Since bundles are
disjoint, $\sum_i\rho_i(A_i)\leq R$. Because every item is allocated,
\begin{align*}
 \sum_i v_i(A_i)
 \leq(c+2)R-2(R+H)
 =cR-2H=c(kd+d)=(k+1)T.
\end{align*}
Every agent has value at least $t$, so every value equals $t$ and
$\sum_i\rho_i(A_i)=R$. Equality forces
$\rho_i(A_i)=|A_i\cap N_i|$ for every $i$, and every item of
$\bigcup_iN_i$ must be assigned to an agent for whom it is a
nonloop. Thus each $A_i\cap N_i$ is independent; each $D_u$ goes to
$u$; and $D_0$ goes to the collector. The only items that can be
loops of their holders are the universal chores in $C$.

\emph{Pair agents keep whole pairs or nothing.}
For $u\in[k]$, put $x_u=|A_u\cap P_u|$. Then
$\rho_u(A_u)=d+x_u$, and $v_u(A_u)=cd$ gives $c\bigl(\rho_u(A_u)-d\bigr)
   =2\bigl(|A_u|-\rho_u(A_u)\bigr).$ 
The right-hand side is even. Since $c$ is odd, $x_u$ is even,
and therefore $x_u\in\{0,2\}$.

\emph{The collector obtains a clique.}
Each pair item is a nonloop only for its pair agent and the
collector. Consequently every pair not kept by its pair agent goes
entirely to the collector, and $A_0\cap E=P_J$ for some
$J\subseteq[k]$. This set is independent in $M_{\mathcal G}$.
Since $v_0(A_0)=cd$ and $|A_0|\geq\rho_0(A_0)$, $cd=c\rho_0(A_0)-2\bigl(|A_0|-\rho_0(A_0)\bigr)
       \leq c\rho_0(A_0).$ 
Thus $\rho_0(A_0)\geq d$. But
$\rho_0(A_0)\leq d_0+2s=d$, so equality holds. Independence
gives $\rho_0(A_0)=d_0+2|J|$, and hence $|J|=s$.
By Lemma~\ref{lem:sparse-paving}, $J$ is a clique. These
equalities also show that the collector receives no chore.
\end{proof}

\begin{proof}[Proof of Theorem~\ref{thm:submod}]
Claim~\ref{clm:submod-size} and
Lemmas~\ref{lem:submod-completeness}--\ref{lem:submod-soundness}
give a polynomial reduction from CLIQUE for every fixed odd $c$.
All but one agent are additive, and Lemma~\ref{lem:matroid-val}
verifies the required marginals and submodularity. In fact the
construction has the exact gap: 
\[
 \OPT=
 \begin{cases}
 T, & \mathcal G\text{ has a clique of size }s,\\
 T-1, & \text{otherwise}.
 \end{cases}
\]
The total-welfare bound in the soundness proof gives $\OPT\leq T$
for every instance. Completeness gives equality on YES-instances;
soundness and integrality give $\OPT\leq T-1$ on NO-instances.
It remains to construct an allocation of welfare at least $T-1$
regardless of $\mathcal G$.

Choose $s-1$ whole pairs and two further distinct pairs; this is
possible because $s<k$. Give the collector $D_0$, those $s-1$
whole pairs, and one item from each of the two further pairs. Its
$2s$ pair items are not a union of whole pairs and therefore do
not belong to $\mathcal F_{\mathcal G}$. They are independent, and the
collector's value is $T$. Give every pair agent its private dummies.
Each agent retaining its whole pair receives $c$ chores. The two
agents retaining exactly one pair item receive respectively
$(c-1)/2$ and $(c+1)/2$ chores; their values are $T+1$ and $T-1$.
The agents retaining no pair item receive no chore and have value
$T$. The number of chores used is
\[
 c(k-s-1)+\frac{c-1}{2}+\frac{c+1}{2}=c(k-s)=H.
\]
This allocates every item and proves the lower bound. Since
$T=2ck\geq4$, both possible optima are strictly positive.
For the explicitly represented family, an allocation is a
polynomial-size certificate whose welfare is polynomial-time
computable, so threshold decision is $\np$-complete as well.

An algorithm with additive error strictly smaller than one would
return integer welfare greater than $T-1$ on every YES-instance,
and hence welfare $T$. No NO-instance admits such an allocation.
For an assumed FPTAS, choose $\varepsilon=1/(2t)$. Its guarantee
of $\OPT/(1+\varepsilon)\geq(1-\varepsilon)\OPT$ gives welfare
at least $T-1/2$ on a YES-instance, again forcing welfare $T$.
Since $1/\varepsilon=4ck$ is polynomially bounded for fixed $c$,
either algorithm would decide CLIQUE in polynomial time.
\end{proof}

\begin{example}[The reduction from CLIQUE]
\label{ex:clique-reduction}\label{ex:submod-clique}
Let $\mathcal G$ be the path on vertices $\{1,2,3\}$ with edges
$\{1,2\}$ and $\{2,3\}$, let $s=2$, and let $c=3$. Then $r=4$, $d=6$, $d_0=2$, $H=3$, $t=18.$ 
There are six pair items, eighteen private dummy items for the
pair agents, two dummy items for the collector, and three chores.
The only nonclique pair of vertices is $\{1,3\}$, so
$\mathcal F_{\mathcal G}=\{P_1\cup P_3\}$. Every set of at most three pair
items is independent, $P_1\cup P_3$ has rank three, and every other
set of at least four pair items has rank four. In particular,
$P_1\cup P_2$ and $P_2\cup P_3$ are independent.

For the clique $J=\{1,2\}$, completeness gives the following
allocation, shown also in Figure~\ref{fig:clique-reduction}. Here
$v_i(A_i)=5\rho_i(A_i)-2|A_i|$.
\begin{center}
\begin{tabular}{lllll}
\toprule
Agent & Bundle $A_i$ & $\rho_i(A_i)$ & $|A_i|$ & $v_i(A_i)$\\
\midrule
Collector & $D_0\cup P_1\cup P_2$ & $2+4=6$ & $6$ & $18$\\
Pair agent $1$ & $D_1$ & $6$ & $6$ & $18$\\
Pair agent $2$ & $D_2$ & $6$ & $6$ & $18$\\
Pair agent $3$ & $D_3\cup P_3\cup C$ & $8$ & $11$ & $18$\\
\bottomrule
\end{tabular}
\end{center}
Using $P_1\cup P_3$ for the collector instead would reduce its
value to $5\cdot5-2\cdot6=13$.

For the soundness argument, replace $\mathcal G$ by the empty graph on the
same vertices. There are twenty-six nonchore items, so total welfare
is at most $3\cdot26-2\cdot3=72=4t$. If every agent had value at
least eighteen, all values would equal eighteen and every nonchore
item would contribute to the rank of its holder. A pair agent
keeping exactly one pair item would have rank seven and odd value
$35-2|A_u|$, so it could not attain eighteen. Thus the collector
would receive two whole pairs forming an independent set, which is
impossible in this instance. In fact the optimum is seventeen:
give the collector $D_0\cup P_1\cup\{e_2,e_3\}$, give pair agent
$1$ only $D_1$, and give pair agents $2,3$ their dummies and their
remaining pair item, together with one and two chores respectively.
Their values are eighteen, eighteen, nineteen, and seventeen.
\end{example}

\begin{figure}[!ht]
\centering
\begin{tikzpicture}[
  every node/.style={font=\small},
  vtx/.style={draw, circle, fill=gray!15, minimum size=0.55cm, inner sep=0pt},
  item/.style={draw, rounded corners=2pt, fill=teal!12, minimum width=0.8cm, minimum height=0.55cm, inner sep=1pt},
  agent/.style={draw, rounded corners=2pt, fill=violet!12, minimum width=1.6cm, minimum height=0.9cm, align=center, inner sep=2pt},
  coll/.style={draw, rounded corners=2pt, fill=orange!15, minimum width=2.4cm, minimum height=0.9cm, align=center, inner sep=2pt},
  e/.style={draw=gray!70, thin},
  m/.style={draw=orange!80!black, line width=1.6pt}
]
\begin{scope}[shift={(-7.2,0.9)}]
\node[vtx] (v1) at (0,1.4) {$1$};
\node[vtx] (v2) at (0,0) {$2$};
\node[vtx] (v3) at (0,-1.4) {$3$};
\draw[thick] (v1) -- (v2) -- (v3);
\node at (0,-2.2) {$\mathcal G$, $s=2$};
\end{scope}
\node[agent] (p1) at (-3.6,2.4) {pair agent $1$\\[-1pt]{\scriptsize $+D_1$}};
\node[agent] (p2) at ( 0.0,2.4) {pair agent $2$\\[-1pt]{\scriptsize $+D_2$}};
\node[agent] (p3) at ( 3.6,2.4) {pair agent $3$\\[-1pt]{\scriptsize $+D_3$, $+3$ chores}};
\node[item] (e1)  at (-4.2,0.8) {$e_1$};
\node[item] (e1p) at (-3.0,0.8) {$e_1'$};
\node[item] (e2)  at (-0.6,0.8) {$e_2$};
\node[item] (e2p) at ( 0.6,0.8) {$e_2'$};
\node[item] (e3)  at ( 3.0,0.8) {$e_3$};
\node[item] (e3p) at ( 4.2,0.8) {$e_3'$};
\node[coll] (c) at (0,-1.2) {collector\\[-1pt]{\scriptsize $+D_0$, hidden matroid}};
\draw[e] (p1) -- (e1);  \draw[e] (p1) -- (e1p);
\draw[e] (p2) -- (e2);  \draw[e] (p2) -- (e2p);
\draw[m] (p3) -- (e3);  \draw[m] (p3) -- (e3p);
\draw[m] (c) -- (e1);  \draw[m] (c) -- (e1p);
\draw[m] (c) -- (e2);  \draw[m] (c) -- (e2p);
\draw[e] (c) -- (e3);  \draw[e] (c) -- (e3p);
\end{tikzpicture}
\caption{The instance of Example~\ref{ex:clique-reduction} for the path $\mathcal G$ with edges $\{1,2\}$ and $\{2,3\}$ and $s=2$. Each pair item $e_u,e_u'$ can be counted either by pair agent $u$ (value $c$) or by the collector (through the hidden sparse paving matroid, in which $P_1\cup P_3$ is dependent). Thick edges show the allocation for the clique $\{1,2\}$: the collector receives $P_1\cup P_2$ and its dummies $D_0$, pair agents $1$ and $2$ keep only their dummies, and pair agent $3$ keeps $P_3$, its dummies, and all $H=3$ chores. Every agent has value $t=18$.
}
\label{fig:clique-reduction}\label{fig:submod-clique}
\end{figure}

%==================================================================================

%==================================================================================
\section[Inapproximability of Conflict-Pair Valuations]{Approximability of $\{-1,0,1\}$-Submodularity}\label{sec:m101}
%==================================================================================

%We investigate the case with $\{-1,0,1\}$-submodular valuations and prove that there is no approximation algorithm with finite approximation ratio for $\MEW$, unless $\p=\np$. This holds even for conflict-pair valuations, a sub-class of submodular valuations.

%\subsection{Conflict-pair valuations}\label{sec:conflict}

For $\{-1,0,c\}$-submodular valuations with an integer $c\geq 2$, every valuation is order-neutral, and maximum egalitarian welfare is polynomial-time computable \citep{FVZ25,CVZ23b}. For $c=1$, the pairwise sums of marginals collide, since $-1+1=0+0$, and non-order-neutral valuations exist. The basic example consists of two items $x,y$ with $v(\varnothing)=0$, $v(\{x\})=1$, $v(\{y\})=0$, $v(\{x,y\})=0.$ 
Adding $x$ and then $y$ yields marginals $(1,-1)$, whereas adding $y$ and then $x$ yields $(0,0)$. This is a conflict-pair valuation as in Definition~\ref{def:conflict-pair}. By Lemma~\ref{lem:conflict-pair-submod}, all conflict-pair valuations are $\{-1,0,1\}$-submodular.

\begin{theorem}\label{thm:hard_101}\label{thm:conflict-hardness}
Deciding whether $\OPT\geq 1$ is NP-complete for explicitly represented conflict-pair valuations, even when each agent has at most one conflict pair and values at most two items at $1$. NP-hardness holds under the additional promise that $\OPT\in\{0,1\}$. 
Unless $\p=\np$, there is no polynomial-time multiplicative approximation algorithm with a finite approximation ratio, even under the promise $\OPT=1$, when an allocation must be returned. There is also no polynomial-time algorithm that always returns an allocation with welfare at least $\OPT-\varepsilon$, for any fixed $0\leq\varepsilon<1$.
\end{theorem}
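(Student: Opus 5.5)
Membership in NP is immediate: an allocation is a polynomial-size certificate, and for explicitly represented conflict-pair valuations the welfare is computable in linear time from Definition~\ref{def:conflict-pair}. All inapproximability claims then follow from a single reduction whose instances satisfy $\OPT\in\{0,1\}$, with $\OPT=1$ exactly on YES-instances. Under this promise, a finite-ratio algorithm must return welfare $\ge 1/\rho>0$ on YES-instances, hence welfare $\ge 1$ by integrality. An algorithm with additive error $\varepsilon<1$ must likewise return welfare $>0$, hence $\ge 1$. On NO-instances no allocation has welfare $\ge1$, so either algorithm would decide the source problem. For the statement ``even under the promise $\OPT=1$, when an allocation must be returned'', I would use the search-to-decision argument: run the algorithm on the instance and check whether the output has welfare $\ge1$. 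If it does not, the promise is violated, so the answer is NO.

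For the reduction I would start from 3-SAT, in the variant where every variable occurs at most twice positively and at most once negatively (or similar bounded occurrence), which keeps each agent to two valued items. Each variable $x$ gets two items, $T_x$ and $F_x$, and a \emph{variable agent} $a_x$ with $v(T_x)=v(F_x)=1$ and all other items $0$. This agent needs one of the two, which encodes the choice; giving it $F_x$ means $x$ is true. Each clause $C$ gets a \emph{clause agent} with at most two items valued $1$, and the conflict pair is what encodes the clause logic.

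The key design choice is the role of the conflict pair $(x,y)$. An agent holding $x$ but forced to hold $y$ gets value $0$. Since items valued $0$ can always be dumped on some agent, the conflict only bites if $y$ cannot be harmlessly placed elsewhere. Such an item is one that every other agent values $-1$, or that every agent who could absorb it at cost $0$ is itself constrained. So I would introduce \emph{forcing items} $y$ valued $-1$ by all agents except those whose conflict pairs contain $y$, where it is valued $0$. Whoever absorbs $y$ then either loses a unit or must not hold the partner good. Concretely, a clause of width three needs a gadget: a chain of auxiliary agents, each with one conflict pair, propagating ``literal false'' signals.

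Soundness would go by case analysis. Every agent must reach value $1$, every $-1$ item must be held by an agent at value $0$ for it, and the conflict structure then forces a consistent truth assignment satisfying all clauses. Completeness requires exhibiting an allocation of welfare $1$ from a satisfying assignment.

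I expect the gadget design to be the main obstacle. Each agent has only one conflict pair and two positive items, and the construction must ensure that every item that is not helpful can be placed at zero cost in the YES case but not in the NO case. I would try first a reduction from the exact cover or 3-DM structure instead of SAT, with one agent per element and forcing items per set. I would fall back to the SAT chain gadget if the degree constraints do not fit.
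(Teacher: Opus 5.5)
Your outer framework is sound. NP membership, and deriving the finite-ratio and additive-error impossibilities from gap instances (with $\OPT=1$ on YES-instances and no positive-welfare allocation on NO-instances), are exactly what is needed. The gap is that the reduction, which carries the whole theorem, is never given. You leave open whether to reduce from SAT or 3-DM, the width-three clause gadget is only an unspecified ``chain of auxiliary agents'', and soundness is deferred to a case analysis. So completeness and soundness are not proved. Nor are the bound $\OPT\le1$ on every instance (needed for the promise $\OPT=1$) and the bound $\OPT\ge0$ on NO-instances (needed for the promise $\OPT\in\{0,1\}$).

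The one concrete gadget you propose also fails as stated. A variable agent valuing everything other than $T_x,F_x$ at $0$ is a free dump for every item, so nothing is ever forced. This also contradicts your later requirement that forcing items be worth $-1$ to everyone outside the relevant conflict pairs. More fundamentally, a conflict pair $(x,y)$ only costs something when its own holder holds the good $x$. So each literal occurrence needs a private partner good whose possession reflects the same truth value. A single pair $T_x,F_x$ supplies only one such copy, and nothing in your plan replicates it consistently.

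The missing idea is a copying gadget combined with a global counting argument. In the paper, variable $x_v$ gets $L_v=\max\{2,o_v\}$ agents and goods arranged in an even cycle. Agent $a_{v,k}$ values only $g_{v,k}$ and $g_{v,k+1}$ (indices modulo $L_v$) at $1$, so the only perfect matchings are ``all left'' (true) and ``all right'' (false). Each occurrence is assigned to a distinct cycle agent. That agent values $y_C$ at $0$ and has a conflict pair joining $y_C$ with the good it takes when the literal is false. All other singleton values are $-1$.

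There are exactly $n$ cycle goods, and $v_i(A_i)\le|A_i\cap B_i|$ for the set $B_i$ of $i$'s two positive goods. Hence welfare at least $1$ forces $n\le\sum_i v_i(A_i)\le\sum_i|A_i\cap B_i|\le n$ with equality everywhere. Every agent then holds exactly one positive good, no item it values at $-1$, and no complete conflict pair. So each cycle encodes a truth value, and each $y_C$ sits with an occurrence agent whose literal is true. The same chain of inequalities gives $\OPT\le1$ always. Any truth assignment, with each $y_C$ given to an arbitrary occurrence agent, gives $\OPT\ge0$.

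Your ``forcing item'' is in fact exactly the paper's clause item. Once the cycles provide a private partner good per occurrence, up to three occurrence agents can share $y_C$ directly. So neither a chain gadget nor a bounded-occurrence SAT variant is needed.
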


We reduce from \textsc{3-SAT} \citep{GareyJohnson1979}. Let $\varphi$ be a formula with variables $x_1,\ldots,x_s$ and nonempty clauses $C_1,\ldots,C_t$, each containing at most three literal occurrences. In this section, the capital letters $s$ and $t$ denote the numbers of variables and clauses of this source formula; they do not denote the agent and item sets of the allocation instance. We write $n$ and $m$ for the numbers of agents and items in the constructed instance. Let $o_v$ be the number of occurrences of $x_v$, and put $L_v=\max\{2,o_v\}.$ 

\paragraph{Variable cycles.}
For every variable $x_v$, create agents $a_{v,1},\ldots,a_{v,L_v}$ and goods $g_{v,1},\ldots,g_{v,L_v}$. Agent $a_{v,k}$ has singleton value $1$ for $g_{v,k}$ and $g_{v,k+1}$, with indices interpreted modulo $L_v$. Call these its left and right goods, respectively. Since $L_v\geq 2$, the two goods are distinct. The bipartite graph of these agents and their positive goods is a single even cycle, with exactly two perfect matchings: every agent takes its left good, read as $x_v=\mathrm{true}$, or every agent takes its right good, read as $x_v=\mathrm{false}$. Indeed, the choice of either edge at one agent forces the choices around the entire cycle. This also holds for $L_v=2$, when the graph is a four-cycle.

\paragraph{Clause items.}
For each clause $C$, create an item $y_C$. Assign the occurrences of $x_v$ injectively to agents of the $v$-cycle. This is possible because $L_v\geq o_v$, and each agent is assigned at most one occurrence in the whole formula. If an occurrence of $x_v$ in clause $C$ is assigned to $a_{v,k}$, give this agent singleton value $0$ for $y_C$. Its conflict pair is
\[
\begin{cases}
(g_{v,k+1},y_C),&\text{if the occurrence is positive},\\
(g_{v,k},y_C),&\text{if the occurrence is negative}.
\end{cases}
\]
Thus the conflicting good is precisely the one chosen when the literal is false. An agent without an assigned occurrence has no conflict pair. All unspecified singleton values are $-1$. The valuation of each agent is the sum of its singleton values, minus one if its bundle contains both items of its conflict pair.

\begin{lemma}\label{lem:conflict-construction}
The construction has the following properties.
\begin{enumerate}[label=(\alph*)]
\item Every valuation is a conflict-pair valuation, with exactly two singleton values equal to $1$, at most one singleton value equal to $0$, and at most one conflict pair.
\item The instance has $n=\sum_{v=1}^{s}L_v\leq 3t+2s$ agents, and $m=n+t\leq 4t+2s$ items. In particular, it has exactly $n$ cycle goods. It can be constructed in time polynomial in $s+t$.
\end{enumerate}
\end{lemma}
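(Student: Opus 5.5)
The proof of Lemma~\ref{lem:conflict-construction} is a direct inspection of the construction. I would verify the two parts separately, relying on the structural facts already noted: the variable cycles are disjoint, and occurrences are assigned injectively to agents.

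For part (a), I would fix an agent $a_{v,k}$ and list its singleton values. It values exactly $g_{v,k}$ and $g_{v,k+1}$ at $1$. These goods are distinct because $L_v\ge2$, and no other item receives value $1$, since only the cycle construction creates positive values. It receives singleton value $0$ only for $y_C$ when an occurrence in $C$ is assigned to it. Because each agent is assigned at most one occurrence in the whole formula, there is at most one such $y_C$. Every other item has value $-1$. The conflict pair, if present, is $(g,y_C)$ where $g\in\{g_{v,k},g_{v,k+1}\}$. So $v(g)=1$ and $v(y_C)=0$, exactly as Definition~\ref{def:conflict-pair} requires. A single pair is trivially pairwise disjoint. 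The stated valuation, namely the sum of singleton values minus one if both pair items are present, matches the defining formula. Hence the valuation is a conflict-pair valuation, and by Lemma~\ref{lem:conflict-pair-submod} it is $\{-1,0,1\}$-submodular.

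For part (b), I would count directly. The number of agents is $n=\sum_v L_v$. Since $L_v=\max\{2,o_v\}\le o_v+2$, we get $n\le\sum_v o_v+2s$. Every clause has at most three literal occurrences, so $\sum_v o_v\le 3t$, and therefore $n\le 3t+2s$. Each cycle contributes $L_v$ goods, giving exactly $n$ cycle goods, and there are $t$ clause items, so $m=n+t\le4t+2s$. For the running time, building the cycles, assigning occurrences greedily (the $j$-th occurrence of $x_v$ goes to $a_{v,j}$), and writing down the singleton values and conflict pairs all take time polynomial in $n+m$, hence polynomial in $s+t$.

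I do not expect a real obstacle here. The only point needing care is the claim of ``at most one'' zero value and conflict pair. This depends on the injectivity of the occurrence assignment, which requires $L_v\ge o_v$; that inequality holds by the definition of $L_v$.
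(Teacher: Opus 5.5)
Your proposal is correct and follows essentially the same direct-inspection argument as the paper. For (a) you list each agent's singleton values and its single prescribed conflict pair, using that the occurrence assignment is injective. For (b) you use the count $n=\sum_v L_v\le\sum_v o_v+2s\le 3t+2s$ together with the $t$ clause items.
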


\begin{proof}[Proof of Theorem~\ref{thm:conflict-hardness}]
Membership in $\np$ follows by evaluating the bundles of a proposed complete allocation using the explicit singleton values and conflict pairs. We prove correctness of the reduction and then establish the promise and approximation claims.

\paragraph{Completeness.} Let $\sigma$ satisfy $\varphi$. Match each variable cycle according to $\sigma$, and assign each clause item $y_C$ to an agent representing a literal of $C$ that is true under $\sigma$. Every such agent receives its chosen good and at most one clause item, because it represents at most one occurrence. Its chosen good is not paired with that clause item, so its value is $1$. Every other agent receives exactly one of its positive goods and also has value $1$. All items are assigned, and hence $\OPT\geq1$.

\paragraph{ Soundness.} Suppose a complete allocation $A$ gives every agent value at least $1$. For each agent $i$, let $B_i$ be its set of two positive goods. By the definition of the valuation, $v_i(A_i)\leq |A_i\cap B_i|,$ 
with equality if and only if $A_i$ contains neither an item with singleton value $-1$ for $i$ nor both items of a conflict pair of $i$. The sets $A_i\cap B_i$ are disjoint subsets of the $n$ cycle goods. Therefore $n\leq{\sum}_i v_i(A_i) \leq{\sum}_i|A_i\cap B_i| \leq n.$
Equality holds throughout. Every agent has value exactly $1$, receives exactly one of its positive goods, receives no item that it values at $-1$, and holds no complete conflict pair. Every cycle good is assigned to an agent that values it positively. Thus each variable cycle is assigned according to one of its two perfect matchings, defining a Boolean assignment $\sigma$. Since the allocation is complete, each clause item $y_C$ has an owner. It is valued at $-1$ by everyone except agents representing occurrences in $C$, so its owner is one of those occurrence agents. That agent does not hold the good paired with $y_C$. By construction, its corresponding literal is true under $\sigma$. Every clause is therefore satisfied.~

\paragraph{The promise $\OPT\in\{0,1\}$.}
For every allocation, the last two inequalities in \eqref{eq:conflict-counting} hold without any assumption on its welfare. Consequently, the sum of utilities is at most $n$, and $\OPT\leq1$.

For the reverse bound, choose any Boolean assignment, without requiring it to satisfy the formula, and match each cycle accordingly. For each nonempty clause $C$, assign $y_C$ to an agent representing any one of its occurrences. The selected recipient agents are distinct: an agent represents at most one occurrence, belonging to just one clause. Each agent therefore receives one positive good and at most one clause item of singleton value zero. Its utility is $1$ if no conflict is activated and $0$ otherwise. This is a complete allocation of nonnegative welfare, proving $\OPT\geq0$. Since utilities are integral, every constructed instance satisfies $\OPT\in\{0,1\}$. The completeness and soundness arguments show that its optimum is $1$ exactly when $\varphi$ is satisfiable, and otherwise is $0$.

\paragraph{Inapproximability.}
Suppose a polynomial-time algorithm returns an allocation of welfare at least $\OPT/\rho$ whenever $\OPT=1$, where $1\leq\rho<\infty$; the ratio may even depend on the input size. Run it on an instance of the reduction and accept precisely when its output is a valid complete allocation of strictly positive welfare. On a satisfiable instance, the promised guarantee is strictly positive, so integrality forces welfare at least $1$. On an unsatisfiable instance, no valid allocation has positive welfare, regardless of the algorithm's behavior outside its promise. This would decide \textsc{3-SAT} in polynomial time. Here polynomial time has its usual total running-time meaning; alternatively, stop the algorithm at its polynomial bound for promised inputs and treat a timeout as rejection. The result concerns finding an allocation: under the promise $\OPT=1$, merely reporting the optimum value is trivial. Similarly, an additive guarantee $\EW(A)\geq\OPT-\varepsilon$ with $\varepsilon<1$ forces positive welfare whenever $\OPT=1$, and gives the same contradiction.
\end{proof}

\begin{example}[The reduction from \textsc{3-SAT}]\label{ex:sat-reduction}
We illustrate Theorem~\ref{thm:conflict-hardness} with
\[
\varphi=(x_1\lor\neg x_2)\land(\neg x_1\lor x_2).
\]
Write $C_1,C_2$ for the clauses and $y_1,y_2$ for their items. Both variables occur twice, so $L_1=L_2=2$, and the instance has four agents and six items. Assign the occurrences as follows:
\begin{center}
\begin{tabular}{@{}lllll@{}}
\toprule
Occurrence & Agent & Left good & Right good & Conflict pair\\
\midrule
$x_1$ in $C_1$ & $a_{1,1}$ & $g_{1,1}$ & $g_{1,2}$ & $(g_{1,2},y_1)$\\
$\neg x_1$ in $C_2$ & $a_{1,2}$ & $g_{1,2}$ & $g_{1,1}$ & $(g_{1,2},y_2)$\\
$\neg x_2$ in $C_1$ & $a_{2,1}$ & $g_{2,1}$ & $g_{2,2}$ & $(g_{2,1},y_1)$\\
$x_2$ in $C_2$ & $a_{2,2}$ & $g_{2,2}$ & $g_{2,1}$ & $(g_{2,1},y_2)$\\
\bottomrule
\end{tabular}
\end{center}
The resulting singleton values are
\begin{center}
\begin{tabular}{@{}lrrrrrr@{}}
\toprule
 & $g_{1,1}$ & $g_{1,2}$ & $g_{2,1}$ & $g_{2,2}$ & $y_1$ & $y_2$\\
\midrule
$a_{1,1}$ & $1$ & $1$ & $-1$ & $-1$ & $0$ & $-1$\\
$a_{1,2}$ & $1$ & $1$ & $-1$ & $-1$ & $-1$ & $0$\\
$a_{2,1}$ & $-1$ & $-1$ & $1$ & $1$ & $0$ & $-1$\\
$a_{2,2}$ & $-1$ & $-1$ & $1$ & $1$ & $-1$ & $0$\\
\bottomrule
\end{tabular}
\end{center}
The assignment $x_1=x_2=\mathrm{true}$ satisfies $C_1$ through $x_1$ and $C_2$ through $x_2$. Let every agent take its left good, assign $y_1$ to $a_{1,1}$, and assign $y_2$ to $a_{2,2}$:
\[
\begin{aligned}
A_{a_{1,1}}&=\{g_{1,1},y_1\},&
A_{a_{1,2}}&=\{g_{1,2}\},\\
A_{a_{2,1}}&=\{g_{2,1}\},&
A_{a_{2,2}}&=\{g_{2,2},y_2\}.
\end{aligned}
\]
Agent $a_{1,1}$ does not hold $g_{1,2}$, the partner of $y_1$, and agent $a_{2,2}$ does not hold $g_{2,1}$, the partner of $y_2$. Every agent thus has value exactly $1$. By contrast, moving $y_1$ to $a_{2,1}$ would activate the conflict pair $(g_{2,1},y_1)$ and reduce this agent's value to $0$, reflecting that the literal $\neg x_2$ is false.
\end{example}

\begin{figure}[!ht]
\centering
\begin{tikzpicture}[
  every node/.style={draw, rounded corners=2pt, inner sep=2pt, font=\small},
  agent/.style={fill=violet!12, minimum width=1.55cm, minimum height=0.95cm, align=center},
  good/.style={fill=teal!12, minimum width=1.3cm, minimum height=0.6cm},
  citem/.style={fill=gray!15, minimum width=1.2cm, minimum height=0.6cm},
  e/.style={draw=gray!70, thin},
  zeroedge/.style={draw=gray!70, thin, dashed},
  m/.style={draw=orange!80!black, line width=1.6pt}
]
\node[citem] (y1) at (-1.4, 3.6) {$y_1$};
\node[citem] (y2) at ( 1.4, 3.6) {$y_2$};
\node[agent] (a11) at (-4.2, 1.9) {$a_{1,1}$\\[-1pt]{\scriptsize vs.\ $g_{1,2}$}};
\node[agent] (a12) at (-1.8, 1.9) {$a_{1,2}$\\[-1pt]{\scriptsize vs.\ $g_{1,2}$}};
\node[agent] (a21) at ( 1.8, 1.9) {$a_{2,1}$\\[-1pt]{\scriptsize vs.\ $g_{2,1}$}};
\node[agent] (a22) at ( 4.2, 1.9) {$a_{2,2}$\\[-1pt]{\scriptsize vs.\ $g_{2,1}$}};
\node[good] (g11) at (-4.2, 0) {$g_{1,1}$};
\node[good] (g12) at (-1.8, 0) {$g_{1,2}$};
\node[good] (g21) at ( 1.8, 0) {$g_{2,1}$};
\node[good] (g22) at ( 4.2, 0) {$g_{2,2}$};
\draw[m] (a11) -- (g11);
\draw[e] (a11) -- (g12);
\draw[m] (a12) -- (g12);
\draw[e] (a12) -- (g11);
\draw[m] (a21) -- (g21);
\draw[e] (a21) -- (g22);
\draw[m] (a22) -- (g22);
\draw[e] (a22) -- (g21);
\draw[m] (y1) -- (a11);
\draw[zeroedge] (y1) -- (a21);
\draw[zeroedge] (y2) -- (a12);
\draw[m] (y2) -- (a22);
\draw[gray, dashed, rounded corners=6pt] (-5.3,-0.55) rectangle (-0.7,2.65);
\draw[gray, dashed, rounded corners=6pt] ( 0.7,-0.55) rectangle ( 5.3,2.65);
\node[draw=none, font=\footnotesize] at (-3.0,-0.85) {cycle of $x_1$};
\node[draw=none, font=\footnotesize] at ( 3.0,-0.85) {cycle of $x_2$};
\end{tikzpicture}
\caption{The instance of Example~\ref{ex:sat-reduction} for $\varphi=(x_1\vee\neg x_2)\wedge(\neg x_1\vee x_2)$. Each dashed box is the cycle of a variable: vertical edges (every agent takes its left good) encode \emph{true}, diagonal edges encode \emph{false}. Solid thin edges join agents to goods they value at $1$, dashed edges join agents to clause items they value at $0$, and ``vs.\ $g$'' names the good forming a conflict pair with the agent's clause item. Thick edges show the allocation for $x_1=x_2=\mathrm{true}$, in which every agent has value $1$.}
\label{fig:sat-reduction}\label{fig:conflict-cycles}
\end{figure}
%==================================================================================

%==================================================================================
\section{Discussion and Open Problems}\label{sec:discussion}
%==================================================================================

We have located the boundary between tractable and intractable max-min allocation for valuations whose item values or marginals come from a small set. For positive bi-valued additive instances with values in $\{p,q\}$, complexity depends only on the reduced denominator $p$: the problem is polynomial-time solvable for $p\in\{1,2\}$, and NP-hard to approximate within a constant factor for $p\ge3$. %This mirrors the dichotomy known for Nash social welfare, but our algorithm for $p=2$ is considerably simpler, resting on one structural fact. For $p=2$, each agent's marginal savings alternate between two consecutive values and decompose into pair slots that a single maximum-weight matching captures exactly. 
For $p\ge3$, longer flat runs in the savings encode exact cover. The same lens resolves the conjectured tractability of $\{-2,0,c\}$-valuations for mixed manna, while our hardness and query lower bounds for $\{-2,0,c\}$- and $\{-1,0,1\}$-submodular valuations show that this tractability relies on additivity. The most notable missing piece is the complexity of $\{2,c\}$-submodular valuations with $c$ odd. %The integer-ratio case is tractable even for bi-valued submodular valuations, and our $\{-2,0,c\}$ hardness shows that negative marginals destroy tractability under submodularity.
Whether the half-integer case remains polynomial for positive submodular valuations is therefore open, and it is the case that would complete the picture. Further directions include closing the approximation gap for $p\ge3$ (between $pq/(pq-1)\le 12/11$ and our $\sqrt{3/2}$), extending the matching approach to leximin and other welfare objectives.

\paragraph{Use of generative AI.}
The authors used GenAI iteratively to refine and improve an initial draft of this paper. The authors have verified all proofs and accept full responsibility for the accuracy and correctness of the results presented in the paper.

\bibliographystyle{plain}
\bibliography{references}

\FloatBarrier
\appendix

\section*{Appendix}

\section{Missing proofs from Section~\ref{sec:bivalued}}

\begin{proof}[Proof of Lemma~\ref{lem:form}]

Define the \emph{alternating sequence}
\[
  \nu_j=\begin{cases} s & \text{if } T-q(j-1) \text{ is odd},\\ s-1 & \text{if } T-q(j-1) \text{ is even},\end{cases}\qquad \text{ for all }1\le j\le K.
\]
%Because $q$ is odd, consecutive values $T-q(j-1)$ have opposite parity, so $\nu$ alternates between $s$ and $s-1$, for instance, it starts with $\nu_1=s$ if and only if $T$ is odd.
We first prove that 
\begin{equation}\label{eq:nu_mu}
    \mu_j=\nu_j,~\text{ for }j\le K-1,~\text{ and } ~~0\le\mu_K\le\nu_K.
\end{equation}
Indeed, let $1\le j\le K-1$ and $x=T-q(j-1)$.  Since $j<K$ we have $qj<T$, so $x-q=T-qj>0$ and
\begin{align*}
  \mu_j= \lceil\frac{(T-q(j-1))^+}{2}\rceil -\lceil\frac{(T-qj)^+}{2}\rceil-1=\lceil \frac{x}{2}\rceil-\lceil\frac{x-q}{2}\rceil-1.
\end{align*}
If $x$ is even, $x-q$ is odd, so 
\[
\mu_j=\frac{x}{2}-\frac{x-q+1}{2}-1=\frac{q-1}{2}-1=s-1.
\]
If $x$ is odd, $x-q$ is even, so 
\[
\mu_j=\frac{x+1}{2}-\frac{x-q}{2}-1=\frac{q+1}{2}-1=s.
\]
In both cases $\mu_j=\nu_j$.

Let $y=T-q(K-1)$. By the choice of $K$, $1\le y\le q$. We have $g_T(K-1)=K-1+\lceil y/2\rceil$ and $g_T(K)=K$, so $\mu_K=\lceil y/2\rceil-1\ge0$. If $y$ is odd then $\nu_K=s$ and $\mu_K=\frac{y-1}{2}\le\frac{q-1}{2}=s$. If $y$ is even then $y\le q-1=2s$, $\nu_K=s-1$ and $\mu_K=\frac y2-1\le s-1$.
%\begin{claim}\label{lem:shape}
%\leavevmode
%For $j\le K-1$, $\mu_j=\nu_j$, and $0\le\mu_K\le\nu_K$.
%\end{claim}
%In words, the marginal savings follow the alternating template $\nu$ exactly up to the $(K-1)$-th high-valued good, and the $K$-th such good saves at least nothing and at most its template value. Beyond $K$, each additional high-valued good only increases the number of goods needed, since $K$ high-valued goods alone already reach $T$.

We now prove the claim in the lemma.

(i) If $T$ is odd, then $\mu_1=\nu_1=s$ by (\ref{eq:nu_mu}) and definition of $\nu_j$. Since a pair slot must begin with $s-1$, this term becomes a single slot of weight $s$. Similar, if $T$ is even, then $\mu_1=\nu_1=s-1$, so there is no lead single slot. 

(ii) Let $j_0\in\{1,2\}$ be the first remaining position. By (\ref{eq:nu_mu}), $\mu_{j_0},\mu_{j_0+1},\dots$ follows $\nu$, which starts with $s-1$ at $j_0$ and alternates. Group the positions $j_0,j_0+1,\dots,K$ into consecutive blocks $(s-1,s)$ of $\nu$, the last block possibly consisting of a single position.
 
Every block lying entirely within positions $1,\dots,K-1$ equals $(s-1,s)$ in $\mu$ and becomes a pair slot. Consider the block containing $K$:
\begin{itemize}%[nosep]
  \item If $K$ is the first position of its block, then $\nu_K=s-1$, and $\mu_K\le s-1$ by (\ref{eq:nu_mu}). It becomes one tail slot of weight at most $s-1$.
  \item If $K$ is the second position of its block, then $\mu_{K-1}=s-1$ and $\nu_K=s$. If $\mu_K=s$ the block becomes a pair slot. Otherwise $\mu_K\le s-1$, and the block becomes two tail slots of weights $s-1$ and $\mu_K\le s-1$, in this order.
\end{itemize}
This proves (ii). The last claim holds by construction: the parse-order filling with $k$ goods collects exactly $\mu_1,\dots,\mu_k$, because a pair slot holding one good contributes $s-1=\mu_j$ for its first position $j$, and holding two contributes $(s-1)+s$.
\end{proof}

\begin{proof}[Proof of Lemma~\ref{lem:slots}]
    If $K\le1$ there is at most one slot and the claim is immediate. Let $K\ge2$ and use the form of Lemma~\ref{lem:form}. Let $\lambda_{\max}\in\{0,1\}$ indicate whether a lead slot exists, and let $d_t=(s-1)-w_t\ge0$ be the \emph{deficit} of tail slot $t$ with weight $w_t$.
 
    Consider any filling with $k$ goods. Let $\lambda\in\{0,1\}$ indicate whether the lead slot is filled, let $c$ be the number of pair slots holding two goods, and let $R$ be the set of filled tail slots. Since every good contributes $s-1$ except that the lead good contributes $s$, a full pair earns an extra $1$, and a tail good contributes $s-1-d_t$, the value of the filling is
\begin{equation}\label{eq:value}
    (s-1)\,k\;+\;\lambda\;+\;c\;-\;{\sum}_{t\in R}d_t .
\end{equation}
    The {\em parse-order filling} with $k \le K$ goods places one good in the slot position corresponding to each of $\mu_1, \dots, \mu_k$, and leaves all remaining positions empty. Moreover, by the decomposition $\cS$, one can see that the parse-order filling with $k$ goods has value $\mu_1+\cdots+\mu_k=\sigma(k)$ for every $0\le k\le K$. Therefore, it remains to show that~\eqref{eq:value} is at most the value of the parse-order filling. We distinguish two cases.
 
    \emph{Case $k\le\lambda_{\max}+2P$.} The parse-order filling uses no tail slot, fills the lead slot if $\lambda_{\max}=1$ and $k\ge1$, and then completes $c^*=\lfloor(k-\lambda^*)/2\rfloor$ pairs, where $\lambda^*=\min(\lambda_{\max},k)$; note $c^*\le P$. Its value is $(s-1)k+\lambda^*+c^*$. For an arbitrary filling, $\lambda\le\lambda^*$ (since $\lambda\le\lambda_{\max}$ and $\lambda\le k$) and $2c\le k-\lambda$, so
    \[
    \lambda+c\;\le\;\lambda+\lfloor\tfrac{k-\lambda}{2}\rfloor\;\le\;\lambda^*+\lfloor\tfrac{k-\lambda^*}{2}\rfloor\;=\;\lambda^*+c^*,
    \]
    where the middle inequality holds because $\lambda\mapsto\lambda+\lfloor(k-\lambda)/2\rfloor$ is nondecreasing. As $\sum_{t\in R}d_t\ge0$, the claim follows.
 
    \emph{Case $k>\lambda_{\max}+2P$.} The lead and pair slots hold at most $\lambda_{\max}+2P$ goods, so every filling uses at least $\tau^*=k-\lambda_{\max}-2P$ tail slots, i.e.\ $|R|\ge\tau^*$. Hence
    \[
     \lambda+c-{\sum}_{t\in R}d_t\;\le\;\lambda_{\max}+P-\bigl(\text{sum of the }\tau^*\text{ smallest deficits}\bigr).
    \]
    The parse-order filling fills the lead slot, all $P$ pairs, and the first $\tau^*$ tail slots, which by Lemma~\ref{lem:form}(c) have the largest weights and hence the smallest deficits. It therefore attains this bound.~
\end{proof}

%\begin{proof}[Proof of Claim~\ref{claim:pq}]
%   We have $q(p-k)=p\ell$, so $p$ divides $q(p-k)$. Since $\gcd(p,q)=1$, $p$ divides $p-k\in\{0,\dots,p\}$, hence $k\in\{0,p\}$.
%\end{proof}

\section{Missing proofs from Section~\ref{sec:submod}}

\begin{proof}[Proof of Lemma~\ref{lem:matroid-val}]
The rank function is submodular and $|S|$ is modular. A rank
marginal is either zero or one, so a marginal of $v$ is respectively
$-2$ or $c$. Also $v(\varnothing)=0$, and the displayed identity
follows by rearranging terms.
\end{proof}

\begin{proof}[Proof of Lemma~\ref{lem:sparse-paving}]
Any two distinct members of $\mathcal F$ differ in at least one whole
pair, so they intersect in at most $r-2$ elements. Such a family of
$r$-subsets is the family of circuit-hyperplanes of a sparse paving
matroid, obtained from the uniform matroid $U_{r,|E|}$ by declaring
these sets dependent; see \citet{Oxl11}. We give the basis argument
to verify the construction directly.

Let $\mathcal B$ consist of all $r$-subsets of $E$ outside
$\mathcal F$. This family is nonempty: the $r$-subsets of any
$(r+1)$-set intersect pairwise in $r-1$ elements, so at most one
belongs to $\mathcal F$. To check basis exchange, take
$B_1,B_2\in\mathcal B$ and $x\in B_1\setminus B_2$. If
$|B_2\setminus B_1|=1$, exchanging $x$ for its unique element gives
$B_2$. Otherwise the sets
$(B_1\setminus\{x\})\cup\{y\}$ for $y\in B_2\setminus B_1$
intersect pairwise in $r-1$ elements. At most one is forbidden, so
at least one is in $\mathcal B$. Thus $\mathcal B$ is the basis
family of a rank-$r$ matroid.

Every $(r-1)$-set extends to a member of $\mathcal B$, since its
distinct $r$-element extensions intersect in $r-1$ elements and
there are at least two such extensions. Hence every smaller set is
independent. An $r$-set has rank $r-1$ exactly when it belongs to
$\mathcal F$. Finally, every set of more than $r$ elements contains
an $(r+1)$-set and therefore a member of $\mathcal B$. These facts
give the displayed rank function.
\end{proof}

\begin{proof}[Proof of Claim~\ref{clm:submod-size}]
	There are $k+1$ agents and $m=2k+kd+d_0+H
	=2k^2+(c+4)k-(c+2)s$ 
	items. The pair-agent valuations are additive. The collector's rank
	requires only set counting and one membership test in
	$\mathcal F_{\mathcal G}$, which is polynomial-time computable as explained
	above. The graph $\mathcal G$, the item classes, and these evaluation rules
	therefore provide a polynomial-size explicit representation.
\end{proof}

\section{Missing proofs from Section~\ref{sec:m101}}
\label{app:g-slots}
\begin{proof}[Proof of Lemma~\ref{lem:conflict-pair-submod}]
Fix a bundle $S$ and an item $o\notin S$, and let $\Delta(o\mid S)=v_i(S\cup\{o\})-v_i(S)$.
Adding $o$ changes the additive part by $v_i(o)$. In the second sum, only terms of pairs containing $o$ can change, and since the pairs in $\Pi_i$ are disjoint, $o$ lies in at most one pair.
If $o$ lies in no pair, then $\Delta(o\mid S)=v_i(o)$.
If $o=x$ for some $(x,y)\in\Pi_i$, the term $\mathbf 1[x\in S]\,\mathbf 1[y\in S]$ changes from $0$ to $\mathbf 1[y\in S]$, so $\Delta(x\mid S)=1-\mathbf 1[y\in S]$; symmetrically, if $o=y$, then $\Delta(y\mid S)=-\mathbf 1[x\in S]$.
In all cases
\begin{equation}\label{eq:cp-marginal}
\Delta(o\mid S) \;=\; v_i(o) - \mathbf 1\bigl[\,o \text{ is paired and } \bar o\in S\,\bigr].
\end{equation}
 
\emph{Submodularity.} Let $S\subseteq T$ and $o\notin T$. If $\bar o\in S$ then $\bar o\in T$, so the indicator in~\eqref{eq:cp-marginal} for $T$ is at least the one for $S$, and hence $\Delta(o\mid S)\geq\Delta(o\mid T)$.
 
\emph{Range of the marginal values.} By~\eqref{eq:cp-marginal}, $\Delta(o\mid S)=v_i(o)\in\{-1,0,1\}$ if $o$ is unpaired, $\Delta(x\mid S)=1-\mathbf 1[y\in S]\in\{0,1\}$, and $\Delta(y\mid S)=-\mathbf 1[x\in S]\in\{-1,0\}$.
\end{proof}

\begin{example}[A conflict-pair valuation]\label{ex:conflict-pair}
Consider an agent and five items $a,b,c,d,e$ with $v(a)=v(c)=1$, $v(b)=v(d)=0$, $v(e)=-1$, and a single conflict pair $\Pi=\{(a,b)\}$. Some values are presented in the following table.
\begin{center}
\small
\begin{tabular}{lcccccc}
\toprule
$S$ & $\{a\}$ & $\{b\}$ & $\{a,b\}$ & $\{c,d\}$ & $\{a,b,c,d\}$ & $\{a,b,e\}$ \\
\midrule
additive part  & $1$ & $0$ & $1$ & $1$ & $2$ & $0$ \\
\# conflicts & $0$ & $0$ & $1$ & $0$ & $1$ & $1$ \\
$v(S)$ & $1$ & $0$ & $0$ & $1$ & $1$ & $-1$ \\
\bottomrule
\end{tabular}
\end{center}
%If $(c,d)$ were also a conflict pair, then $v(\{a,b,c,d\}) = 2-2 = 0$.
 
The valuation $v$ is submodular: the marginal value of $b$ is $0$ at bundles without $a$ and $-1$ at bundles containing $a$, the marginal value of $a$ is $1$ at bundles without $b$ and $0$ at bundles containing $b$, and all other items contribute additively.
\end{example}

\begin{proof}[Proof of Lemma~\ref{lem:conflict-construction}]
For (a), fix an agent $a_{v,k}$. Set
\[
v_{a_{v,k}}(g_{v,k})=v_{a_{v,k}}(g_{v,k+1})=1,
\]
and set $v_{a_{v,k}}(y_C)=0$ if an occurrence in $C$ is assigned to this agent. Give all remaining items singleton value $-1$. Let $\Pi_{a_{v,k}}$ consist of the single prescribed pair, or be empty if no occurrence is assigned. The resulting valuation has exactly the form in Definition~\ref{def:conflict-pair}. In its conflict pair, the first item has singleton value $1$ and the second has singleton value $0$. As the agent has at most one pair, the required disjointness is automatic. Agents without an assigned occurrence have additive valuations. Even if a variable occurs twice in the same clause, the two occurrences are assigned to different agents. Lemma~\ref{lem:conflict-pair-submod} therefore applies to every agent.

For (b), let
\[
\ell=\sum_{v=1}^{N}o_v\leq 3M
\]
be the total number of literal occurrences. There are $n=\sum_v L_v$ agents, $n$ cycle goods, and $M$ clause items. Since $\max\{2,o_v\}\leq o_v+2$, we obtain
\[
n\leq\ell+2N\leq3M+2N,
\qquad m=n+M\leq4M+2N.
\]
If every variable occurs at least twice, then $n=\ell\leq3M$ and $m\leq4M$. Each valuation is specified by its two positive goods, at most one zero-valued clause item, and at most one conflict pair, with default singleton value $-1$. Thus the representation and its construction are polynomial in the source size; writing the full singleton-value matrix is also polynomial.
\end{proof}

%\begin{proof}[Proof of Claim~\ref{lem:shape}]
%(i) If $k\ge K$ then $qk\ge T$, so $\Phi_T(k)=k\ge K=\Phi_T(K)$.
%\end{proof}

\end{document}